\documentclass[11pt,a4paper]{article}
\usepackage[T1]{fontenc}
\usepackage[utf8]{inputenc}
\usepackage{lmodern,microtype,amsmath,amssymb,amsthm,mathtools}
\usepackage[margin=27mm]{geometry}
\usepackage{xcolor,enumitem,booktabs,graphicx}
\usepackage{placeins}
\graphicspath{{figures/}}
\usepackage[colorlinks=true,linkcolor=blue!45!black,citecolor=blue!45!black,urlcolor=blue!45!black]{hyperref}
\hypersetup{pdftitle={The planar Neumann--Poincare inverse spectral problem},pdfauthor={Habib Ammari and Matthew J. Colbrook}}
\newtheorem{theorem}{Theorem}[section]
\newtheorem{lemma}[theorem]{Lemma}
\newtheorem{proposition}[theorem]{Proposition}
\newtheorem{corollary}[theorem]{Corollary}
\theoremstyle{definition}
\newtheorem{definition}[theorem]{Definition}
\newtheorem{remark}[theorem]{Remark}
\newcommand{\R}{\mathbb R}
\newcommand{\C}{\mathbb C}
\newcommand{\N}{\mathbb N}
\newcommand{\Tr}{\operatorname{Tr}}
\newcommand{\diag}{\operatorname{diag}}
\newcommand{\HS}{\mathrm{HS}}
\newcommand{\op}{\mathrm{op}}
\newcommand{\dd}{\,\mathrm d}
\newcommand{\norm}[1]{\left\lVert#1\right\rVert}

\newcommand{\avec}{\mathbf a}
\newcommand{\bvec}{\mathbf b}
\newcommand{\doi}[1]{\href{https://doi.org/#1}{doi:\nolinkurl{#1}}}
\setlist{itemsep=3pt,topsep=5pt}
\title{On the planar Neumann--Poincar\'e inverse spectral problem}
\author{Habib Ammari\footnote{
  ETH Z\"urich, Department of Mathematics, Rämistrasse 101, 8092 Z\"urich, Switzerland. Email: habib.ammari@math.ethz.ch.} \and  Matthew J. Colbrook\footnote{Department of Applied Mathematics and Theoretical Physics, University of Cambridge, UK. Email: m.colbrook@damtp.cam.ac.uk.}}
\date{}

\newcommand{\revisionfloat}{\color{black}\hypersetup{linkcolor=blue!45!black,citecolor=blue!45!black,urlcolor=blue!45!black}}
\begin{document}
\maketitle

\begin{abstract}
Does the Neumann--Poincar\'e spectrum determine a planar domain up to an extended M\"obius transformation? We construct a pair showing that the answer is no, even for analytic, strictly convex, algebraic boundaries. Their operators are similar. We then prove a positive result for prescribed finite-dimensional families of normalised exterior conformal maps. For every $m\ge1$, the first $m$ positive eigenvalues determine $m$ real Laurent coefficients on a small truncated convex cone, provided the higher coefficients are a prescribed analytic remainder of quadratic order. Recovery is Lipschitz stable, uniformly down to the disk. The proof combines the Grunsky representation with a nondegeneracy argument for weighted Hankel matrices and differentiated perturbation estimates. Explicit two- and three-parameter examples, single-mode families with repeated leading eigenvalues, and M\"obius equivalence classes illustrate the result. Direct proofs for complete polarization data show what spectral weights add to the eigenvalues.
\end{abstract}

\noindent\textbf{Keywords.}
Neumann--Poincar\'e operator; inverse spectral problem;
isospectral domains; shape reconstruction; Lipschitz stability;
Grunsky coefficients; M\"obius invariance;
generalised polarization tensors.

\medskip
\noindent\textbf{2020 Mathematics Subject Classification.}
Primary 35R30; Secondary 45C05, 31A10, 30C35.

\section{Introduction}\label{sec:motivation}
The spectrum of the planar Neumann--Poincar\'e (NP) operator is unchanged by extended M\"obius transformations. Do these transformations account for all domains with the same spectrum? We show that they do not, even for analytic, strictly convex domains arbitrarily close to the disk. We also show that finitely many eigenvalues can determine a domain stably within a prescribed finite-dimensional family.

Let $D\subset\R^2$ be a bounded smooth Jordan domain, with outward unit normal $\nu$. Our normalisation is
\begin{equation}\label{eq:NP}
 K_D^*[\varphi](x)=\frac1{2\pi}\int_{\partial D}
 \frac{(x-y)\cdot\nu_x}{|x-y|^2}\varphi(y)\dd s_y,
 \qquad x\in\partial D.
\end{equation}
The single-layer potential has kernel $(2\pi)^{-1}\log|x-y|$. Its exterior and interior normal derivatives are $(\frac12I+K_D^*)\varphi$ and $(-\frac12I+K_D^*)\varphi$. Throughout, by the NP spectrum we mean the spectrum on the mean-zero space $H^{-1/2}_0(\partial D)$, counted with multiplicities. Thus we exclude the eigenvalue $1/2$ of the full operator. In the single-layer energy inner product the reduced operator is compact and self-adjoint, with nonzero eigenvalues in pairs $\mu,-\mu$; see \cite{KPS}. We denote the positive eigenvalues in nonincreasing order by $\lambda_j(D)$, appending zeros if there are only finitely many.

Schiffer connected this spectrum with conformal mapping through his work on Fredholm eigenvalues \cite{Schiffer1957,Schiffer}. The relation between the classical Fredholm parameter and an NP eigenvalue depends on the normalisation of the integral equation. NP eigenvalues also describe quasistatic resonances of dielectric inclusions; see \cite{AK,Photonics}. Recovery and design from partial spectral data were studied in \cite{ACLZ}, and reconstruction from plasmonic resonances for particular families in \cite{APRYZ}. Shape derivatives and real-analytic dependence of symmetric functions of isolated nonzero NP eigenvalue clusters are established in \cite{DRLLM}.

The counterexample is a pair of degree-seven polynomial lemniscates. A factorisation of the difference of their defining polynomials gives a correspondence between the boundaries. Its invertible incidence matrix, combined with explicit smooth multipliers, intertwines the NP operators. An algebraic argument then rules out holomorphic and antiholomorphic M\"obius equivalence. The boundaries are analytic, strictly convex and real algebraic, and approach the circle after rescaling.

For the positive results we fix capacity and translation by taking exterior conformal maps with leading coefficient one and constant coefficient zero. The first $m$ real Laurent coefficients are free parameters. The higher coefficients are prescribed analytic functions vanishing to second order at the disk. For every finite $m$, a convex cone of parameters exists on which the first $m$ positive NP eigenvalues determine the coefficients with uniform Lipschitz estimates up to the vertex. The coordinate system is fixed, and uniqueness is asserted within this prescribed family.

The proof begins with the Grunsky operator. Its linear part is a finite weighted Hankel matrix, and we find directions in which the characteristic coefficients have an invertible differential and the singular values are distinct and nonzero. The nonlinear remainder is quadratic in Hilbert--Schmidt norm, with a derivative of first order. These estimates control the derivatives of the leading eigenvalues even though the spectral gaps shrink linearly at the disk. On a narrow convex cone the spectral differential stays close to one fixed invertible matrix. Integration along line segments then gives injectivity and the inverse Lipschitz estimate throughout the cone.

The resulting estimates compare parameter distance, weighted Laurent coefficient norms, Hausdorff distance, and distances between complete ordered spectra. Forward estimates hold on a small parameter ball; inverse estimates require the nondegenerate cone. For two coefficients, an explicit inverse of the leading spectral map gives a full sector. For three, exact root bounds give a cone about $(1,1,1)$. We also treat ellipses, single-mode families and M\"obius-invariant classes.

Why, then, can polarization measurements determine a domain? They contain more than eigenvalues. For a harmonic polynomial $H$ and a polynomial $G$, consider
\begin{equation}\label{eq:GPTmotivation}
 M_{H,G}(\lambda;D)=\int_{\partial D}G
 (\lambda I-K_D^*)^{-1}[\partial_\nu H]\dd s,
\end{equation}
where the inverse acts on $H^{-1/2}_0(\partial D)$ and $\lambda$ lies outside the reduced spectrum. A scalar conductivity contrast $k\ne1$ gives $\lambda=(k+1)/(2(k-1))$. The expansion of \eqref{eq:GPTmotivation} contains eigenvalues together with geometric excitation--observation weights. For positive conductivities, Ammari and Kang \cite[Theorem~4.9]{AK} recover the inclusion and contrast from all generalised polarization tensors (GPTs). Their proof recovers a Dirichlet-to-Neumann map and uses inverse conductivity uniqueness \cite{isakov}. Uniqueness from harmonic combinations alone is stated in \cite[Theorem~3.1]{APSTident}; identification and reconstruction of algebraic boundaries from finitely many GPTs are studied in \cite{APSTident,APSTrecon}. We give two direct proofs for complete data: weighted spectral data determine all mixed area moments, and full uncontracted GPTs at one fixed contrast determine an analytic Jordan domain and its contrast. The second proof uses all polynomial observations, a richer data set than harmonic observations alone. Spectral uniqueness and polarization uniqueness therefore ask different questions.

Section~\ref{sec:counter} constructs the counterexample. Section~\ref{sec:class} proves finite-dimensional recovery and stability, and Section~\ref{sec:examples} works out the two- and three-parameter cases. Section~\ref{sec:conclusion} discusses questions left open. Appendix~\ref{app:moments} studies the two-moment map and its critical curve; Appendix~\ref{app:direct} gives the polarization uniqueness proofs; Appendix~\ref{sec:enlarge} develops further uniqueness classes.

\section{An isospectral pair of polynomial lemniscates}\label{sec:counter}

A degree-seven polynomial correspondence supplies the counterexample. The proof has three parts: establish the geometry of the lemniscates, construct a similarity of their NP operators, and rule out extended M\"obius equivalence. Polynomial difference factorisations and correspondences are treated in \cite{CNC,Fried}; transplantation for planar Dirichlet spectra is developed in \cite{BCDS}.

An extended Möbius transformation is a Möbius map, possibly followed by complex conjugation. We require its pole to lie off the boundary and take the transformed domain to be the bounded component enclosed by the image boundary.

Let
\[
 d=i\sqrt7,\qquad \alpha=-\frac{1+d}{2},\qquad
 \beta=\frac{-5+d}{14},
\]
and form
\begin{equation}\label{eq:PQ}
 P(z)=z^7+\alpha z^4+\beta z+i,
 \qquad Q(z)=z^7+\bar\alpha z^4+\bar\beta z+i.
\end{equation}
For $R\ge4096$, take $D_P=\{|P|<R\}$ and $D_Q=\{|Q|<R\}$.

\begin{theorem}[Isospectral polynomial lemniscates]\label{thm:counter}
The domains $D_P,D_Q$ have analytic, strictly convex Jordan boundaries. Their full NP operators are boundedly similar on $L^2$ and on \(H^s(\partial D_P)\) and \(H^s(\partial D_Q)\), for every \(s\in\mathbb R\). No holomorphic or antiholomorphic M\"obius transformation relates their boundaries. Under the common rescaling $z\mapsto R^{-1/7}z$, both boundaries converge to the unit circle in every fixed $C^k$ norm as $R\to\infty$.
\end{theorem}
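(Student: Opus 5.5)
The plan has three parts: geometry, the similarity, and non-equivalence. The starting point is the polynomial identity. I expect the difference $P(x)-Q(y)$ to factor as $A(x,y)B(x,y)$, with $A$ of bidegree $(1,1)$-type leading part and $B$ of degree six in each variable. This is the Feit/Cassou-Noguès--Couveignes degree-seven correspondence attached to $\mathrm{PSL}_3(\mathbb F_2)$ acting on points versus lines of the Fano plane, and the values $d=i\sqrt7$, $\alpha$, $\beta$ are chosen to produce it. The curve $\{P(x)=Q(y)\}$ then gives, for each $x$ with $|P(x)|=R$, seven preimages $y$ with $Q(y)=P(x)$, all lying on $\partial D_Q$.

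\textbf{Geometry.} For $R\ge4096$ I will write $P(z)=z^7(1+\epsilon(z))$ with $|\epsilon|\lesssim R^{-3/7}$ on $|z|\sim R^{1/7}$. Inverting, $\partial D_P$ is the image of $|w|=R$ under a branch $z=w^{1/7}(1+O(R^{-3/7}))$, and Cauchy estimates give analyticity. After rescaling by $R^{-1/7}$ the boundary converges to the unit circle in every $C^k$, which yields strict convexity, since the curvature tends to $1$. Since $P'\ne0$ on the level set, the boundary is an analytic Jordan curve. The explicit threshold $4096=8^4$ comes from crude bounds $|\alpha|,|\beta|\le2$. The same argument applies to $Q$.

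\textbf{Similarity.} I parametrize both boundaries by $\theta\in\mathbb T$ through $P(z)=Re^{i\theta}$; this $P$ is $7$-to-$1$ from $\partial D_P$ to the circle. On each boundary, I decompose functions by the seven branches over the circle. The NP kernel $\frac{(x-y)\cdot\nu_x}{|x-y|^2}\,ds_y=\operatorname{Im}\!\big(\frac{dy}{y-x}\big)/\!\cdots$ can be written via $\operatorname{Re}\frac{\partial}{\partial}\log(x-y)$. Summing over fibres, I will use the factorisation to relate $\sum_{y\in P^{-1}(w)}\log|x-y|$-type symmetric functions across the two curves. The Gassmann/Sunada transplantation principle in \cite{BCDS} then says that the permutation representations of $G=\mathrm{PSL}_3(\mathbb F_2)$ on points and lines are isomorphic but not conjugate. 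Concretely, the incidence matrix $T$ of the Fano plane satisfies $T T^{\top}=2I+J$, so $T$ is invertible, and it intertwines these representations. I define
\[
\mathcal T\varphi(x)=m(x)\sum_{y:\,Q(y)=P(x)} c_{x,y}\,n(y)\varphi(y),
\]
where $c_{x,y}$ is $0$ or $1$ according to which factor ($A$ or $B$) of $P(x)-Q(y)$ vanishes, and $m,n$ are smooth multipliers built from $|P'|,|Q'|$ that account for the Jacobians of the densities $\varphi\,ds$. I will then check $K_{D_P}^*\mathcal T=\mathcal T K_{D_Q}^*$. This reduces to a monodromy-equivariance statement for the kernel after pushing forward to the circle. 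Boundedness of $\mathcal T$ and $\mathcal T^{-1}$ on $H^s$ follows because the branches are analytic diffeomorphisms and the multipliers are smooth and nonvanishing.

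\textbf{Main obstacle.} I expect the intertwining identity to be the hardest step. The NP kernel is not a function of $P(x),Q(y)$ alone, so I must show that summing over incidence classes converts a kernel on $\partial D_Q$ into one on $\partial D_P$. I will first try to express $K^*$ through the Cauchy kernel $\frac{dy}{y-x}$ and use logarithmic derivatives of the factorisation, $\sum_{y\sim x}\frac{Q'(y)}{\ldots}$, in the spirit of transplantation.

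\textbf{Non-equivalence.} A Möbius map $M$ with $M(\partial D_P)=\partial D_Q$ would, by algebraicity, map the real algebraic curve $|P|^2=R^2$ onto $|Q|^2=R^2$. Comparing the behaviour at infinity forces $M$ to be affine, and then matching the monomial structure of the polynomials forces $M(z)=\zeta z$ with $\zeta^7=1$, possibly composed with conjugation. Finally, I will check that no choice of $\zeta$ maps $\alpha,\beta$ to $\bar\alpha,\bar\beta$ (respectively to $\alpha,\beta$ in the conjugate case), and that the constant $i$ is compatible, using $d=i\sqrt7\notin\mathbb Q(\zeta_7)$ relations.
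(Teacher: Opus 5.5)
Your three-part architecture matches the paper's: geometry, a Fano-incidence transplantation with multipliers built from $|P'|$ and $|Q'|$, and an algebraic non-equivalence argument. The decisive step, however, is left open, and the route you sketch would not close it. Monodromy equivariance, or the Gassmann property of the two actions of $\mathrm{PSL}_3(\mathbb F_2)$, only ensures that the incidence matrix is well defined over the whole circle. As you note yourself, the NP kernel depends on the points $x,x'$ and not only on their target values, so it is not pulled back from the base, and the group theory says nothing about it. What is needed is a pointwise kernel identity for each pair of target values, and it comes from the cubic factor $H$ itself. Sums over full fibres are trivially equal on both curves and only account for the $J_7$ part, so the relevant sums run over incidence classes. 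Write $H(X,y)=\prod_{x\sim y}(X-x)$ and $H(x',Y)=-\prod_{y'\sim x'}(Y-y')$. Differentiate $\log H(x',y)$ in $y$ in both forms, using $\dd x_j/\dd y=Q'(y)/P'(x_j)$ from $P(x_j(y))=Q(y)$:
\[
 \sum_{H(x,y)=0}\frac{1}{P'(x)(x-x')}=\frac{1}{Q'(y)}\sum_{H(x',y')=0}\frac{1}{y-y'}.
\]
On $|P|=R$ one has $\nu_P/|P'|=P/(RP')$. After conjugation by $|P'|$ and the substitution $t=P(x)$, the NP kernel therefore becomes $\frac{1}{2\pi R}\Re\frac{t}{P'(x)(x-x')}$. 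Multiplying the identity by $t$ and taking real parts gives $K^*_{D_Q}V=VK^*_{D_P}$ for $V=M_QNM_P^{-1}$, first for distinct target values and then by continuity of the kernels. You must also prove, not merely assert, that over every point of $|t|=R$ the incidence is the Fano incidence with three distinct partners. The paper labels the roots near infinity and uses $H_3(X,1)=\prod_{q\in\{3,5,6\}}(X-\zeta^q)$, a difference set modulo seven. It then continues along paths avoiding critical values, where $P'=H_xG$ and $-Q'=H_yG$ prevent incidences from appearing, disappearing or colliding.

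Several other steps are wrong or unsupported.

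\emph{Factor degrees.} As written, your expected factorisation (degree one in each variable times degree six) contradicts the rest of your plan. Such a factor makes $x$ a M\"obius, hence affine, function of $y$ with $P(x)=Q(y)$. That is an affine equivalence with a permutation incidence, contradicting both the Fano matrix you use and the non-equivalence you want. The actual factors have degrees three and four.

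\emph{Geometry.} $C^k$ convergence after rescaling gives strict convexity only for large $R$, not for every $R\ge4096$. You need explicit bounds, such as $\frac{\dd}{\dd t}\arg z'(t)=\Re\bigl(1-pp''/(p')^2\bigr)>0$ with $pp''/(p')^2$ quantitatively close to $6/7$ for $|z|\ge3$. Also, $P'\ne0$ on the level set does not show that the level set is a single Jordan curve. The paper uses $|p|<4096$ on $|z|\le3$ and $\Re(zp'/p)>0$ beyond.

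\emph{Reduction to affine maps.} ``Behaviour at infinity'' does not force $M$ to be affine, since the curves are compact and the pole of $M$ may lie anywhere off them. The paper proves irreducibility of $P(z)P^\#(w)-R^2$ by Eisenstein after reversal. It deduces that this polynomial divides $C(z)C^\#(w)-R^2D(z)D^\#(w)$, where $D$ is the seventh power of the denominator of $M$ and $C=D\cdot(Q\circ M)$. From this it reads off $\operatorname{span}\{C,D\}=\operatorname{span}\{P,1\}$, which excludes a pole because neither $P$ nor $P^\#$ is a constant plus a pure seventh power.

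\emph{Final coefficient check.} $i\sqrt7$ does lie in $\mathbb Q(\zeta_7)$: the Gauss sum $\zeta+\zeta^2+\zeta^4=(-1+d)/2$ is exactly what the construction uses. Do this check directly, for instance via $\beta/\alpha^2\ne\bar\beta/\bar\alpha^2$.
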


Figure~\ref{fig:counter} shows the pair at the threshold.
\begin{figure}[htbp]
\revisionfloat
\centering
\includegraphics[width=\textwidth]{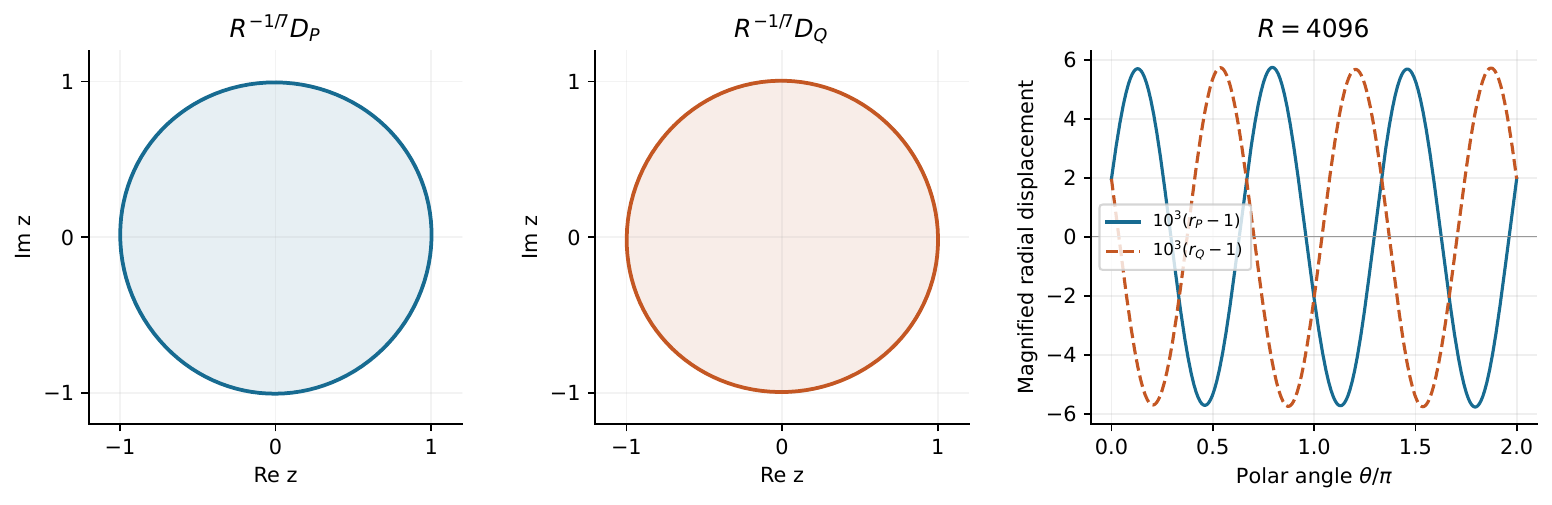}
\caption{The counterexample at $R=4096$, rescaled by $R^{-1/7}$. Left and middle: the domains and the unit circle (dashed grey). Right: $10^3(r_P(\theta)-1)$ and $10^3(r_Q(\theta)-1)$ where the normalised polar radii are computed from the polynomial equations.}\label{fig:counter}
\end{figure}

\begin{proof}
\emph{1. Boundary geometry.}
Let $p(z)=z^7+Az^4+Bz+i$ denote either polynomial in \eqref{eq:PQ}. Since $|A|<3/2$ and $|B|<1/2$, for $|z|\le3$,
\[
 |p(z)|\le3^7+\tfrac32 3^4+\tfrac12 3+1=2311<4096.
\]
For $|z|\ge3$, write
\[
 p=z^7(1+e),\quad p'=7z^6(1+f),\quad p''=42z^5(1+g),
\]
with
\[
 e=Az^{-3}+Bz^{-6}+iz^{-7},\quad
 f=\tfrac47Az^{-3}+\tfrac17Bz^{-6},\quad g=\tfrac27Az^{-3}.
\]
Then
\[
 |e|<\tfrac3{50},\qquad |f|<\tfrac1{30},\qquad |g|<\tfrac1{60}.
\]
Thus $p,p'$ have no zeros there, and
$\Re(zp'/p)>0$. Along each ray, $|p|$ increases strictly beyond radius three, from a value below $R$ to infinity. The level set is therefore a single analytic radial Jordan curve.

To check convexity, parametrise the curve locally by $p(z(t))=Re^{it}$. Its tangent angle satisfies
\[
 \frac{\dd}{\dd t}\arg z'(t)=\Re\left(1-\frac{pp''}{(p')^2}\right).
\]
The estimates above give
\[
 \left|\frac{(1+e)(1+g)}{(1+f)^2}-1\right|
 <\frac{1309}{8410}<\frac16,
\]
so
\[
 \frac{\dd}{\dd t}\arg z'(t)>
 \frac17-\frac67\frac{1309}{8410}
 =\frac{278}{29435}>0.
\]
The curvature is positive. The parametrisation is local because one circuit of the boundary maps to seven circuits of the target circle. Every target value $\tau$ with $|\tau|\ge4096$ is regular for both polynomials.

Rescale by $z=R^{1/7}u$, so that the boundary equation for $P$ becomes
\[
 |u^7+\alpha R^{-3/7}u^4+\beta R^{-6/7}u+iR^{-1}|=1,
\]
and the equation for $Q$ uses the conjugates of $\alpha$ and $\beta$. For the limiting equation $|u^7|=1$, the radial derivative on $|u|=1$ is nonzero. The implicit function theorem gives convergence of the polar radii to one in every fixed $C^k$ norm.

\medskip\noindent\emph{2. Similarity of the NP operators.}
The construction rests on the exact factorisation
\begin{equation}\label{eq:factor}
 P(x)-Q(y)=H(x,y)G(x,y),
\end{equation}
where
\begin{align*}
 H(x,y)&=x^3+\frac{1+d}{2}x^2y+\frac{-1+d}{2}xy^2-y^3-\frac d7,\\
 G(x,y)&=x^4-\frac{1+d}{2}x^3y-x^2y^2+\frac{-1+d}{2}xy^3+y^4
 -\frac{7+5d}{14}x+\frac{-7+5d}{14}y.
\end{align*}
Over a common regular value $t$, a root $x$ of $P(x)=t$ is joined to a root $y$ of $Q(y)=t$ if $H(x,y)=0$. The $7\times7$ incidence matrix $N$, with rows indexed by $y$, has three ones in each row and column. The three roots incident to any fixed root are distinct: at a zero of $H$, the factorisation gives $P'=H_xG$ and $-Q'=H_yG$, so regularity excludes $G=0$, $H_x=0$, and $H_y=0$.

Near infinity, label the roots so that $x_j\sim t^{1/7}\zeta^j$ and $y_k\sim t^{1/7}\zeta^k$, where $\zeta=e^{2\pi i/7}$. The identity
\[
 \zeta+\zeta^2+\zeta^4=\frac{-1+d}{2}
\]
gives
\[
 H_3(X,1)=\prod_{q\in\{3,5,6\}}(X-\zeta^q),
\]
where $H_3$ is the homogeneous cubic part of $H$. Thus, the incidences are $j-k\in\{3,5,6\}$ modulo seven. Their six ordered nonzero differences exhaust the nonzero residues modulo seven. Hence
\begin{equation}\label{eq:N}
 NN^{\mathsf T}=N^{\mathsf T}N=2I+J_7,
 \qquad N^{-1}=\tfrac12N^{\mathsf T}-\tfrac16J_7.
\end{equation}
Now continue the roots along a path avoiding the finitely many critical values of $P$ and $Q$. The complement is connected, and an incidence cannot appear or disappear: at every zero of $H$ over a regular value, the identities $P'=H_xG$ and $-Q'=H_yG$ imply $H_xH_yG\ne0$. Continuation changes only the row and column labels of $N$, so \eqref{eq:N} holds over every regular value, in particular on $|t|=R$. Here, $J_7$ is the all-ones matrix.

To pass from incidences to integral operators, we use
\begin{equation}\label{eq:logidentity}
 \sum_{H(x,y)=0}\frac1{P'(x)(x-x')}
 =\frac1{Q'(y)}\sum_{H(x',y')=0}\frac1{y-y'}.
\end{equation}
Locally, $H(X,y)=\prod_{j=1}^3(X-x_j(y))$. Since $P(x_j(y))=Q(y)$, differentiation gives $x_j'=Q'/P'(x_j)$. Computing $\partial_y\log H(x',y)$ gives the left side multiplied by $Q'(y)$. Factoring instead in $y$, with constant leading coefficient $-1$, gives the right sum. Both calculations hold away from the displayed poles.

Let $M_P$ and $M_Q$ denote multiplication by $|P'|$ and $|Q'|$ on the two boundaries. The map
\begin{equation}\label{eq:V}
 (V\varphi)(y)=|Q'(y)|\sum_{H(x,y)=0}\frac{\varphi(x)}{|P'(x)|}
\end{equation}
is $M_QNM_P^{-1}$, whose inverse is given by \eqref{eq:N}:
\[
(V^{-1}\psi)(x)=|P'(x)|
\left[
\frac12\sum_{H(x,y)=0}\frac{\psi(y)}{|Q'(y)|}
-\frac16\sum_{Q(y)=P(x)}\frac{\psi(y)}{|Q'(y)|}
\right].
\]
The coverings are smooth and unbranched. Thus both maps, finite sums of smooth compositions with nonvanishing multipliers, are bounded on every $H^s$, including $L^2$.

On the first boundary,
\[
 \frac{\nu_P(x)}{|P'(x)|}=\frac{P(x)}{RP'(x)}.
\]
Changing variables to $t=P(x)$ and $s=P(x')$ gives the conjugated kernel
\[
 \frac1{2\pi R}\Re\frac{t}{P'(x)(x-x')}
\]
together with the sum over $P(x')=s$ and integration against $|\dd s|$.
Multiply \eqref{eq:logidentity} by $t$, take real parts, and interchange the finite incidence sums. This gives
\begin{equation}\label{eq:intertwine}
 K_{D_Q}^*V=VK_{D_P}^*.
\end{equation}
Initially the target values $s,t$ are distinct. Continuity of the NP kernels across the diagonal extends the identity to the integral operators. A change of variables, using the column sums of $N$, also gives
$\int_{\partial D_Q}V\varphi\dd s=3\int_{\partial D_P}\varphi\dd s$.
Thus the similarity preserves the mean-zero spaces and gives equality of the full spectra, multiplicities and kernel dimensions.

\medskip\noindent\emph{3. M\"obius non-equivalence.}
For a polynomial $A$ of positive degree, let $A^\#(z)=\overline{A(\bar z)}$. We need a fact about polynomial lemniscates. Suppose $A,B$ have the same degree $n$, that $A$ has a simple zero, and that $A$ is not a constant plus a pure $n$th power. Then every holomorphic M\"obius map $M$ from the whole lemniscate $\{|A|=R\}$ onto $\{|B|=R\}$ is affine and satisfies $B\circ M=\tau A$ for some $|\tau|=1$.

If $A$ has a simple zero and $R>0$, then
\[
 A(z)A^\#(w)-R^2
\]
is irreducible in $\C[z,w]$. This is a special case of Orevkov's irreducibility criterion \cite{Orevkov}. Here Eisenstein's criterion gives a short proof. Reverse the polynomial in $z$: for $n=\deg A$, use $u^n[A(1/u)A^\#(w)-R^2]$. If $A(z)=\sum_{j=0}^n a_jz^j$ and $w_0$ is a simple zero of $A^\#$, the coefficient of $u^n$ is $a_0A^\#(w)-R^2$, which is not divisible by $w-w_0$. Every lower coefficient is divisible by $w-w_0$, and the constant coefficient $a_nA^\#(w)$ is not divisible by $(w-w_0)^2$. Eisenstein gives irreducibility in $\C(w)[u]$. The coefficients have no common factor, since $A^\#$ and $a_0A^\#-R^2$ are relatively prime. Gauss's lemma and reversal in $u$ now give irreducibility in $\C[z,w]$.

Let $M(z)=(\alpha_0z+\beta_0)/(\gamma_0z+\eta_0)$ be such a map, and set
\[
 D(z)=(\gamma_0z+\eta_0)^n,\qquad C(z)=D(z)B(M(z)).
\]
The polynomial $C(z)C^\#(w)-R^2D(z)D^\#(w)$ vanishes at infinitely many points of the complexified lemniscate. It is not identically zero: $C,D$ are linearly independent, and its coefficient matrix has rank two. Two plane algebraic curves without a common component have only finitely many intersection points. Thus the irreducible defining polynomial divides this numerator. The numerator has degree at most $n$ in each variable, while the defining polynomial has degree exactly $n$ in each. Their quotient is a nonzero constant. The column space of the coefficient matrix of $C(z)C^\#(w)-R^2D(z)D^\#(w)$ is $\operatorname{span}\{C,D\}$; that for $A(z)A^\#(w)-R^2$ is $\operatorname{span}\{A,1\}$. Hence,
\[
 \operatorname{span}\{C,D\}=\operatorname{span}\{A,1\}.
\]
If $\gamma_0\ne0$, this would express $A$ as a constant plus a pure $n$th power. Thus, $M$ is affine. The span identity gives $B(M(z))=c_1A(z)+c_0$, with $c_1\ne0$. For every $\tau$ on $|\tau|=R$, a root of $A(z)=\tau$ belongs to its lemniscate, so $|c_1\tau+c_0|=R$. Squaring and comparing the constant and first Fourier coefficients gives $c_0=0$ and $|c_1|=1$. Therefore,
\begin{equation}\label{eq:affine}
 B(\alpha_0z+\beta_0)=\tau A(z),\qquad |\tau|=1.
\end{equation}

We check these hypotheses for the polynomials in \eqref{eq:PQ}. The equations $P'(z)=0$ and $Q'(z)=0$ are quadratic in $z^3$, with nonzero constant terms and respective discriminants $-14+6d$ and $-14-6d$, both nonzero. Each polynomial therefore has six distinct critical points. Each degree-seven polynomial has a simple zero, since its multiple zeros have multiplicity at most two and seven is odd. Neither $P$ nor $P^\#$ is a constant plus a pure seventh power, since its sixth-degree coefficient vanishes and its fourth-degree coefficient does not. The argument applies to $P$ and $P^\#$.

A holomorphic equivalence gives $Q(\alpha_0z+\beta_0)=\tau P(z)$. Comparing degree-six and constant terms gives $\beta_0=0$ and $\tau=1$. The other coefficients would require
\[
 \alpha_0^7=1,\quad \bar\alpha\alpha_0^4=\alpha,\quad
 \bar\beta\alpha_0=\beta,
 \quad\text{hence}\quad \frac\beta{\alpha^2}=\frac{\bar\beta}{\bar\alpha^2}.
\]
The two ratios, however, are $(11+d)/56$ and $(11-d)/56$. For an antiholomorphic equivalence, apply \eqref{eq:affine} to $P^\#(z)=z^7+\bar\alpha z^4+\bar\beta z-i$. The constants force $\tau=-1$, the linear term forces $\alpha_0=-1$, and the fourth-degree term then gives $\bar\alpha=-\bar\alpha$, again a contradiction.
\end{proof}

The counterexample leaves room for uniqueness within prescribed families. We now obtain it by making the higher conformal coefficients depend on finitely many free parameters.

\section{Finite-dimensional uniqueness and stability}\label{sec:class}
\subsection{The family and the inverse theorem}\label{sec:main}
To obtain stable recovery despite nonuniqueness in the unrestricted problem, we prescribe a finite-dimensional family. We use exterior conformal maps with coefficient of $w$ equal to one, constant coefficient zero, and real Laurent coefficients in fixed coordinates. This normalisation fixes capacity and translation; orientation is also prescribed, as the NP spectrum is invariant under these geometric changes. We seek to recover the parameters of the family below from its reduced spectrum.

Fix $m\in\N$ and a known tail
\begin{equation}\label{eq:tail}
 F(w)=\sum_{n\ge m+1}f_nw^{-n},\qquad f_n\in\R,
 \qquad |f_n|\le C_F r_*^n\quad(0<r_*<1).
\end{equation}
For real $\avec=(a_1,\ldots,a_m)$, set
\begin{equation}\label{eq:family}
 \Psi_{\avec}(w)=w+\sum_{n=1}^m a_nw^{-n}+|\avec|^2F(w),
 \qquad |\avec|^2=\sum_{n=1}^m a_n^2.
\end{equation}
Let $D_{\avec}$ be the bounded region enclosed by $\Psi_{\avec}(\{|w|=1\})$, and set $K_{\avec}=\overline{D_{\avec}}$. Since
\( \Psi_{\avec}(\overline{w}) = \overline{\Psi_{\avec}(w)}\), the domain $D_{\avec}$ is symmetric about the real axis.

For small parameters, these maps define domains, as proved below. The first $m$ Laurent coefficients vary independently, while \eqref{eq:family} prescribes all higher coefficients. The choice $F=0$ is permitted.

For a unit vector $v\in\R^m$ and $\varepsilon,\delta>0$, define
\begin{equation}\label{eq:cone}
 \mathcal C(v,\varepsilon,\delta)
 =\{tv+u:0<t<\delta,\ u\perp v,\ |u|<\varepsilon t\}.
\end{equation}
This truncated cone is convex and $m$-dimensional, with the usual interpretation when $m=1$. Every point satisfies $t\le|\avec|\le\sqrt{1+\varepsilon^2}\,t$. The stability estimates extend by continuity to the vertex $\avec=0$; spectral derivatives are taken only on the punctured cone.

Write $\lambda_1(\avec)\ge\lambda_2(\avec)\ge\cdots\ge0$ for the positive reduced NP eigenvalues, counted with multiplicity and padded with zeros if there are only finitely many. At the disk the entire sequence is zero. Define the leading spectral map by
\[
 \Lambda_m(\avec)=(\lambda_1(\avec),\ldots,\lambda_m(\avec)).
\]

The linear part of the Grunsky operator is determined by the real symmetric matrix
\begin{equation}\label{eq:L}
 (L_m(\avec))_{ij}=
 \begin{cases}\sqrt{ij}\,a_{i+j-1},&i+j\le m+1,\\0,&i+j>m+1,
 \end{cases}\qquad 1\le i,j\le m.
\end{equation}
Define its characteristic coefficients $e_k^{(m)}(\avec)$ by
\begin{equation}\label{eq:E}
 \det(zI-L_m)=\sum_{k=0}^m(-1)^ke_k^{(m)}z^{m-k},\qquad
 e_0^{(m)}=1,\qquad \mathcal E_m=(e_1^{(m)},\ldots,e_m^{(m)}).
\end{equation}

\begin{definition}\label{def:good}
A unit vector $v$ is admissible for the spectral argument if
\begin{enumerate}
\item[(i)] $\det D\mathcal E_m(v)\ne0$;
\item[(ii)] all eigenvalues $\mu_j(v)$ of $L_m(v)$ are nonzero;
\item[(iii)] their absolute values are pairwise distinct.
\end{enumerate}
The third condition makes the signed eigenvalues simple.
\end{definition}

\begin{theorem}[Finite-parameter spectral uniqueness and stability]\label{thm:main}
For every $m\ge1$ there is an admissible unit vector $v$ with positive coordinates. Fix any admissible $v$ and any tail satisfying \eqref{eq:tail}. There exist $\varepsilon,\delta,c,C>0$ for which the following conclusions hold on $\mathcal C=\mathcal C(v,\varepsilon,\delta)$.
\begin{enumerate}
\item[(i)] Each $\Psi_{\avec}$ is the exterior conformal map of a strictly convex analytic Jordan domain. Its first $m$ positive reduced NP eigenvalues are simple and separated from all remaining positive eigenvalues;
\item[(ii)] For all $\avec,\bvec\in\mathcal C\cup\{0\}$,
\begin{equation}\label{eq:spectralbilip}
 c|\avec-\bvec|\le
 |\Lambda_m(\avec)-\Lambda_m(\bvec)|
 \le C|\avec-\bvec|.
\end{equation}
Thus these $m$ eigenvalues uniquely determine the normalised domain within this class;
\item[(iii)] For all $\avec,\bvec\in\mathcal C\cup\{0\}$, let $c(\avec)=(c_n(\avec))_{n\ge1}$ denote the complete Laurent coefficient sequence in \eqref{eq:family}. Then, for each fixed $s\ge0$,
\begin{equation}\label{eq:consequences}
 d_H(K_{\avec},K_{\bvec})+
 \norm{c(\avec)-c(\bvec)}_{\ell_s^2}
 \le C_s|\Lambda_m(\avec)-\Lambda_m(\bvec)|,
\end{equation}
where $\norm{c}_{\ell_s^2}^2=\sum_{n\ge1}n^{2s}|c_n|^2$.
\end{enumerate}
The constants may depend on $m,F,v$ and the cone aperture. With these data fixed, they remain uniform as $\avec,\bvec\to0$ within the cone. No bound on their dependence on $m$ is asserted.
\end{theorem}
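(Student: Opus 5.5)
The plan is to pass from the NP spectrum to the Grunsky operator of $\Psi_{\avec}$, and then to reduce the inverse problem to the characteristic coefficients of the finite matrix $L_m$. First I will use the classical Schiffer--Grunsky correspondence: for an analytic Jordan domain with exterior map $\Psi$, the nonzero reduced NP eigenvalues are $\pm\frac12$ times the singular values of the Grunsky operator $\Gamma_\Psi$ on $\ell^2(\N)$, whose entries are $\sqrt{kl}\,b_{kl}$ with $\log\frac{\Psi(w)-\Psi(\zeta)}{w-\zeta}=-\sum b_{kl}w^{-k}\zeta^{-l}$. Expanding to first order in the coefficients gives $b_{kl}=c_{k+l-1}+O(|c|^2)$. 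Since $a_n$ occupies index $n\le m$ and the tail is $|\avec|^2F$, the operator splits as
\[
 \Gamma_{\avec}=L_m(\avec)\oplus 0+R(\avec),\qquad
 \norm{R(\avec)}_{\HS}\le C|\avec|^2,\qquad \norm{DR(\avec)}_{\HS}\le C|\avec|.
\]
These bounds follow from the geometric decay $|f_n|\le C_Fr_*^n$ together with analyticity of the Grunsky coefficients in $\avec$. For small $|\avec|$, univalence, analyticity and strict convexity of the boundary follow from $|\Psi_{\avec}'(w)-1|\le C|\avec|$ and the curvature formula $\Re(1+w\Psi''/\Psi')>0$ on $|w|=1$. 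This settles the geometric part of (i).

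Next comes existence of an admissible $v$ with positive coordinates. Conditions (ii) and (iii) are Zariski-open in $v$: each is the nonvanishing of a polynomial, namely the determinant, and the discriminant of $\det(zI-L_m)\det(zI+L_m)$ in $z$. Condition (i) is also open. So it suffices to show that each polynomial is not identically zero; their common nonvanishing set is then dense and meets the positive orthant. For (i) I will use the Hankel structure. $L_m$ is anti-triangular with weighted Hankel entries, and $e_k$ involves $a_{m+1-?}$ in a triangular way. At a point such as $\avec=(0,\ldots,0,1)$ or a generic point, I will try to show that $D\mathcal E_m$ is triangular after ordering variables by index, with nonzero diagonal. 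Failing that, I will argue via Newton identities: the power sums $\Tr L_m^k$ have a leading monomial in $a_k$ that no other trace has.

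For the spectral step, on the cone write $\avec=t(v+u/t)$. The matrix $L_m(v)$ has simple eigenvalues with distinct nonzero moduli, so by homogeneity $L_m(\avec)$ has gaps of order $t$. The remainder $R$ is $O(t^2)$, with derivative $O(t)$. Weyl and Kato perturbation, using spectral projections of $\Gamma^*\Gamma$, then give several facts. The top $m$ singular values equal $|\mu_j(\avec)|+O(t^2)$. The remaining singular values are $O(t^2)$, which gives the separation claimed in (i). Finally, $D\sigma_j(\avec)=D|\mu_j|(\avec)+O(\varepsilon)+O(t)$. The derivative estimate is the main obstacle. Gaps of order $t$ combined with perturbation derivatives of order $t$ leave only $O(1)$ ratios, so the bookkeeping must show that the error is $O(t)$, not $O(1)$. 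I will handle this by rescaling $\Gamma_{\avec}/t$, which is a $C^1$ family with uniform gaps. Since $|\mu_j|$ is homogeneous of degree one, $D|\mu_j|(\avec)=D|\mu_j|(v+u/t)$, and this lies within $O(\varepsilon)$ of $D|\mu_j|(v)$. The ordered absolute eigenvalue map $v\mapsto(|\mu_j|)$ differs from $\mathcal E_m$ by a local diffeomorphism, because the eigenvalues are simple and signed, with fixed signs. Hence $J:=D\Lambda_m(v)$ is invertible.

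Finally, because $\mathcal C$ is convex, $\Lambda_m(\avec)-\Lambda_m(\bvec)=\int_0^1D\Lambda_m(\bvec+s(\avec-\bvec))\,ds\,(\avec-\bvec)$. For $\varepsilon,\delta$ small the integrand stays within $\norm{J^{-1}}^{-1}/2$ of $J$, which gives (ii). The vertex case follows by continuity, using $\Lambda_m(0)=0$. For (iii), the Laurent sequence $c(\avec)-c(\bvec)$ consists of $\avec-\bvec$ together with $(|\avec|^2-|\bvec|^2)f_n$. Its $\ell^2_s$ norm is therefore at most $C_s|\avec-\bvec|$, since the weights $n^s r_*^n$ are summable. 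The Hausdorff distance is bounded by $\sup_{|w|=1}|\Psi_{\avec}-\Psi_{\bvec}|\le\sum|c_n(\avec)-c_n(\bvec)|$. Combining both bounds with (ii) gives \eqref{eq:consequences}.
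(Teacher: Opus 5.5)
Most of your argument is sound and close to the paper's. This includes the Grunsky reduction with $\norm{R}_{\HS}=O(|\avec|^2)$ and $\norm{DR}=O(|\avec|)$, and the derivative comparison: rescaling by $t$ to obtain uniform gaps is an equivalent packaging of the paper's eigenvector estimate $\norm{u_j-v_j}\le Cr$. It also includes integration along segments of the convex cone, and your observation that (iii) needs only upper bounds of $d_H$ and of $\norm{c(\avec)-c(\bvec)}_{\ell^2_s}$ by $|\avec-\bvec|$.

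The gap is the theorem's first assertion: that an admissible direction with positive coordinates exists for every $m$. Your Zariski-density reduction is correct, but it only shifts the burden to proving that $\det D\mathcal E_m$ and the discriminant of $\det(zI-L_m)\det(zI+L_m)$ are not identically zero, and you prove neither. Your suggested witness $\avec=(0,\ldots,0,1)$ fails for both.
\begin{itemize}
\item There $L_m$ is anti-diagonal and $L_m^2=\diag\bigl(i(m+1-i)\bigr)_{i=1}^m$, so the singular values coincide in pairs for every $m\ge2$.
\item For $m=3$ the Jacobian equals $72bc^2$, which vanishes at $(0,0,1)$.
\end{itemize}
The Newton-identity fallback is only asserted. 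The phrase ``a leading monomial in $a_k$ that no other trace has'' is not made precise; already $\Tr L_m^2=\sum_{i+j\le m+1}ij\,a_{i+j-1}^2$ involves every $a_n$. Without a concrete term order, nothing forces the Jacobian determinant to be a nonzero polynomial.

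The missing idea is the anti-triangular determinant identity combined with an induction on $m$.
\begin{itemize}
\item Reversing the columns of $L_m$ gives an upper-triangular matrix with diagonal entries $\sqrt{i(m+1-i)}\,a_m$. Hence $e_m^{(m)}=\det L_m=\kappa_m a_m^m$ depends on $a_m$ alone.
\item Expanding $\det D\mathcal E_m$ along its last row gives $m\kappa_m a_m^{m-1}$ times the minor $[\partial e_i^{(m)}/\partial a_j]_{i,j<m}$.
\item Since $L_m(\avec',0)=L_{m-1}(\avec')\oplus0$, this minor at $a_m=0$ equals $\det D\mathcal E_{m-1}(\avec')$.
\end{itemize}
Start from a positive admissible point for $m-1$ and add a sufficiently small $a_m>0$. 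By continuity the minor stays nonzero, and the prefactor is nonzero. The old eigenvalues stay simple with distinct nonzero moduli. The new eigenvalue is nonzero by the determinant identity and, for small $a_m$, smaller in modulus than all the others. This is exactly what makes both of your polynomials nonvanishing at a positive point. The remaining conclusions of your proposal start from an arbitrary admissible $v$ and are unaffected by this gap.
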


The proof begins with geometric coordinates and estimates for the full Grunsky operator. We then construct admissible directions for its finite-dimensional linearisation and compare the derivatives of the two spectral maps uniformly near the disk. Integrating this comparison along segments in a convex cone gives \eqref{eq:spectralbilip}. Uniform control of the differential throughout the cone is essential to this final step.

\subsection{Conformal geometry and geometric coordinates}\label{sec:geometry}
Define
\[
 T_n(w,z)=\sum_{k=1}^n w^{-k}z^{-(n+1-k)}.
\]
The divided difference of \eqref{eq:family} is
\begin{equation}\label{eq:divided}
 \frac{\Psi_{\avec}(w)-\Psi_{\avec}(z)}{w-z}=1-\mathcal S_{\avec}(w,z),
 \quad
 \mathcal S_{\avec}=\sum_{n=1}^m a_nT_n+|\avec|^2\sum_{n>m}f_nT_n.
\end{equation}
Since $|T_n(w,z)|\le n$ for $|w|,|z|\ge1$, the condition
\begin{equation}\label{eq:univalent}
 \sum_{n=1}^m n|a_n|+|\avec|^2\sum_{n>m}n|f_n|<1
\end{equation}
ensures injectivity on the closed exterior disk. Taking $z\to w$ also gives $\Psi'_{\avec}(w)\ne0$ there. The normalisation $\Psi_{\avec}(w)=w+O(w^{-1})$ at infinity then shows that the exterior disk maps onto the unbounded component of the complement of the boundary Jordan curve. Hence $\Psi_{\avec}$ is the normalised exterior conformal map of that curve.

Exponential decay of the tail makes $z_{\avec}(\theta)=\Psi_{\avec}(e^{i\theta})$ analytic in $\theta$ and gives $z_{\avec}\to e^{i\theta}$ in $C^k$ for every fixed $k$. In particular,
\[
 \frac{\operatorname{Im}(\overline{z'_{\avec}}z''_{\avec})}
 {|z'_{\avec}|^3}\longrightarrow1
\]
uniformly in $\theta$. Thus the curvature is positive for small parameters, and the simple closed curve bounds a strictly convex body. These conclusions hold throughout a sufficiently small real parameter ball centred at the origin.

Near the disk, the Laurent parameters give quantitative geometric coordinates. To compare parameter distance with Hausdorff distance on a small real parameter ball, we differentiate the support function.

\begin{proposition}[Uniform geometric coordinates]\label{prop:geometry}
There exist $r_0>0$ and $C_+>0$ such that, whenever $|\avec|,|\bvec|\le r_0$,
\begin{equation}\label{eq:hausbilip}
 \frac1{2\sqrt2}|\avec-\bvec|
 \le d_H(K_{\avec},K_{\bvec})\le C_+|\avec-\bvec|.
\end{equation}
For every fixed $s\ge0$, define $F_s^2=\sum_{n>m}n^{2s}|f_n|^2$. Then
\begin{equation}\label{eq:coeffbound}
 \norm{c(\avec)-c(\bvec)}_{\ell_s^2}
 \le\sqrt{m^{2s}+4r_0^2F_s^2}\,|\avec-\bvec|.
\end{equation}
\end{proposition}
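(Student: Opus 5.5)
The coefficient estimate \eqref{eq:coeffbound} is a direct computation, and I would do it first. By \eqref{eq:family}, $c_n(\avec)-c_n(\bvec)=a_n-b_n$ for $n\le m$, and $c_n(\avec)-c_n(\bvec)=(|\avec|^2-|\bvec|^2)f_n$ for $n>m$. Since $n^{2s}\le m^{2s}$ for $n\le m$, the first block contributes at most $m^{2s}|\avec-\bvec|^2$. For the tail I use
\[
\bigl||\avec|^2-|\bvec|^2\bigr|\le(|\avec|+|\bvec|)|\avec-\bvec|\le 2r_0|\avec-\bvec|,
\]
so the tail contributes at most $4r_0^2F_s^2|\avec-\bvec|^2$. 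Here $F_s<\infty$ by the exponential decay in \eqref{eq:tail}. Adding the two blocks and taking square roots gives \eqref{eq:coeffbound}.

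For \eqref{eq:hausbilip} I would work with support functions, since $d_H(K_{\avec},K_{\bvec})=\sup_\theta|h_{\avec}(\theta)-h_{\bvec}(\theta)|$ for convex bodies. This identity applies because $K_{\avec}$ and $K_{\bvec}$ are strictly convex for $r_0$ small, as shown just before the proposition. The upper bound does not even need convexity. Each boundary point $\Psi_{\avec}(e^{i\phi})$ lies within $\sup_{|w|=1}|\Psi_{\avec}(w)-\Psi_{\bvec}(w)|$ of the other boundary. That supremum is at most
\[
\sum_{n\le m}|a_n-b_n|+2r_0|\avec-\bvec|\sum_{n>m}|f_n|\le C_+|\avec-\bvec|.
\]
The Hausdorff distance of the compact bodies is controlled by the distance of their boundaries, which gives the right-hand inequality.

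For the lower bound, write $h_{\avec}(\theta)=\max_\phi \Re\bigl(e^{-i\theta}\Psi_{\avec}(e^{i\phi})\bigr)$. At $\avec=0$ the maximiser is $\phi=\theta$, and it is nondegenerate because the second $\phi$-derivative there equals $-1$. By the implicit function theorem, applied to the critical-point equation $\partial_\phi\Re(e^{-i\theta}\Psi_{\avec}(e^{i\phi}))=0$, the maximiser $\phi_{\avec}(\theta)$ is unique and depends smoothly on $(\avec,\theta)$ for $|\avec|\le r_0$. It also satisfies $\phi_{\avec}(\theta)=\theta+O(|\avec|)$ uniformly in $\theta$. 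The envelope theorem then gives
\[
\partial_{a_n}h_{\avec}(\theta)=\Re\bigl(e^{-i\theta}\partial_{a_n}\Psi_{\avec}(e^{i\phi_{\avec}(\theta)})\bigr)=\cos((n+1)\theta)+O(|\avec|).
\]
The $O(|\avec|)$ term absorbs both the shift of $\phi$ and the tail derivative $2a_nF$. The parameter ball is convex, so integrating along the segment from $\bvec$ to $\avec$ yields
\[
h_{\avec}(\theta)-h_{\bvec}(\theta)=\sum_{n=1}^m(a_n-b_n)\cos((n+1)\theta)+R(\theta),\qquad |R(\theta)|\le C r_0|\avec-\bvec|.
\]
The frequencies $n+1\ge 2$ are distinct, so the trigonometric sum has $L^2(\dd\theta/2\pi)$-norm exactly $|\avec-\bvec|/\sqrt2$. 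Since the supremum dominates the $L^2$ average,
\[
d_H(K_{\avec},K_{\bvec})\ge\bigl(1/\sqrt2-Cr_0\bigr)|\avec-\bvec|\ge\frac1{2\sqrt2}|\avec-\bvec|
\]
once $r_0$ is chosen small.

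I expect the main obstacle to be making the envelope step rigorous with constants uniform in $\theta$. This requires a uniform nondegeneracy bound for the $\phi$-Hessian, and a uniform $O(|\avec|)$ bound on $\partial_{\avec}\phi_{\avec}$ together with the error terms. All of these should follow from the $C^2$ closeness of $z_{\avec}$ to $e^{i\theta}$ established above and compactness of the circle.
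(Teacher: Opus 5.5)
Your proposal is correct and follows the paper's proof essentially step for step. You use the exact coefficient identity for the tail, the support-function representation with an implicit-function maximiser, the linearisation $\partial_{a_n}h_{\avec}=\cos((n+1)\theta)+O(|\avec|)$ integrated along segments, and the $L^2$-orthogonality lower bound with $r_0$ small. For the upper bound you compare boundary points of equal conformal angle, which is the comparison the paper remarks on right after the proposition, whereas its proof bounds the derivative of $h_{\avec}$.

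One correction concerns your aside that the upper bound needs no convexity; as phrased, it is wrong. For nonconvex Jordan domains, a small boundary Hausdorff distance does not control the body distance: compare a thin C-shaped band hugging the unit circle with the closed disk. You can fix this in either of two ways. You may use convexity through $h_K(\theta)=\max_{z\in\partial K}\Re(ze^{-i\theta})$, which is available here. Alternatively, use the same-parameter uniform bound on $\Psi_{\avec}-\Psi_{\bvec}$ together with a winding-number argument.
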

\begin{proof}
For a compact convex body $K$, define the support function
\[
 h_K(\theta)=\max_{z\in K}\Re(ze^{-i\theta}).
\]
For convex bodies, $d_H(K,L)=\norm{h_K-h_L}_\infty$. Indeed, $K\subset L+d\overline{\mathbb D}$ implies $h_K\le h_L+d$. Conversely, suppose this support inequality holds but $x\in K$ lies outside $L+d\overline{\mathbb D}$. The nearest point $y$ in that compact convex set satisfies $(x-y)\cdot(z-y)\le0$ for every $z$ in the set. The direction $x-y$ therefore strictly separates $x$ from the set, contradicting the support inequality. Interchanging $K$ and $L$ proves the identity.

We have $h_{\avec}(\theta)=\max_t\Re(\Psi_{\avec}(e^{it})e^{-i\theta})$. At $\avec=0$, the maximum occurs at $t=\theta$, where the second derivative in $t$ is $-1$. Outside a fixed small arc around this point, the disk objective is uniformly below its maximum. For small parameters, the maximiser stays in this arc and the second derivative stays uniformly negative there. The implicit function theorem gives a smooth maximiser; compactness of the angle circle makes its parameter derivative bounds uniform.

The first derivative in $t$ vanishes at the maximiser, so differentiation in the parameters gives
\[
 h_0=1,\qquad \partial_{a_n}h_0(\theta)=\cos((n+1)\theta).
\]
It follows that
\[
 h_{\avec}=1+\sum_{n=1}^m a_n\cos((n+1)\theta)+\mathcal R_{\avec},
 \quad
 \norm{\mathcal R_{\avec}-\mathcal R_{\bvec}}_\infty
 \le Mr_0|\avec-\bvec|.
\]
For the remainder bound, bounded second parameter derivatives imply $\norm{D\mathcal R_{\avec}}\le M|\avec|$. Integrating along the segment between the parameters gives the stated difference estimate.

Orthogonality in $L^2(\dd\theta/(2\pi))$ gives
\[
 \left\|\sum_{n=1}^m(a_n-b_n)\cos((n+1)\theta)\right\|_{L^2}
 =\frac{|\avec-\bvec|}{\sqrt2}.
\]
The uniform norm dominates this normalised $L^2$ norm, so
\[
 (1/\sqrt2-Mr_0)|\avec-\bvec|
 \le\norm{h_{\avec}-h_{\bvec}}_\infty
 \le(\sqrt m+Mr_0)|\avec-\bvec|.
\]
Taking $Mr_0\le1/(2\sqrt2)$ proves \eqref{eq:hausbilip}.

The coefficient difference satisfies the exact identity
\[
 \norm{c(\avec)-c(\bvec)}_{\ell_s^2}^2
 =\sum_{n=1}^m n^{2s}|a_n-b_n|^2
 +(|\avec|^2-|\bvec|^2)^2F_s^2.
\]
Using $\big||\avec|^2-|\bvec|^2\big|\le2r_0|\avec-\bvec|$ and $n^{2s}\le m^{2s}$ for $n\le m$ proves \eqref{eq:coeffbound}. Exponential decay gives $F_s<\infty$ for every fixed $s$.
\end{proof}

Boundary points with the same conformal angle are at distance at most $C|\avec-\bvec|$, which gives the same upper bound for $d_H(\partial D_{\avec},\partial D_{\bvec})$. The support-function argument bounds body Hausdorff distance by boundary Hausdorff distance. Together with the geometric coordinate estimate, these bounds make the two Hausdorff distances comparable on the small parameter ball.

\subsection{The Grunsky operator and the NP spectrum}\label{sec:analytic}
We use the convention
\begin{equation}\label{eq:grunsky}
 \log\frac{\Psi_{\avec}(w)-\Psi_{\avec}(z)}{w-z}
 =-\sum_{i,j\ge1}b_{ij}(\avec)w^{-i}z^{-j},
 \qquad B_{ij}(\avec)=\sqrt{ij}\,b_{ij}(\avec).
\end{equation}
Symmetry gives $b_{ij}=b_{ji}$. Throughout, the coefficients in \eqref{eq:grunsky} are logarithmic Grunsky coefficients; the corresponding Faber-polynomial coefficients carry an additional index factor.

\begin{lemma}[Analytic operator expansion]\label{lem:analytic}
On a sufficiently small parameter ball, $B(\avec)$ is a real symmetric Hilbert--Schmidt operator on $\ell^2(\N)$, depends real analytically on $\avec$, and satisfies
\begin{equation}\label{eq:remainder}
 B(\avec)=\widetilde L_m(\avec)+R(\avec),\qquad
 \norm{R(\avec)}_{\HS}\le C|\avec|^2,\qquad
 \norm{DR(\avec)}_{\R^m\to\HS}\le C|\avec|,
\end{equation}
where $\widetilde L_m=L_m\oplus0$.
\end{lemma}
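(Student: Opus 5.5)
The plan is to expand the logarithm in \eqref{eq:grunsky} directly from the divided difference \eqref{eq:divided}. Writing $\log(1-\mathcal S_{\avec})=-\sum_{k\ge1}\mathcal S_{\avec}^k/k$, the coefficient array of $-\log(1-\mathcal S_{\avec})$ in the monomials $w^{-i}z^{-j}$ is $b_{ij}(\avec)$. The linear term is $\mathcal S_{\avec}$ itself. Since $T_n(w,z)=\sum_{i+j=n+1}w^{-i}z^{-j}$, the contribution of $\sum_{n\le m}a_nT_n$ to $b_{ij}$ is $a_{i+j-1}$ when $i+j\le m+1$. After weighting by $\sqrt{ij}$, this is exactly the matrix $L_m$ of \eqref{eq:L}, padded by zeros, i.e.\ $\widetilde L_m$. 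Everything else goes into $R(\avec)$: the tail part $|\avec|^2\sum_{n>m}f_nT_n$ and all powers $\mathcal S_{\avec}^k/k$ with $k\ge2$. Each of these is formally of order $|\avec|^2$, so the content of the lemma is to make this quantitative in Hilbert--Schmidt norm and to make it analytic.

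For the norm bounds I will not work with $\ell^2$ directly. Instead I will use a weighted sup-norm on coefficient arrays. I will work on a slightly enlarged domain $|w|,|z|\ge\rho$ with $\rho<1$ close to one, chosen so that $r_*<\rho$. Since $|T_n(w,z)|\le n\rho^{-n-1}$ there, \eqref{eq:tail} shows that $\mathcal S_{\avec}$ is holomorphic on $\{|w|,|z|>\rho\}$. It also satisfies $\sup|\mathcal S_{\avec}|\le C_\rho|\avec|$, and for small $|\avec|$ this is below $1/2$. The series for the logarithm then converges uniformly there. Writing $\Phi_{\avec}=\sum_{k\ge2}\mathcal S_{\avec}^k/k$ plus the tail term, I get $\sup_{|w|,|z|\ge\rho'}|\Phi_{\avec}|\le C|\avec|^2$ for some $\rho<\rho'<1$. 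Cauchy estimates on the torus $|w|=|z|=\rho'$ then give $|r_{ij}|\le C|\avec|^2\rho'^{\,i+j}$ for the coefficients of $R$. Hence
\[
\norm{R(\avec)}_{\HS}^2=\sum_{i,j} ij\,|r_{ij}|^2\le C^2|\avec|^4\sum_{i,j}ij\,\rho'^{2(i+j)}<\infty .
\]
This gives the quadratic bound in \eqref{eq:remainder}, and the same estimate shows $B(\avec)$ is Hilbert--Schmidt.

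Real analyticity and the derivative bound follow by complexifying. The map $\avec\mapsto\mathcal S_{\avec}$ is polynomial, of degree two in $\avec$, into the Banach space of bounded holomorphic functions on $\{|w|,|z|>\rho\}$. Composing with $-\log(1-\cdot)$ on the ball of radius $1/2$ gives a holomorphic map of $\avec\in\C^m$ near zero. The weighted Cauchy-estimate map into $\HS$ is bounded and linear, so $B(\avec)$ is real analytic. Symmetry $b_{ij}=b_{ji}$ comes from the symmetry of the divided difference, and realness from $f_n,a_n\in\R$. For $DR$, $R$ is holomorphic on a complex ball of radius $2r$ with $\norm{R(\zeta)}_{\HS}\le C|\zeta|^2$ there. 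A Cauchy estimate for the derivative on the ball of radius $|\avec|/2$ about $\avec$ gives $\norm{DR(\avec)}\le C'|\avec|$. Alternatively, one can note that $R$ vanishes to second order at $0$ and apply Taylor's theorem to the analytic map.

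The main obstacle is the bookkeeping in the first step. The growth of $T_n$ off the unit torus must be beaten by the decay $r_*^n$, and this forces the intermediate radii $r_*<\rho<\rho'<1$. One must also check that $\sqrt{ij}$ times the Cauchy bound is square-summable, which holds since $\rho'<1$. The identification of the linear part with $L_m$ and the complex-ball Cauchy argument are routine.
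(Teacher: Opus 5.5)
Your proposal is correct and is essentially the paper's proof. Both arguments complexify the parameters, bound $\mathcal S_{\avec}$ on $|w|,|z|\ge r$ with $r_*<r<1$, and expand $-\log(1-\mathcal S_{\avec})$. Both then use Cauchy estimates in $w^{-1},z^{-1}$ to get coefficients of size $r^{i+j}$, square-summable against the weight $ij$, and read off $\widetilde L_m$ from the degree-one term. The differences are minor. You bound the degree-$\ge2$ part directly in sup norm and get $DR$ from a Cauchy estimate on a ball of radius $|\avec|/2$, whereas the paper uses Taylor's theorem with a second-derivative bound. You should also state explicitly that the complexification replaces $|\avec|^2$ by the holomorphic polynomial $\sum a_n^2$; your second radius $\rho'$ is unnecessary.
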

\begin{proof}
Choose $r_*<r<1$ and complexify the parameters near the origin, replacing $|\avec|^2$ by the holomorphic polynomial $\sum a_n^2$. For $|w|,|z|\ge r$,
\[
 |T_n(w,z)|\le nr^{-n-1},\qquad
 \sum_{n>m}n|f_n|r^{-n-1}<\infty.
\]
On a sufficiently small complex parameter ball, we have $|\mathcal S_{\avec}|<q<1$ uniformly in this region, so
\begin{equation}\label{eq:logseries}
 -\log(1-\mathcal S_{\avec})
 =\sum_{k\ge1}\frac{\mathcal S_{\avec}^k}{k}
\end{equation}
converges normally. Cauchy estimates in $w^{-1},z^{-1}$ give
\[
 |b_{ij}(\avec)|\le C r^{i+j}.
\]
The convergence of $\sum_{i,j}ijr^{2i+2j}$ shows that these coefficients define a holomorphic map into the Hilbert--Schmidt operators. Parameter Cauchy estimates give the corresponding bounds for its derivatives on a smaller ball.

The degree-one term of \eqref{eq:logseries} is $\sum_{n\le m}a_nT_n$. Its coefficient of $w^{-i}z^{-j}$ equals $a_{i+j-1}$ when $i+j\le m+1$ and is zero otherwise. Multiplying by $\sqrt{ij}$ gives \eqref{eq:L}. All remaining terms have parameter degree at least two. Taylor's theorem, with a bound on the second derivative of the Hilbert--Schmidt-valued map, gives both estimates in \eqref{eq:remainder}. For real parameters, the operator is real and symmetric, hence self-adjoint.
\end{proof}

For real coefficients, the NP--Grunsky representation gives the bounded similarity
\begin{equation}\label{eq:block}
 K_{D_{\avec}}^*\big|_{H^{-1/2}_0} \sim\
 \frac12\begin{pmatrix}0&B(\avec)\\B(\avec)&0\end{pmatrix}.
\end{equation}
To describe the conformal density coordinates used here, set
\[
 \zeta_{\pm n}(\Psi(e^{i\theta}))
 =\frac{\sqrt n\,e^{\pm in\theta}}{h(\theta)},\qquad
 h(\theta)=|\partial_\theta\Psi(e^{i\theta})|.
\]
Expansion in these densities gives a bounded identification of $\ell^2\oplus\ell^2$ with the mean-zero $H^{-1/2}$ space. The Faber expansion and layer-potential jump relations give
\[
 K_D^*\zeta_n=\frac12\sum_{k\ge1}B_{kn}\zeta_{-k},\qquad
 K_D^*\zeta_{-n}=\frac12\sum_{k\ge1}B_{kn}\zeta_k.
\]
This proves \eqref{eq:block}; see \cite[Proposition 3.1 and equation (4.3)]{JLdecay} and the derivation in \cite{JLseries}. The classical relation with Fredholm eigenvalues is discussed in \cite{Schiffer}.

The unitary change $(x,y)\mapsto2^{-1/2}(x+y,x-y)$ transforms the block operator into $\frac12\diag(B,-B)$. Its positive eigenvalues, with multiplicity, are $\frac12$ times the nonzero absolute eigenvalues of $B$. Self-adjointness identifies these absolute eigenvalues with the singular values, so
\begin{equation}\label{eq:singular}
 s_j(B(\avec))=2\lambda_j(\avec).
\end{equation}
The real Laurent coefficients ensure that $B$ is self-adjoint and allow the signed-eigenvalue perturbation argument below.

\subsection{Nondegenerate directions}\label{sec:directions}
\begin{lemma}[A determinant identity]\label{lem:det}
For every $m$,
\begin{equation}\label{eq:det}
 \det L_m(\avec)=\kappa_m a_m^m,\qquad
 \kappa_m=(-1)^{m(m-1)/2}m!.
\end{equation}
\end{lemma}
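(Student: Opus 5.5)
The plan is to expand $\det L_m(\avec)$ by the Leibniz formula and use the triangular support pattern in \eqref{eq:L} to show that a single permutation contributes.

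First, the entry $(L_m)_{ij}$ vanishes whenever $i+j>m+1$. A permutation $\sigma$ of $\{1,\ldots,m\}$ therefore gives a nonzero product $\prod_i (L_m)_{i\sigma(i)}$ only if $i+\sigma(i)\le m+1$ for every $i$. Summing over $i$ gives
\[
 \sum_{i=1}^m\bigl(i+\sigma(i)\bigr)=m(m+1)=\sum_{i=1}^m(m+1).
\]
So all of these inequalities must be equalities. Hence $\sigma$ is the reversal $\sigma_0(i)=m+1-i$, and $\det L_m=\operatorname{sgn}(\sigma_0)\prod_{i=1}^m (L_m)_{i,m+1-i}$.

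Second, along the antidiagonal we have $i+j-1=m$. So each entry is $\sqrt{i(m+1-i)}\,a_m$, and the product is
\[
 a_m^m\prod_{i=1}^m\sqrt{i}\,\prod_{i=1}^m\sqrt{m+1-i}=a_m^m\sqrt{m!\cdot m!}=m!\,a_m^m.
\]

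Third, the reversal permutation has $\binom m2=m(m-1)/2$ inversions, since every pair $i<j$ is inverted. Its sign is therefore $(-1)^{m(m-1)/2}$. Combining the three steps gives \eqref{eq:det} with $\kappa_m=(-1)^{m(m-1)/2}m!$.

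There is no real obstacle here. The only point that needs care is the counting argument in the first step, which forces the antidiagonal permutation; the sign and product computations are routine.
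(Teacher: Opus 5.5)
Your proof is correct and is essentially the paper's argument. The paper reverses the columns with a permutation matrix $J$, so that $L_mJ$ is upper triangular with diagonal entries $\sqrt{i(m+1-i)}\,a_m$ and $\det J=(-1)^{m(m-1)/2}$; this is the same support observation as your Leibniz counting step forcing the antidiagonal permutation.
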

\begin{proof}
Let $J$ reverse the $m$ columns. Then $(L_mJ)_{ik}=0$ for $i>k$, and the diagonal entries are $\sqrt{i(m+1-i)}a_m$. Their product is $m!a_m^m$. Since $\det J=(-1)^{m(m-1)/2}$, the determinant of $L_m$ is given by \eqref{eq:det}.
\end{proof}

\begin{proposition}[Existence in the positive orthant]\label{prop:existence}
For every $m\ge1$, there is a vector $p\in(0,\infty)^m$ for which $D\mathcal E_m(p)$ is invertible and $L_m(p)$ has nonzero eigenvalues with pairwise distinct absolute values. The normalised vector $v=p/|p|$ is therefore admissible.
\end{proposition}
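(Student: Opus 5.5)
The plan is to reduce the statement to the nonvanishing of two polynomials in $\avec$ and then exhibit a single point of $(0,\infty)^m$ where neither vanishes. The map $\avec\mapsto\mathcal E_m(\avec)$ is polynomial, so $\det D\mathcal E_m$ is a polynomial in $\avec$. Likewise, the discriminant of the characteristic polynomial of $L_m(\avec)^2$ is a polynomial $\Delta(\avec)$, and $\Delta(\avec)\ne0$ exactly when the squares $\mu_j^2$ are pairwise distinct, i.e. when the absolute values are distinct. Nonzero eigenvalues are equivalent to $a_m\ne0$ by Lemma~\ref{lem:det}. A nonzero polynomial cannot vanish identically on the open set $(0,\infty)^m$. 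It therefore suffices to show that $\det D\mathcal E_m\not\equiv0$ and $\Delta\not\equiv0$. Then the product $a_m\,\Delta\,\det D\mathcal E_m$ is a nonzero polynomial, so it is nonzero at some $p$ with positive entries, and $v=p/|p|$ is admissible.

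For $\Delta\not\equiv0$ I would evaluate at a convenient real point, allowing signs and zeros, since only nontriviality of the polynomial is needed. Take $\avec=(0,\dots,0,1)$. Then $L_m=\sqrt{i(m+1-i)}$ placed on the antidiagonal, so $L_m^2$ is diagonal with entries $i(m+1-i)$. These entries are not distinct, because $i$ and $m+1-i$ give the same value, so this point fails. I would instead perturb to $a_m=1$, $a_{m-1}=\eta$ and work to first order in $\eta$. Each pair $\{i,m+1-i\}$ spans a $2\times2$ block of $L_m$ with eigenvalues $\pm\sqrt{i(m+1-i)}$, and these values are distinct across pairs. So the only coincidences of absolute values occur within a block, between $+\sqrt{i(m+1-i)}$ and $-\sqrt{i(m+1-i)}$. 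The $a_{m-1}$ entries lie on the superantidiagonal, where $i+j=m$, and they couple the blocks. The perturbation then splits $\mu$ from $-\mu$ with unequal magnitude shifts. If first order is insufficient, I would use a generic second coefficient instead.

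An easier alternative for $\Delta$ is to take $\avec$ with $a_1=1$ and all other $a_n$ small but nonzero. Then $L_m\approx\diag(1,2,\dots,m)$ plus a small perturbation, because $(L_m)_{ii}=i\,a_{2i-1}$ when $2i-1\le m$. This makes me reconsider: the diagonal entries are $i\,a_{2i-1}$, which vanish for $2i-1>m$. A cleaner choice is to set $a_{2i-1}=c_i$ with distinct $|i c_i|$ wherever $2i-1\le m$. The resulting lower-right zero block still makes $\Delta$ vanish there, so the explicit perturbative point above seems the more reliable route.

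The main obstacle is $\det D\mathcal E_m\not\equiv0$. This is equivalent to the map $\avec\mapsto\mathcal E_m(\avec)$ having generically full rank, i.e. to $\mathcal E_m$ being dominant as a polynomial map $\R^m\to\R^m$. By Newton's identities it suffices to show that the power sums $\avec\mapsto(\Tr L_m^k)_{k=1}^m$ have generically invertible Jacobian. I would prove this by a triangularity or weight argument. Assign $a_n$ the weight $n$, so that entry $(i,j)$ has weight $i+j-1$. The first term $\Tr L_m=\sum_i i\,a_{2i-1}$ is linear. More generally, I would look at the monomial of $\Tr L_m^k$ involving the highest-index parameter, which is the mechanism behind Lemma~\ref{lem:det}, where $a_m$ alone yields $\det L_m$. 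Concretely, I would try to find an ordering $\sigma$ of the parameters and exponents such that at a point like $\avec=(0,\dots,0,1)+\text{small}$ the Jacobian is triangular with nonzero diagonal to leading order. If that fails, the fallback is an induction on $m$. The matrix $L_m$ restricted to $a_m=0$ has a structure related to $L_{m-1}$, and differentiating $e_m=\det L_m=\kappa_m a_m^m$ in $a_m$ supplies the new row, giving a block-triangular Jacobian at points with $a_m\ne0$ small relative to the others. I expect that bookkeeping, especially controlling the entries contributed by $a_m$ to $e_1,\dots,e_{m-1}$, to be the hardest step.
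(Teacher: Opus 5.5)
As written, this is a plan rather than a proof, and it has a genuine gap.

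Your reduction to nonvanishing polynomials is sound. The quantities $a_m$, $\det D\mathcal E_m$ and the discriminant $\Delta$ of the characteristic polynomial of $L_m^2$ are polynomials, and if none vanishes identically their product is nonzero somewhere in the open orthant. But the two nontriviality claims are the entire content of the proposition, and you establish neither.

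For $\Delta$, the perturbative witness does not behave as you assert. At $(0,\dots,0,1)$ the eigenvectors are $(e_i\pm e_{m+1-i})/\sqrt2$. The matrix $E$ multiplying $a_{m-1}$ has entries $\sqrt{ij}$ on $i+j=m$, so $E_{i,m+1-i}=0$, and $E_{ii}\ne0$ only for $i=m/2$. The first-order shifts of $+\sqrt{i(m+1-i)}$ and $-\sqrt{i(m+1-i)}$ therefore both equal $\tfrac12(E_{ii}+E_{m+1-i,m+1-i})$. This is zero for every pair except the one containing $m/2$ when $m$ is even; for odd $m$ no pair splits at first order. Any splitting is a second-order effect (it does occur for $m=3$) that you would have to compute for every pair and every $m$. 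Appealing to ``a generic second coefficient'' does not replace that computation.

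For $\det D\mathcal E_m$ you list strategies without carrying any out. Your inductive fallback is the right one, but the step you expect to be hardest does not arise. By Lemma~\ref{lem:det}, $e_m^{(m)}=\det L_m=\kappa_m a_m^m$ depends on $a_m$ alone. So at every point the last row of $D\mathcal E_m$ is $(0,\dots,0,m\kappa_m a_m^{m-1})$, and
\[
 \det D\mathcal E_m(\avec)=m\kappa_m a_m^{m-1}\det\Bigl[\frac{\partial e_i^{(m)}}{\partial a_j}(\avec)\Bigr]_{i,j=1}^{m-1}.
\]
The entries $\partial e_i^{(m)}/\partial a_m$ never enter.

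The missing idea, which settles both claims at once, is the block identity $L_m(p',0)=L_{m-1}(p')\oplus0$. The only entries involving $a_m$ lie on $i+j=m+1$, and all other entries are exactly those of $L_{m-1}$.
\begin{itemize}
\item It gives $e_k^{(m)}(p',0)=e_k^{(m-1)}(p')$ for $k<m$. So the minor above equals $\det D\mathcal E_{m-1}(p')$ at $a_m=0$, and by induction it is not identically zero.
\item It shows that $L_m(p',0)^2$ has eigenvalues $\mu_j(p')^2$, $j<m$, together with a simple zero. Hence $\Delta(p',0)\ne0$ whenever $L_{m-1}(p')$ has nonzero eigenvalues with distinct absolute values. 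No perturbation of the antidiagonal point is needed.
\end{itemize}
With this identity your polynomial route closes by induction on $m$.

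The paper uses the same identity but keeps the induction pointwise. It takes $a_m>0$ small and uses continuity for the minor and for the first $m-1$ eigenvalues. For the last eigenvalue it uses $\det L_m\ne0$ together with smallness. Your formulation yields, at no extra cost, the openness and density of admissible directions that the paper records after the proof.
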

\begin{proof}
We induct on $m$. For $m=1$, $L_1(a_1)=(a_1)$, so any $a_1>0$ has the required properties. Suppose $p'\in\R^{m-1}$ is positive and satisfies the induction hypothesis. Setting $a_m=0$ gives
\[
 L_m(p',0)=L_{m-1}(p')\oplus0,
 \qquad e_k^{(m)}(p',0)=e_k^{(m-1)}(p')\quad(1\le k<m).
\]
The last characteristic coefficient is the determinant in \eqref{eq:det}. Expanding the Jacobian determinant along its last row gives
\begin{equation}\label{eq:jacinduction}
 \det D\mathcal E_m(\avec)
 =m\kappa_m a_m^{m-1}
 \det\left[\frac{\partial e_i^{(m)}}{\partial a_j}(\avec)\right]_{i,j=1}^{m-1}.
\end{equation}
At $(p',0)$, the minor equals $\det D\mathcal E_{m-1}(p')\ne0$. It remains nonzero for sufficiently small positive $a_m$, by continuity, and the prefactor in \eqref{eq:jacinduction} is then nonzero as well.

The first $m-1$ eigenvalues stay close to their simple nonzero limits and retain distinct absolute values. The last tends to zero as $a_m\to0$, but \eqref{eq:det} makes it nonzero for $a_m>0$. Taking $a_m$ smaller if necessary makes its magnitude smaller than that of every other eigenvalue, completing the induction. Since $e_k^{(m)}$ is homogeneous of degree $k$, $\det D\mathcal E_m(tp)=t^{m(m-1)/2}\det D\mathcal E_m(p)$ for $t>0$. Positive scaling also preserves the spectral conditions, so $v=p/|p|$ is admissible.
\end{proof}

The construction takes each new coefficient sufficiently small relative to the preceding ones. Admissibility is also open and dense in the positive orthant. The Jacobian is a nonzero polynomial. Nonzero eigenvalues with pairwise distinct absolute values correspond to positive simple eigenvalues of $L_m^2$; the excluded cases are the zeros of its determinant and discriminant. Proposition~\ref{prop:existence} shows that neither polynomial vanishes identically. Each admissible direction has its own cone and constants.

\begin{lemma}[Invertible singular-value differential]\label{lem:ds}
At an admissible $v$, the ordered singular-value map
\[
 S_0(\avec)=(s_1^0(\avec),\ldots,s_m^0(\avec)),\qquad
 s_j^0(\avec)=s_j(L_m(\avec))
\]
is smooth near $v$, with invertible differential $A=DS_0(v)$. On a narrow cone about $v$, $S_0$ is homogeneous of degree one and $DS_0$ of degree zero.
\end{lemma}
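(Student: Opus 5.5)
The plan is to reduce everything to the signed eigenvalues of the real symmetric matrix $L_m(\avec)$, which depends linearly on $\avec$. Then condition (i) of Definition~\ref{def:good} is transferred to the signed eigenvalue map through the elementary symmetric functions.

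\emph{Step 1 (smoothness).} At an admissible $v$ the eigenvalues $\mu_1(v),\dots,\mu_m(v)$ of $L_m(v)$ are simple by condition (iii). Each is a simple root of $\det(zI-L_m(\avec))$, whose coefficients are polynomials in $\avec$. The implicit function theorem, using that the $z$-derivative of the characteristic polynomial is nonzero at a simple root, therefore gives real-analytic functions $\mu_j(\avec)$ on a neighbourhood $U$ of $v$. These are the eigenvalues of $L_m(\avec)$ there. Shrinking $U$, continuity preserves two facts: each $\mu_j$ keeps its nonzero sign $\sigma_j\in\{\pm1\}$, and the absolute values stay pairwise distinct and keep the order they have at $v$. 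Since $L_m$ is symmetric, its singular values are the $|\mu_j|$. Hence on $U$ there is one fixed permutation matrix $\Pi$ with
\[
 S_0(\avec)=\Pi\,\diag(\sigma_1,\dots,\sigma_m)\,M(\avec),\qquad M=(\mu_1,\dots,\mu_m).
\]
So $S_0$ is real analytic on $U$. The only delicate point is that the descending ordering of the singular values involves no crossings on $U$, and this is exactly what condition (iii) buys.

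\emph{Step 2 (invertibility).} By \eqref{eq:E}, $\mathcal E_m(\avec)=\sigma(M(\avec))$, where $\sigma=(\sigma_1^{\rm el},\dots,\sigma_m^{\rm el})$ is the vector of elementary symmetric polynomials. The chain rule gives $D\mathcal E_m(v)=D\sigma(M(v))\,DM(v)$. The Jacobian of the elementary symmetric map is the classical Vandermonde-type determinant,
\[
 \det D\sigma(\mu)=\pm\prod_{i<j}(\mu_i-\mu_j).
\]
It is nonzero at $M(v)$ because the eigenvalues are distinct. The product formula is not actually needed: condition (i) gives $\det D\mathcal E_m(v)\ne0$, which already forces $\det DM(v)\ne0$ through the factorisation. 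Then $A=DS_0(v)=\Pi\,\diag(\sigma_j)\,DM(v)$ is invertible.

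\emph{Step 3 (homogeneity on a cone).} Since $L_m(t\avec)=tL_m(\avec)$, we have $s_j^0(t\avec)=t\,s_j^0(\avec)$ for every $t>0$. Choose $\varepsilon$ so that $\{v+u:u\perp v,\ |u|<\varepsilon\}\subset U$. Every point $tv+u$ of the untruncated cone can be written as $t(v+u/t)$ with $|u/t|<\varepsilon$. It is thus a positive multiple of a point of $U$. Defining $S_0$ there by homogeneity is consistent with the descending-order definition, so $S_0$ is smooth and homogeneous of degree one on the whole cone. Differentiating $S_0(t\avec)=tS_0(\avec)$ in $\avec$ gives $DS_0(t\avec)=DS_0(\avec)$, which is degree-zero homogeneity. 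Consequently $DS_0$ on the narrow cone takes the same values as on the cross-section $v+u$, and these are close to $A$ once $\varepsilon$ is small.

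I expect the main obstacle to be Step 1, specifically justifying that a single fixed permutation and fixed signs describe the ordered singular values on all of $U$. This is handled by shrinking $U$ using conditions (ii) and (iii). The remaining steps are the chain rule and scaling.
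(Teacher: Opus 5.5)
Your proposal is correct and follows essentially the same route as the paper. It uses smooth labels for the simple signed eigenvalues, the chain rule through the elementary symmetric functions to transfer $\det D\mathcal E_m(v)\ne0$ to the eigenvalue map, a fixed signed permutation to pass to ordered singular values, and homogeneity from linearity of $L_m$. Your remark that the Vandermonde factor is not needed is also right, since $\det D\mathcal E_m(v)=\det D\sigma(M(v))\,\det DM(v)\ne0$ already forces $\det DM(v)\ne0$; the paper's check that this factor is nonzero matters only for the converse implication.
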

\begin{proof}
Choose smooth local labels $\mu_j(\avec)$ for the simple signed eigenvalues. The Jacobian determinant of their elementary symmetric functions is, up to sign, $\prod_{i<j}(\mu_i-\mu_j)$. To verify that it is nonzero, differentiate $\prod_j(z-\mu_j)$ with respect to the roots. If a linear combination of the resulting degree-$(m-1)$ polynomials vanishes, evaluation at each distinct root makes its corresponding coefficient zero. Since $D\mathcal E_m(v)$ is invertible, the chain rule implies that $D(\mu_1,\ldots,\mu_m)(v)$ is invertible.

Near $v$, the eigenvalues stay nonzero with distinct absolute values. Passing to ordered singular values is therefore a fixed signed permutation and preserves invertibility of the differential. Linearity of $L_m$ gives $S_0(t\avec)=tS_0(\avec)$ for $t>0$; differentiating gives $DS_0(t\avec)=DS_0(\avec)$.
\end{proof}

\subsection{Differentiated spectral estimates}\label{sec:perturbation}
Near the disk, the linearised eigenvalue gaps decay linearly in the parameters, while the Grunsky remainder is quadratic. This difference of orders also controls derivatives of the leading spectral branches. Analytic perturbation theory supplies smooth local eigenvectors for isolated simple eigenvalues; see \cite[Chapter VII, Section 1]{Kato}. We now establish estimates that remain uniform as the parameters tend to zero within the cone.

Fix an admissible $v$ and a small closed angular neighbourhood on which the conditions of Definition~\ref{def:good} persist. Compactness and homogeneity give $\gamma>0$ such that, on the corresponding cone, with $r=|\avec|$,
\begin{equation}\label{eq:gaps}
 |\mu_j|\ge\gamma r,\qquad
 |\mu_i-\mu_j|\ge\gamma r,\qquad
 \big||\mu_i|-|\mu_j|\big|\ge\gamma r\quad(i\ne j).
\end{equation}
Here $\mu_j$ are the nonzero eigenvalues of $\widetilde L_m$, whose remaining spectral value is zero.

\begin{lemma}[Eigenvalue and derivative estimates]\label{lem:perturb}
On a sufficiently small such cone, the first $m$ singular values of $B$ are simple. They are the absolute values of eigenvalues $\beta_j$ perturbing the $m$ nonzero eigenvalues $\mu_j$ of $\widetilde L_m$. The map
\[
 S(\avec)=(s_1(B(\avec)),\ldots,s_m(B(\avec)))
\]
satisfies the uniform estimates
\begin{equation}\label{eq:Serror}
 |S(\avec)-S_0(\avec)|\le Cr^2,\qquad
 \norm{DS(\avec)-DS_0(\avec)}\le Cr.
\end{equation}
The remaining singular values obey $s_{m+1}(B)\le Cr^2$.
\end{lemma}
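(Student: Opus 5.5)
The plan is to compare $B(\avec)=\widetilde L_m(\avec)+R(\avec)$ with $\widetilde L_m(\avec)$ by perturbation theory after rescaling by $r=|\avec|$. The rescaled linear part is $\widetilde L_m(\avec)/r=\widetilde L_m(\avec/r)$, and $\avec/r$ ranges over a compact angular neighbourhood of $v$. On that neighbourhood the eigenvalues $\mu_j/r$ are simple and nonzero, separated from each other and from the eigenvalue $0$ by at least $\gamma$, and their absolute values are separated by \eqref{eq:gaps}. The rescaled remainder obeys $\norm{R/r}_{\op}\le\norm{R/r}_{\HS}\le Cr$ by Lemma~\ref{lem:analytic}.

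\emph{Eigenvalue location.} Weyl's inequality for self-adjoint compact operators puts every eigenvalue of $B/r$ within $Cr$ of the spectrum of $\widetilde L_m/r$. Choose circles of radius $\gamma/3$ about each $\mu_j/r$, and a disc of radius $\gamma/3$ about $0$. For small $r$, the Riesz projections are continuous in the perturbation, so each circle about $\mu_j/r$ encloses exactly one simple eigenvalue $\beta_j/r$, with $|\beta_j-\mu_j|\le Cr^2$. All other eigenvalues lie within $Cr^2$ of $0$, which gives $s_{m+1}(B)\le Cr^2$. The absolute values $|\beta_j|\ge\gamma r-Cr^2$ are pairwise separated by about $\gamma r$ and exceed $Cr^2$. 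Hence the first $m$ singular values are simple, equal $|\beta_j|$ in the fixed order inherited from $S_0$, and $|S-S_0|\le Cr^2$. The sign of each $\beta_j$ matches that of $\mu_j$ and is locally constant.

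\emph{Derivative estimate (the main obstacle).} The gaps are only of size $\gamma r$, so naive perturbation bounds carry a factor $1/r$. I would use the Hellmann--Feynman formula. Let $P_j(\avec)$ be the rank-one spectral projection of $B$ for $\beta_j$, with unit eigenvector $\phi_j$, and let $P^0_j$, $\phi^0_j$ be the corresponding data for $\widetilde L_m$. Then
\[
 \partial_k\beta_j=\langle\partial_kB\,\phi_j,\phi_j\rangle=\Tr\big(P_j\,\partial_kB\big),
 \qquad
 \partial_k\mu_j=\Tr\big(P^0_j\,\partial_k\widetilde L_m\big).
\]
Subtracting these gives
\[
 \partial_k\beta_j-\partial_k\mu_j=\Tr\big((P_j-P^0_j)\,\partial_k\widetilde L_m\big)+\Tr\big(P_j\,\partial_kR\big).
\]
The second term is bounded by $\norm{DR}\le Cr$ from \eqref{eq:remainder}. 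For the first term, $\partial_k\widetilde L_m$ is a fixed bounded matrix independent of $\avec$. The projection difference is scale-invariant, since $P_j$ is the Riesz projection of $B/r$. Writing
\[
 P_j-P_j^0=-\frac{1}{2\pi i}\oint(z-B/r)^{-1}(R/r)(z-\widetilde L_m/r)^{-1}\dd z
\]
over the circle of radius $\gamma/3$, with resolvents bounded by $C/\gamma$ there, gives $\norm{P_j-P^0_j}\le C\norm{R}/r\le Cr$. So $|\partial_k\beta_j-\partial_k\mu_j|\le Cr$. Passing to absolute values uses the fixed signs, and this gives $\norm{DS-DS_0}\le Cr$.

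All constants are uniform because they depend only on $\gamma$, the constants of Lemma~\ref{lem:analytic}, and the compact angular neighbourhood. Differentiability of $\beta_j$ itself follows from analytic perturbation theory for isolated simple eigenvalues of the analytic family $B(\avec)$ \cite{Kato}.
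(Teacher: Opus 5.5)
Your proposal is correct and follows essentially the paper's route: you locate the $m$ simple branches using contours of radius comparable to $\gamma r$ together with $\|R\|_{\op}\le Cr^2$, and you compare Hellmann--Feynman derivatives using an $O(r)$ bound on the eigenvector or projection discrepancy, which comes from an $O(r^2)$ perturbation divided by an $O(r)$ gap. The differences are cosmetic. You rescale by $r$ and bound $P_j-P_j^0$ through the second resolvent identity on the contour (the overall sign there should be $+$, which does not matter). The paper instead projects $(\widetilde L_m-\beta_j)u_j=-Ru_j$ onto $v_j^\perp$ to obtain $\|u_j-v_j\|\le Cr$ directly.
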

\begin{proof}
Consider the circles $|z-\mu_j|=\gamma r/4$. Their distance from $\sigma(\widetilde L_m)$ is at least $\gamma r/4$. Shrink the parameter radius so that $\|R\|_{\op}\le\gamma r/8$. The resolvent Neumann series converges on every circle along the path $\widetilde L_m+\tau R$, $0\le\tau\le1$, so the corresponding Riesz projections keep rank one. Each circle thus encloses a unique simple eigenvalue $\beta_j$. The self-adjoint perturbation bound gives $|\beta_j-\mu_j|\le\norm{R}_{\op}\le Cr^2$ and bounds the displacement of the remaining spectrum by $\norm{R}_{\op}$.

Since $\widetilde L_m$ has rank $m$, the min--max characterisation gives $s_{m+1}(B)\le\norm{R}_{\op}\le Cr^2$. Each of the $m$ branches constructed above has magnitude at least $\gamma r-Cr^2$. Their absolute values are therefore exactly the largest $m$ singular values. The gaps in \eqref{eq:gaps} preserve the signs and absolute ordering of the eigenvalue branches.

To estimate derivatives, choose unit eigenvectors $u_j$ and $v_j$ of $B$ and $\widetilde L_m$ for the corresponding eigenvalues, and set $P_j=\langle\,\cdot\,,v_j\rangle v_j$. Project
\[
 (\widetilde L_m-\beta_j)u_j=-Ru_j
\]
onto $v_j^\perp$. By \eqref{eq:gaps}, the restriction of $\widetilde L_m-\beta_j$ to this subspace is invertible with inverse norm at most $C/r$. Hence
\[
 \norm{(I-P_j)u_j}\le Cr^{-1}\norm{R}_{\op}\le Cr.
\]
Choose the phase so that $\langle u_j,v_j\rangle\ge0$. As both vectors have unit norm, this gives $\norm{u_j-v_j}\le Cr$.

For each coordinate $a_k$, the simple-eigenvalue derivative formula gives
\[
 \partial_{a_k}\beta_j=\langle u_j,(\partial_{a_k}B)u_j\rangle,
 \qquad
 \partial_{a_k}\mu_j=\langle v_j,(\partial_{a_k}\widetilde L_m)v_j\rangle.
\]
Subtract these identities and apply \eqref{eq:remainder}. The matrix $\partial_{a_k}\widetilde L_m$ is fixed, bounded and finite-rank, so
\[
 |\partial_{a_k}\beta_j-\partial_{a_k}\mu_j|
 \le\norm{\partial_{a_k}R}_{\op}
 +2\norm{\partial_{a_k}\widetilde L_m}_{\op}\norm{u_j-v_j}
 \le Cr.
\]
The signs and ordering are fixed, so the same bounds hold for the singular values. Combining the finitely many branch and coordinate estimates proves the derivative estimate in \eqref{eq:Serror}.
\end{proof}

\subsection{Proof of the inverse theorem}\label{sec:inverse}
Integrating the differential comparison along segments in the convex cone now gives the inverse Lipschitz estimate.
\begin{proof}[Proof of Theorem~\ref{thm:main}]
Proposition~\ref{prop:existence} gives a positive admissible direction in every dimension. Fix any admissible $v$, and set $A=DS_0(v)$ and $\alpha=\sigma_{\min}(A)>0$. By Lemma~\ref{lem:ds}, $DS_0$ is continuous in direction and homogeneous of degree zero. Choose the cone aperture so that
\[
 \norm{DS_0(\avec)-A}\le\alpha/4
\]
on the punctured cone. Then choose its length $\delta$ small enough for the geometric conclusions of Subsection~\ref{sec:geometry}, the spectral conclusions of Lemma~\ref{lem:perturb}, and the bound
\[
 \norm{DS(\avec)-DS_0(\avec)}\le\alpha/4.
\]
Thus $\norm{DS(\avec)-A}\le\alpha/2$ throughout the cone. By convexity, the segment $\bvec+\tau(\avec-\bvec)$ joining any two points of the cone stays in the cone and avoids zero. Integration along this segment gives
\[
 S(\avec)-S(\bvec)=A(\avec-\bvec)+e,\qquad
 |e|\le\frac\alpha2|\avec-\bvec|.
\]
Therefore
\begin{equation}\label{eq:globalS}
 \frac\alpha2|\avec-\bvec|
 \le|S(\avec)-S(\bvec)|
 \le(\norm A+\alpha/2)|\avec-\bvec|.
\end{equation}
Equation \eqref{eq:singular} yields \eqref{eq:spectralbilip} with lower constant $c=\alpha/4$. As $S$ and $S_0$ tend to zero at the origin, continuity extends the estimate to the vertex. The geometric argument and Lemma~\ref{lem:perturb} prove part (i). Equality of the first $m$ eigenvalues forces equality of the parameters, hence of the normalised maps and domains.
Finally, combining the geometric and coefficient bounds in Proposition~\ref{prop:geometry} with the inverse bound in \eqref{eq:spectralbilip} proves \eqref{eq:consequences} for each fixed $s\ge0$.
\end{proof}

\begin{corollary}[Prescribed analytic remainders]\label{cor:generaltail}
Fix $m\ge1$ and $0<r_*<r<1$. For real parameters, consider
\[
 \Psi_{\avec}(w)=w+\sum_{n=1}^m a_nw^{-n}
 +\mathcal T(\avec;w),\qquad
 \mathcal T(\avec;w)=\sum_{n>m}r_n(\avec)w^{-n},
\]
where, for some $\rho>0$, the prescribed remainder is jointly holomorphic for complex parameters $|\avec|<\rho$ and $|w|>r_*$. Assume its Laurent coefficients are real for real parameters and $\sup_{|w|=r}|\mathcal T(\avec;w)|\le C_T|\avec|^2$ for all $|\avec|<\rho$. Then the conclusions of Theorem~\ref{thm:main} hold about every admissible unit vector for $L_m$, after choosing a suitable cone aperture and radius. The complete coefficient sequence is $c_n(\avec)=a_n$ for $n\le m$ and $c_n(\avec)=r_n(\avec)$ for $n>m$. For each fixed $s\ge0$, the constants remain uniform down to the disk and may depend on $m,s$, the direction, the prescribed remainder and the cone.
\end{corollary}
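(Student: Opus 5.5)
The plan is to rerun the proof of Theorem~\ref{thm:main} and replace each use of the special tail $|\avec|^2F(w)$ by the general remainder $\mathcal T(\avec;w)$. Only three ingredients used the tail: the geometric conclusions of Subsection~\ref{sec:geometry}, Proposition~\ref{prop:geometry}, and Lemma~\ref{lem:analytic}. Lemma~\ref{lem:det}, Proposition~\ref{prop:existence} and Lemma~\ref{lem:ds} concern only $L_m$, which is unchanged. The spectral comparison in Lemma~\ref{lem:perturb} and the cone integration in Subsection~\ref{sec:inverse} use nothing beyond \eqref{eq:remainder}, \eqref{eq:block} and the admissibility of $v$. So it suffices to re-establish these three ingredients.

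\emph{Coefficient estimates.} Fix $r'$ with $r_*<r'<r$ and a smaller radius $\rho'<\rho$. The hypothesis gives $|\mathcal T|\le C_T|\avec|^2$ on $|w|=r$. By the maximum principle in $w^{-1}$ (the function is holomorphic at infinity and vanishes there), the same bound holds on $|w|\ge r$. Cauchy's formula on $|w|=r$ then gives $|r_n(\avec)|\le C_T|\avec|^2r^n$ for every $n>m$. This replaces $|\avec|^2|f_n|\le C_F|\avec|^2r_*^n$, at the cost of decay rate $r$ instead of $r_*$. For the derivatives, I apply Cauchy estimates in the complex parameters on a ball of radius $\rho-\rho'$. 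Here $|\avec|^2$ means $\sum|a_n|^2$ for complex $\avec$, and on that ball $|\avec'|\le 2|\avec|$ only when $|\avec|$ is comparable to the radius, so the bound must be handled at the right scale. The point needing care is to get $|\partial_{a_k}r_n(\avec)|\le C|\avec|r^n$ with the correct power of $|\avec|$. I would obtain it by using the second-order vanishing: $\mathcal T$ and its first parameter derivative vanish at $\avec=0$. This follows from the quadratic bound, since the holomorphic function $\epsilon\mapsto\mathcal T(\epsilon\avec;w)$ is $O(\epsilon^2)$. Hence $\mathcal T(\avec;w)=\sum_{k,l}a_ka_l\,G_{kl}(\avec;w)$ with $G_{kl}$ holomorphic and bounded on $|\avec|<\rho'$, $|w|\ge r'$, by Taylor's formula with integral remainder. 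Differentiating this representation gives first derivatives $O(|\avec|)$ and second derivatives $O(1)$, uniformly with geometric decay $r'^n$.

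\emph{Geometry and the Grunsky operator.} With these bounds, \eqref{eq:divided} holds with $\mathcal S_{\avec}=\sum_{n\le m}a_nT_n+\sum_{n>m}r_n(\avec)T_n$. The univalence condition \eqref{eq:univalent} then holds for small parameters. $C^k$ convergence to the circle and strict convexity follow as before. The support-function argument of Proposition~\ref{prop:geometry} needs only smoothness of $\Psi_{\avec}(e^{it})$ in $(\avec,t)$ with bounded second parameter derivatives, which the representation above supplies. This gives \eqref{eq:hausbilip}. For the coefficient bound, I replace $(|\avec|^2-|\bvec|^2)^2F_s^2$ by $\sum_{n>m}n^{2s}|r_n(\avec)-r_n(\bvec)|^2\le C_s r_0^2|\avec-\bvec|^2$. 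This comes from integrating the derivative bound $C|\avec|r'^n$ along the segment between the parameters. For Lemma~\ref{lem:analytic}, I choose a radius strictly between $r_*$ and $r'$ on which $|\mathcal S_{\avec}|<q<1$ for small complex $\avec$. The logarithm series \eqref{eq:logseries} converges normally, so $B(\avec)$ is a holomorphic Hilbert--Schmidt-valued map. Its linear part is still $\widetilde L_m$, since $\mathcal T$ contributes only at parameter order two. Taylor's theorem then gives \eqref{eq:remainder}.

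The main obstacle is this uniform derivative control of the remainder. The hypothesis is stated only as a sup bound of quadratic order, while \eqref{eq:remainder} requires $\norm{DR}\le C|\avec|$ uniformly down to the disk. Once that is extracted via the factorisation $\mathcal T=\sum a_ka_lG_{kl}$, the proof of Theorem~\ref{thm:main} applies verbatim. The constants depend on $m$, $s$, $v$, the cone aperture and, through $C_T,\rho,r$, the prescribed remainder.
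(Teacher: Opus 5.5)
Your proposal is correct and follows the paper's route. You derive $|r_n(\avec)|\le C|\avec|^2r^n$ and $\norm{Dr_n(\avec)}\le C|\avec|r'^n$, then rerun Lemma~\ref{lem:analytic}, Proposition~\ref{prop:geometry} and the segment-integrated coefficient bound with the same linear part $\widetilde L_m$, so the proof of Theorem~\ref{thm:main} applies verbatim; the only difference is that you get the derivative bound from the factorisation $\mathcal T=\sum a_ka_lG_{kl}$ rather than from direct Cauchy estimates in the parameters.

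One bookkeeping slip needs fixing. Your coefficients decay like $r'^n$, so the radius $\tilde r$ for Lemma~\ref{lem:analytic} should lie in $(r',1)$, as the paper's choice $r<r_1<1$ does, and not in $(r_*,r')$. On $(r_*,r')$ your termwise bound $\sum_{n>m}n\,C|\avec|^2r'^n\tilde r^{-n-1}$ diverges, unless you first redo the factorisation on a circle of radius below $\tilde r$.
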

\begin{proof}
Choose $r<r_1<1$. On a smaller parameter ball, Cauchy estimates in $w$ and the parameters give
\[
 |r_n(\avec)|\le C|\avec|^2r^n,\qquad
 \norm{Dr_n(\avec)}_{\op}\le C|\avec|r^n.
\]
The tail contribution to the divided difference is therefore $O(|\avec|^2)$, and its parameter derivative is $O(|\avec|)$, uniformly for $|w|,|z|\ge r_1$. The logarithmic expansion then proves Lemma~\ref{lem:analytic} with the same linear part $\widetilde L_m$.

These coefficient bounds also give analytic, strictly convex boundaries on a small real parameter ball. The remainder and its first parameter derivative vanish at zero, so the support-function argument in Proposition~\ref{prop:geometry} has the same linearisation. For $|\avec|,|\bvec|\le r_0$,
\[
 \left(\sum_{n>m}n^{2s}|r_n(\avec)-r_n(\bvec)|^2\right)^{1/2}
 \le C_s r_0|\avec-\bvec|,
\]
by integration along the parameter segment and convergence of $\sum n^{2s}r^{2n}$. The geometric and weighted coefficient bounds used in the main proof therefore remain valid. Admissibility depends only on $L_m$, and the differentiated spectral argument requires exactly these remainder estimates. The proof of Theorem~\ref{thm:main} now gives all three conclusions.
\end{proof}

\begin{remark}[What provides global uniqueness]
The estimate $DS\approx A$ compares the differential everywhere in the cone with a single invertible matrix. Convexity turns this uniform comparison into a lower bound for every secant; pointwise invertibility alone does not give this conclusion. For $m=2$, the explicit inverse of the leading spectral map yields the larger sector in Subsection~\ref{sec:m2}.
\end{remark}

\subsection{Complete spectra and further stability estimates}\label{sec:spectralmetrics}\label{sec:forward}\label{sec:stability}
For the complete ordered positive spectrum, define
\[
 d_{\mathrm{spec},2}(\avec,\bvec)
 =\left(\sum_{j\ge1}|\lambda_j(\avec)-\lambda_j(\bvec)|^2\right)^{1/2},
 \quad
 d_{\mathrm{spec},\infty}(\avec,\bvec)
 =\sup_{j\ge1}|\lambda_j(\avec)-\lambda_j(\bvec)|.
\]
Lemma~\ref{lem:analytic} and \eqref{eq:singular} imply that $d_{\mathrm{spec},2}$ is finite. Since $|\Delta\Lambda_m|\le d_{\mathrm{spec},2}$ and $|\Delta\Lambda_m|\le\sqrt m\,d_{\mathrm{spec},\infty}$, the coefficient and geometric estimates of Theorem~\ref{thm:main} hold with either full spectral distance on the right. Both distances retain eigenvalue ordering and multiplicity.

A uniform Hilbert--Schmidt bound on the derivative of the Grunsky operator gives forward spectral stability throughout a sufficiently small parameter ball.

\begin{lemma}[Singular-value perturbation in Hilbert--Schmidt norm]\label{lem:mirsky}
For Hilbert--Schmidt operators $A,C$ on the same separable Hilbert space, with ordered singular-value sequences padded by zeros,
\[
 \left(\sum_{j\ge1}|s_j(A)-s_j(C)|^2\right)^{1/2}
 \le\norm{A-C}_{\HS}.
\]
\end{lemma}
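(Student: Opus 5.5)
This is Mirsky's inequality for Hilbert--Schmidt operators. I would prove it by reducing to finite matrices, applying the Hoffman--Wielandt inequality to a self-adjoint dilation, and passing to the limit.

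\emph{Finite-dimensional case.} Let $A,C$ be $n\times n$ matrices. Form the Hermitian dilations
\[
 \widehat A=\begin{pmatrix}0&A\\A^*&0\end{pmatrix},\qquad
 \widehat C=\begin{pmatrix}0&C\\C^*&0\end{pmatrix}.
\]
The eigenvalues of $\widehat A$ are $\pm s_j(A)$, in the same way that the block operator in \eqref{eq:block} produced pairs $\pm\mu$. The Hoffman--Wielandt inequality for Hermitian matrices says that the sum of squared differences of the nonincreasingly ordered eigenvalues is at most the squared Frobenius norm of the difference. Applied to the dilations, it gives
\[
 2\sum_j|s_j(A)-s_j(C)|^2\le\norm{\widehat A-\widehat C}_{\HS}^2=2\norm{A-C}_{\HS}^2.
\]
On the left, the positive eigenvalues pair with the positive ones and the negative with the negative. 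If Hoffman--Wielandt is not taken as known, I would prove it from the Birkhoff--von Neumann theorem. Diagonalise $\widehat A=U\Lambda U^*$ and $\widehat C=W M W^*$. Then $\Tr(\widehat A\widehat C)=\sum_{i,k}\lambda_i\mu_k|(U^*W)_{ik}|^2$, and the matrix $(|(U^*W)_{ik}|^2)$ is doubly stochastic. A linear functional on the doubly stochastic matrices is maximised at a permutation matrix, and the rearrangement inequality shows the identity permutation is optimal for ordered sequences. Expanding $\norm{\widehat A-\widehat C}_{\HS}^2$ then gives the bound.

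\emph{Passage to infinite dimensions.} Fix an orthonormal basis and let $\Pi_n$ be the projection onto its first $n$ vectors. Set $A_n=\Pi_nA\Pi_n$ and $C_n=\Pi_nC\Pi_n$. Then $\norm{A_n-A}_{\HS}\to0$, and likewise for $C_n$; this is the standard tail argument for Hilbert--Schmidt operators, using $\norm{A}_{\HS}^2=\sum_{i,j}|\langle Ae_j,e_i\rangle|^2$. The finite case applied to $A_n,C_n$ on $\operatorname{ran}\Pi_n$, padded by zeros, gives
\[
 \Big(\sum_j|s_j(A_n)-s_j(C_n)|^2\Big)^{1/2}\le\norm{\Pi_n(A-C)\Pi_n}_{\HS}\le\norm{A-C}_{\HS}.
\]

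\emph{The limit, which is the main obstacle.} I need $s_j(A_n)\to s_j(A)$ for every $j$, and this comes from the operator-norm Weyl bound $|s_j(A_n)-s_j(A)|\le\norm{A_n-A}_{\op}$, a consequence of the min--max characterisation. Summing termwise convergence needs some care, so I would truncate: for each fixed $J$, the finite sum $\sum_{j\le J}|s_j(A_n)-s_j(C_n)|^2$ is bounded by $\norm{A-C}_{\HS}^2$ and converges to $\sum_{j\le J}|s_j(A)-s_j(C)|^2$. Letting $J\to\infty$ completes the proof. No uniform tail control is needed, since every truncated sum is already bounded.
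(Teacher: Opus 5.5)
Your proof is correct and is essentially the paper's argument. The paper's finite-dimensional step expands $\norm{A-C}_{\HS}^2$ using von Neumann's trace inequality $\Re\Tr(A^*C)\le\sum_j s_j(A)s_j(C)$; your Birkhoff bound $\Tr(\widehat A\widehat C)\le\sum_i\lambda_i\mu_i$ for the dilations is exactly this inequality, since $\Tr(\widehat A\widehat C)=2\Re\Tr(A^*C)$ and $\sum_i\lambda_i\mu_i=2\sum_j s_j(A)s_j(C)$. Your passage to infinite dimensions is the same as the paper's: compressions, min--max convergence of each singular value, truncation at $J$, then $J\to\infty$.
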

\begin{proof}
In finite dimensions, the trace inequality
$\Re\Tr(A^*C)\le\sum_j s_j(A)s_j(C)$ gives
\[
 \sum_j(s_j(A)-s_j(C))^2
 \le\norm A_{\HS}^2+\norm C_{\HS}^2-2\Re\Tr(A^*C)
 =\norm{A-C}_{\HS}^2.
\]
This is the Hilbert--Schmidt case of the singular-value inequalities in \cite{Mirsky}. In infinite dimensions, choose increasing finite-rank orthogonal projections $P_N$ converging strongly to the identity. Then $P_NAP_N\to A$ and $P_NCP_N\to C$ in Hilbert--Schmidt norm. By the min--max characterisation, each singular value changes by at most the operator norm of the perturbation, so every fixed singular value converges. Apply the finite-dimensional inequality to the compressions, keep the first $J$ summands and let $N\to\infty$. Finally, let $J\to\infty$.
\end{proof}

\begin{theorem}[Forward stability of the complete NP spectrum]\label{thm:forward}
On a sufficiently small real parameter ball in the fixed family, there is a constant $C$ such that
\begin{equation}\label{eq:forward}
 d_{\mathrm{spec},2}(\avec,\bvec)
 \le\tfrac12\norm{B(\avec)-B(\bvec)}_{\HS}
 \le C|\avec-\bvec|
 \le C' d_H(K_{\avec},K_{\bvec}).
\end{equation}
The same bounds hold for $d_{\mathrm{spec},\infty}$. They also hold with the coefficient norm $\norm{c(\avec)-c(\bvec)}_{\ell_s^2}$ on the right for every $s\ge0$.
\end{theorem}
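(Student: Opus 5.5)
The plan is to prove the chain \eqref{eq:forward} one link at a time, working on a small real parameter ball where Lemma~\ref{lem:analytic} and Proposition~\ref{prop:geometry} both apply.

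\emph{First link.} By \eqref{eq:singular}, $2\lambda_j(\avec)=s_j(B(\avec))$ for every $j$, and both sides are padded by zeros. This holds on the whole ball, not just on the cone, because \eqref{eq:singular} comes from the block similarity \eqref{eq:block}. Hence
\[
 d_{\mathrm{spec},2}(\avec,\bvec)=\tfrac12\Big(\sum_j|s_j(B(\avec))-s_j(B(\bvec))|^2\Big)^{1/2}.
\]
Lemma~\ref{lem:mirsky} bounds the right-hand side by $\tfrac12\norm{B(\avec)-B(\bvec)}_{\HS}$. One point needs checking: the similarity in \eqref{eq:block} is only bounded, not unitary. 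The eigenvalues are still similarity invariants, so the ordered positive spectra coincide with half the ordered singular values of the self-adjoint $B$, and no norm comparison of the similarity is required.

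\emph{Second link.} Lemma~\ref{lem:analytic} shows that $\avec\mapsto B(\avec)$ is real analytic into the Hilbert--Schmidt class on a ball. Shrinking the ball, we may assume $\norm{DB(\avec)}_{\R^m\to\HS}\le C$ uniformly; this follows either from the Cauchy estimates in that proof, or from $DB=D\widetilde L_m+DR$ with $\norm{D\widetilde L_m}$ fixed and $\norm{DR}\le C|\avec|$. The ball is convex, so integrating along the segment from $\bvec$ to $\avec$ gives $\norm{B(\avec)-B(\bvec)}_{\HS}\le C|\avec-\bvec|$.

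\emph{Third link.} The bound $|\avec-\bvec|\le2\sqrt2\,d_H(K_{\avec},K_{\bvec})$ is the lower inequality in \eqref{eq:hausbilip}, so we take the radius at most $r_0$. The claims for $d_{\mathrm{spec},\infty}$ follow from $d_{\mathrm{spec},\infty}\le d_{\mathrm{spec},2}$. For the coefficient norm, $\norm{c(\avec)-c(\bvec)}_{\ell^2_s}\ge\big(\sum_{n\le m}n^{2s}|a_n-b_n|^2\big)^{1/2}\ge|\avec-\bvec|$, using the exact identity in the proof of Proposition~\ref{prop:geometry} and $n^{2s}\ge1$. Thus $C|\avec-\bvec|\le C\norm{c(\avec)-c(\bvec)}_{\ell^2_s}$ for every $s\ge0$.

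The main obstacle is mild: making sure \eqref{eq:singular} and the uniform Hilbert--Schmidt derivative bound hold on a full ball, including points away from any admissible cone and the disk itself. Lemma~\ref{lem:analytic} and \eqref{eq:block} are stated on a ball, so no gap or simplicity assumption is needed here. This is unlike the inverse direction, because Lemma~\ref{lem:mirsky} handles crossings and multiplicities automatically.
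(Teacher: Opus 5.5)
Your proposal is correct and follows the paper's proof link for link. The first inequality comes from Lemma~\ref{lem:mirsky} combined with \eqref{eq:singular}, and the second from integrating the uniform Hilbert--Schmidt bound on $DB$ along a segment. The third is the lower Hausdorff bound of Proposition~\ref{prop:geometry}, and the remaining claims follow from the elementary comparisons for $d_{\mathrm{spec},\infty}$ and the weighted coefficient norm. Your remark that the similarity in \eqref{eq:block} only needs to preserve eigenvalues with multiplicity is a harmless clarification that the paper leaves implicit.
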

\begin{proof}
Lemma~\ref{lem:mirsky} and \eqref{eq:singular} give the first inequality. For the second, integrate the uniform Hilbert--Schmidt bound on $DB$ from Lemma~\ref{lem:analytic} along a segment in the parameter ball. The Hausdorff lower bound in Proposition~\ref{prop:geometry} gives the third. The supremum distance is at most the $\ell^2$ distance, and $|\avec-\bvec|\le\norm{c(\avec)-c(\bvec)}_{\ell_s^2}$ for $s\ge0$, which proves the remaining assertions.
\end{proof}

\begin{corollary}[Equivalence of geometric and spectral distances]\label{cor:equivalence}
On a cone from Theorem~\ref{thm:main}, including the vertex, the following distances are mutually comparable for each fixed $s\ge0$:
\[
 |\avec-\bvec|,\quad
 \norm{c(\avec)-c(\bvec)}_{\ell_s^2},\quad
 d_H(K_{\avec},K_{\bvec}),\quad
 |\Lambda_m(\avec)-\Lambda_m(\bvec)|,\quad
 d_{\mathrm{spec},2},\quad d_{\mathrm{spec},\infty}.
\]
The comparison constants may depend on $m,s,F$ and the cone. The same conclusion holds on the sector in Theorem~\ref{thm:m2}.
\end{corollary}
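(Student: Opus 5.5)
The plan is to chain together estimates already proved, arranging them so that each of the six distances is bounded above and below by $|\avec-\bvec|$ with constants uniform on the cone, including its vertex. Since comparability is transitive, it suffices to show that each distance is both $\lesssim|\avec-\bvec|$ and $\gtrsim|\avec-\bvec|$.

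\emph{Upper bounds.} These are all forward estimates, valid on a small parameter ball; I shrink $\delta$ so that the cone lies inside that ball. Proposition~\ref{prop:geometry} gives $d_H(K_{\avec},K_{\bvec})\le C_+|\avec-\bvec|$ and the weighted coefficient bound \eqref{eq:coeffbound}. Theorem~\ref{thm:forward} gives $d_{\mathrm{spec},\infty}\le d_{\mathrm{spec},2}\le C|\avec-\bvec|$. Finally, $|\Lambda_m(\avec)-\Lambda_m(\bvec)|\le d_{\mathrm{spec},2}$.

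\emph{Lower bounds.} The trivial one is $|\avec-\bvec|\le\norm{c(\avec)-c(\bvec)}_{\ell_s^2}$ for $s\ge0$, since the first $m$ coefficients are the parameters themselves and $n^{2s}\ge1$. Next, the left inequality of \eqref{eq:hausbilip} gives $|\avec-\bvec|\le2\sqrt2\,d_H$. The only genuinely spectral input is the inverse bound \eqref{eq:spectralbilip} from Theorem~\ref{thm:main}, namely $c|\avec-\bvec|\le|\Lambda_m(\avec)-\Lambda_m(\bvec)|$ on $\mathcal C\cup\{0\}$. It then passes to the full spectral distances through $|\Delta\Lambda_m|\le d_{\mathrm{spec},2}$ and $|\Delta\Lambda_m|\le\sqrt m\,d_{\mathrm{spec},\infty}$, both of which hold because the ordering and multiplicity of eigenvalues are retained in these distances. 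At the vertex, the spectral quantities vanish and the inequalities persist by continuity, as already noted in the theorem.

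The main point to check is consistency of domains: the cone from Theorem~\ref{thm:main} must lie inside the ball of radius $r_0$ from Proposition~\ref{prop:geometry} and inside the forward-stability ball of Theorem~\ref{thm:forward}. This can be arranged by decreasing $\delta$, which leaves the other constants unaffected. For the sector statement with $m=2$, I would run the same chain, replacing \eqref{eq:spectralbilip} by the inverse Lipschitz bound of Theorem~\ref{thm:m2}, and assuming that sector also sits inside the forward and geometric balls.
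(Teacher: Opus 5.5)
Your proposal is correct and is essentially the paper's proof. Like the paper, you compare every distance with $|\avec-\bvec|$ using Proposition~\ref{prop:geometry}, Theorem~\ref{thm:forward}, the inverse bound of Theorem~\ref{thm:main} (or Theorem~\ref{thm:m2} on the sector), and the elementary inequalities $|\Delta\Lambda_m|\le d_{\mathrm{spec},2}$ and $|\Delta\Lambda_m|\le\sqrt m\,d_{\mathrm{spec},\infty}$; your remark that $\delta$ may be reduced so the cone or sector lies in the geometric and forward-stability balls makes explicit a point the paper leaves implicit.
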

\begin{proof}
Theorem~\ref{thm:main}, Proposition~\ref{prop:geometry}, and Theorem~\ref{thm:forward} give the comparisons, together with
$|\Delta\Lambda_m|\le d_{\mathrm{spec},2}$ and
$|\Delta\Lambda_m|\le\sqrt m\,d_{\mathrm{spec},\infty}$.
\end{proof}

The boundary comparison in Subsection~\ref{sec:geometry} adds $d_H(\partial D_{\avec},\partial D_{\bvec})$ to this list of equivalent distances.

\begin{center}
\small
\begin{tabular}{p{0.25\textwidth}p{0.37\textwidth}p{0.27\textwidth}}
\toprule
Type of bound & Estimate & Parameter restriction\\
\midrule
Coefficients from geometry & $\norm{\Delta c}_{\ell_s^2}\le C_s d_H$ & Small parameter ball\\[4pt]
Forward spectral stability & $d_{\mathrm{spec},2}\le C|\Delta\avec|\le C'd_H$ & Small parameter ball\\[4pt]
Inverse spectral stability & $|\Delta\avec|+d_H\le C|\Delta\Lambda_m|$ & Admissible cone; sector for $m=2$\\
\bottomrule
\end{tabular}
\end{center}
The geometric estimates and forward spectral stability hold throughout a small parameter ball. The inverse recovery result uses the nondegeneracy and angular restriction of Theorem~\ref{thm:main}.

\begin{remark}[Noisy spectral data]
If measured data $y\in\R^m$ satisfy $|y-\Lambda_m(\avec)|\le\epsilon_{\rm data}$ and a fitted parameter $\widehat{\avec}$ in the same cone satisfies $|y-\Lambda_m(\widehat{\avec})|\le\epsilon_{\rm fit}$, then
\[
 |\widehat{\avec}-\avec|\le c^{-1}(\epsilon_{\rm data}+\epsilon_{\rm fit}).
\]
Both the true and fitted parameters must lie in the same prescribed cone. The estimate then bounds recovery error by the sum of the measurement and fitting errors.
\end{remark}

\begin{remark}[Even spectral moments]\label{rem:evenmoments}
The leading eigenvalues vary to first order near the disk. Their even moments vanish to higher order: for $I_k=\sum_j(2\lambda_j)^{2k}$, the bound $B(tv)=O_{\HS}(t)$ gives $I_k(tv)=O(t^{2k})$. A fixed finite list of these moments thus changes by $O(t^2)$ between $tv$ and $2tv$, while the parameter distance is $t|v|$ and every weighted coefficient distance is at least $t|v|$. Hence no such moment list, with Euclidean distance, has a uniform Lipschitz inverse down to the disk. The vector of leading eigenvalues has this property by \eqref{eq:spectralbilip}.
\end{remark}

\section{Two and three Laurent coefficients}\label{sec:examples}
The leading spectral maps can be inverted explicitly for two and three coefficients. The formulas give a full positive sector in dimension two and a cone about an explicit direction in dimension three.

\subsection{Two coefficients}\label{sec:m2}
Take the two-coefficient family
\[
 \Psi_{a,b}(w)=w+a/w+b/w^2+(a^2+b^2)F(w),\qquad F(w)=O(w^{-3}).
\]
The leading matrix, its characteristic coefficients and their Jacobian are
\[
 L_2=\begin{pmatrix}a&\sqrt2b\\\sqrt2b&0\end{pmatrix},\quad
 e_1=a,\quad e_2=-2b^2,\quad
 \det D(e_1,e_2)=-4b.
\]
For $a,b>0$, let $d=\sqrt{a^2+8b^2}$. The signed eigenvalues and singular values are
\[
 \mu_+=(a+d)/2,\quad \mu_-=(a-d)/2,
 \qquad s_1^0=(d+a)/2,\quad s_2^0=(d-a)/2.
\]
The eigenvalues are nonzero, with absolute-value gap $s_1^0-s_2^0=a>0$. The Jacobian is also nonzero, so every positive direction is admissible.

\begin{theorem}[Uniform sector theorem for $m=2$]\label{thm:m2}
Fix $0<\eta<1$ and a tail $F$. For some $\delta>0$, on
\[
 \mathcal Q_{\eta,\delta}
 =\{(a,b):0<a,b<\delta,\ \eta\le b/a\le\eta^{-1}\}
\]
the map $(a,b)\mapsto(\lambda_1,\lambda_2)$ is globally bi-Lipschitz, with constants uniform down to the disk. The geometric and weighted coefficient estimates of Theorem~\ref{thm:main} hold on the same sector.
\end{theorem}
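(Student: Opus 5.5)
The sector $\mathcal Q_{\eta,\delta}$ is not narrow, so the single-matrix comparison used for Theorem~\ref{thm:main} does not apply directly. Instead we invert the leading map $S_0$ explicitly and compose with a near-identity correction.

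From the formulas above, $s_1^0-s_2^0=a$ and $s_1^0s_2^0=(d^2-a^2)/4=2b^2$. On the closed angular set $\eta\le b/a\le\eta^{-1}$, the quantity $d$ is comparable to $a$ and to $b$. The explicit inverse is
\[
 a=s_1^0-s_2^0,\qquad b=\sqrt{s_1^0s_2^0/2},
\]
and its Jacobian is homogeneous of degree zero. Hence $S_0$ is a bijection from the full open sector $\{\eta<b/a<\eta^{-1}\}$ onto its image $\Sigma$. The image is itself a sector in the $(s_1,s_2)$-plane, since $S_0$ is homogeneous of degree one and maps rays to rays. The inverse is smooth with $\norm{DS_0^{-1}}\le C_\eta$ uniformly on $\Sigma$.

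The key step is to apply Lemma~\ref{lem:perturb} on the whole sector rather than a narrow cone. Its proof only used the gap bounds \eqref{eq:gaps}. These hold on the closed angular set by compactness: every direction there is admissible, with $|\mu_\pm|$, $|\mu_+-\mu_-|=d$ and $\big||\mu_+|-|\mu_-|\big|=a$ all bounded below by $\gamma r$. We therefore get $|S-S_0|\le Cr^2$ and $\norm{DS-DS_0}\le Cr$ on $\mathcal Q_{\eta,\delta}$. Set $\Phi=S_0^{-1}\circ S$. Then $\norm{D\Phi-I}\le C_\eta Cr\le 1/4$ for $\delta$ small.

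The remaining difficulty is that the segment between two points of $\mathcal Q$ stays in the convex set $\mathcal Q$, but its image under $S$ need not stay in $\Sigma$. So we integrate in parameter space. Along $\bvec+\tau(\avec-\bvec)$, which lies in $\mathcal Q$ because $\mathcal Q$ is an intersection of half-planes and hence convex, write
\[
 S(\avec)-S(\bvec)=\int_0^1 DS\,(\avec-\bvec)\dd\tau,
 \qquad DS=DS_0+E,\quad \norm E\le Cr.
\]
The difficulty is that $DS_0$ varies along the segment, so we need a lower bound on $\big|\int_0^1 DS_0\dd\tau\,(\avec-\bvec)\big|$. By the exact identity
\[
 \int_0^1DS_0\dd\tau\,(\avec-\bvec)=S_0(\avec)-S_0(\bvec),
\]
this equals $|S_0(\avec)-S_0(\bvec)|$. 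Integrating $DS_0^{-1}$ along a path in $\Sigma$, which is convex when its opening is below $\pi$, bounds this below by $C_\eta^{-1}|\avec-\bvec|$. The opening condition holds because $S_0$ maps the sector into the positive quadrant. The error term is at most $C\delta|\avec-\bvec|$. Taking $\delta$ small gives
\[
 |S(\avec)-S(\bvec)|\ge(2C_\eta)^{-1}|\avec-\bvec|.
\]
The upper bound follows from $\norm{DS}\le C$.

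Extension to the vertex follows by continuity, as in the proof of Theorem~\ref{thm:main}. Combining with \eqref{eq:singular} gives the bi-Lipschitz estimate for $(\lambda_1,\lambda_2)$. Proposition~\ref{prop:geometry} then transfers it to the geometric and weighted coefficient estimates. I expect the main obstacle to be checking that the constant in the global inverse bound for $S_0$ on $\Sigma$ is uniform. It should follow from degree-one homogeneity and compactness of the set of directions, so I would verify the lower bound for $|S_0(\avec)-S_0(\bvec)|$ directly from the explicit inverse formulas.
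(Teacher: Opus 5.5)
Your proposal is correct and matches the paper's proof of Theorem~\ref{thm:m2} essentially step for step. Compactness of the sector's directions gives the gap bounds and Lemma~\ref{lem:perturb} uniformly on the whole sector. The explicit inverse $a=x-y$, $b=\sqrt{xy/2}$ has bounded differential on a convex wedge, which gives $|p-q|\le K_\eta|S_0(p)-S_0(q)|$. The $O(\delta)$ Lipschitz bound on $S-S_0$ along parameter segments is then absorbed by taking $\delta$ small. The auxiliary map $\Phi=S_0^{-1}\circ S$ is unnecessary, and it would need $S$ to land in the domain of $S_0^{-1}$; your final argument rightly bypasses it, as the paper does.
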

\begin{proof}
The directions in this sector form a compact subset of the positive unit quadrant. Thus \eqref{eq:gaps} and Lemma~\ref{lem:perturb} hold uniformly. To compare two arbitrary points in the sector, use the inverse of the leading spectral map:
\begin{equation}\label{eq:m2inverse}
 a=x-y,\qquad b=\sqrt{xy/2},\qquad (x,y)=(s_1^0,s_2^0).
\end{equation}
The sector condition implies
\[
 1\le x/y\le M_\eta,
 \qquad M_\eta=\frac{\sqrt{1+8\eta^2}+1}{\sqrt{1+8\eta^2}-1}.
\]
The wedge $\{y>0,\ y\le x\le M_\eta y\}$ is convex, and the derivatives of the second component in \eqref{eq:m2inverse} are $(2\sqrt2)^{-1}\sqrt{y/x}$ and $(2\sqrt2)^{-1}\sqrt{x/y}$. Both components have uniformly bounded derivatives on this wedge. Integration along a segment gives a constant $K_\eta$ such that
\[
 |p-q|\le K_\eta|S_0(p)-S_0(q)|\qquad(p,q\in\mathcal Q_{\eta,\delta}).
\]
Let $E=S-S_0$. The derivative estimate \eqref{eq:Serror}, integrated on the convex parameter region, gives $|E(p)-E(q)|\le C\delta|p-q|$. Hence
\[
 |S(p)-S(q)|\ge(K_\eta^{-1}-C\delta)|p-q|.
\]
Choose $\delta$ so that $C\delta\le(2K_\eta)^{-1}$. This gives the lower bound; the uniform bounds on $DS_0$ and \eqref{eq:Serror} give the upper bound. Equation \eqref{eq:singular} transfers both bounds to the NP eigenvalues, with a factor of two. Proposition~\ref{prop:geometry} gives the geometric and coefficient estimates.
\end{proof}

\subsubsection{Leading reconstruction and an infinite-tail example}
The explicit inverse suggests the reconstruction formulas
\begin{equation}\label{eq:m2approx}
 a_{\rm lead}=2(\lambda_1-\lambda_2),\qquad
 b_{\rm lead}=\sqrt{2\lambda_1\lambda_2}.
\end{equation}
They satisfy $(a_{\rm lead},b_{\rm lead})=(a,b)+O(a^2+b^2)$ uniformly on the sector. Indeed, $S-S_0=O(|p|^2)$, and the inverse \eqref{eq:m2inverse} has a bounded derivative on a slightly enlarged positive wedge containing both spectral vectors. Thus the leading reconstruction is accurate to second order.

For an example with infinitely many coefficients, take
\[
 F_2(w)=e^{1/w}-1-w^{-1}-\tfrac12w^{-2}
 =\sum_{n\ge3}\frac{w^{-n}}{n!}.
\]
On the sector $1/2\le b/a\le2$, the sufficient condition \eqref{eq:univalent} becomes
\[
 a+2b+(e-2)(a^2+b^2)<1.
\]
The inverse theorem may require a smaller parameter bound. The weighted coefficient estimate has $F_s^2=\sum_{n\ge3}n^{2s}/(n!)^2$. Figure~\ref{fig:m2} shows domains in this family.

\begin{figure}[htbp]
\revisionfloat
\centering
\includegraphics[width=\textwidth]{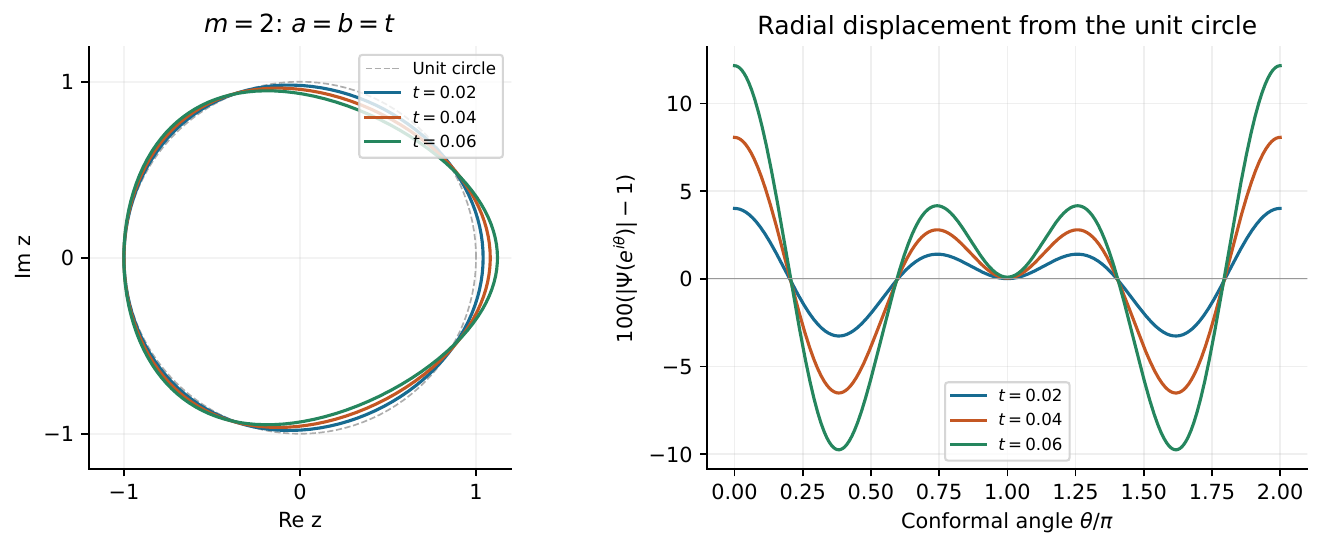}
\caption{Two-coefficient domains with the exponential tail $F_2$ and $a=b=t$, for $t=0.02,0.04,0.06$. Left: boundaries and the unit circle. Right: $100(|\Psi(e^{i\theta})|-1)$ versus conformal angle. All curves shown are strictly convex by Subsection~\ref{sec:exampleconvexity}; the inverse theorem holds on a sufficiently small interval of the ray.}\label{fig:m2}
\end{figure}

\FloatBarrier
\subsection{Three coefficients}\label{sec:m3}
Now take three coefficients:
\[
 \Psi_{a,b,c}(w)=w+a/w+b/w^2+c/w^3+(a^2+b^2+c^2)F(w),
 \qquad F(w)=O(w^{-4}).
\]
Its leading matrix is
\begin{equation}\label{eq:L3}
 L_3(a,b,c)=\begin{pmatrix}
 a&\sqrt2b&\sqrt3c\\\sqrt2b&2c&0\\\sqrt3c&0&0
 \end{pmatrix}.
\end{equation}
Let $T=\Tr L_3$, $Q=\Tr L_3^2$, and $\Delta_3=\det L_3$. Then
\begin{equation}\label{eq:TQD}
 T=a+2c,\qquad Q=a^2+4b^2+10c^2,\qquad \Delta_3=-6c^3.
\end{equation}
The Jacobian is
\begin{equation}\label{eq:jac3}
 \det\frac{\partial(T,Q,\Delta_3)}{\partial(a,b,c)}
 =\det\begin{pmatrix}1&0&2\\2a&8b&20c\\0&0&-18c^2\end{pmatrix}
 =-144bc^2.
\end{equation}
The change of variables $(e_1,e_2,e_3)=(T,(T^2-Q)/2,\Delta_3)$ has Jacobian determinant $-1/2$. Hence $\det D\mathcal E_3=72bc^2$, in agreement with the general induction.

At $(a,b,c)=(1,1,1)$ the characteristic polynomial is
\[
 p_3(z)=z^3-3z^2-3z+6.
\]
Evaluate it at six rational points:
\[
 \begin{array}{c|rrrrrr}
 z&-8/5&-3/2&1&13/10&3&7/2\\\hline
 p_3(z)&-122/125&3/8&1&-773/1000&-3&13/8
 \end{array}
\]
The sign changes give a root in each of $(-8/5,-3/2)$, $(1,13/10)$, and $(3,7/2)$. These are all three roots and are simple, nonzero and distinct in absolute value. By \eqref{eq:jac3}, the direction $v=(1,1,1)/\sqrt3$ is therefore admissible.

\begin{corollary}[Three-eigenvalue recovery]\label{cor:m3}
Fix a real exponentially decaying tail beginning at $w^{-4}$. For some $\varepsilon,\delta>0$, on
\[
 \mathcal C\big((1,1,1)/\sqrt3,\varepsilon,\delta\big)
\]
the first three positive reduced NP eigenvalues determine $(a,b,c)$ uniquely, and
\[
 |(a,b,c)-(\widetilde a,\widetilde b,\widetilde c)|
 \le C\left(\sum_{j=1}^3|\lambda_j-\widetilde\lambda_j|^2\right)^{1/2}.
\]
The geometric and weighted coefficient estimates of Theorem~\ref{thm:main} also hold, uniformly down to the disk.
\end{corollary}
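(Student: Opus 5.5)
Corollary~\ref{cor:m3} is the case $m=3$ of Theorem~\ref{thm:main}, applied with the admissible direction $v=(1,1,1)/\sqrt3$ and the given tail. The plan has three short steps.

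\emph{Step 1: admissibility of $v$.} I would verify the three conditions of Definition~\ref{def:good}.
\begin{itemize}
\item Condition (i): by \eqref{eq:jac3} and the change of variables from $(T,Q,\Delta_3)$ to $\mathcal E_3$, we have $\det D\mathcal E_3=72bc^2$. This equals $72$ at $(1,1,1)$. By homogeneity, $\det D\mathcal E_3(v)=3^{-3/2}\cdot72\ne0$.
\item Condition (ii): the sign table for $p_3(z)=z^3-3z^2-3z+6$ locates one root in each of $(-8/5,-3/2)$, $(1,13/10)$ and $(3,7/2)$. A cubic has at most three roots, so these are all of them. They are simple and nonzero, since none of the three intervals contains $0$.
\item Condition (iii): the intervals of absolute values, $(3/2,8/5)$, $(1,13/10)$ and $(3,7/2)$, are pairwise disjoint. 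So the absolute values are distinct.
\end{itemize}
Scaling by $1/\sqrt3$ multiplies the eigenvalues by a positive constant. This preserves conditions (ii) and (iii).

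\emph{Step 2: the tail hypothesis.} The family $\Psi_{a,b,c}$ has the form \eqref{eq:family} with $m=3$. Here $|\avec|^2=a^2+b^2+c^2$ and $F(w)=\sum_{n\ge4}f_nw^{-n}$. An exponentially decaying real tail beginning at $w^{-4}$ satisfies \eqref{eq:tail}.

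\emph{Step 3: conclusion.} Theorem~\ref{thm:main} now applies and gives $\varepsilon,\delta>0$ with the following properties.
\begin{itemize}
\item By part (i), the first three positive eigenvalues are simple.
\item By part (ii), the lower bound in \eqref{eq:spectralbilip} gives $c|\avec-\widetilde\avec|\le|\Lambda_3(\avec)-\Lambda_3(\widetilde\avec)|$. This is the displayed estimate with $C=c^{-1}$, and it also gives uniqueness.
\item By part (iii), the Hausdorff and weighted coefficient bounds \eqref{eq:consequences} hold, uniformly down to the vertex.
\end{itemize}

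The only genuinely new input is Step 1, and within it the only delicate point is the root isolation. The plan there is to recheck the sign table by exact rational evaluation and confirm that the three isolating intervals, and their absolute values, are disjoint.
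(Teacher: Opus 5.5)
Your proposal is correct and follows the paper's argument: the paper checks admissibility of $(1,1,1)/\sqrt3$ exactly as you do, via $\det D\mathcal E_3=72bc^2$ and the rational sign table for $p_3$, and then applies Theorem~\ref{thm:main} with $m=3$. The paper also narrows the cone so that all coordinates stay positive; this is not needed for the stated conclusions, and it holds automatically once $\varepsilon<1/\sqrt3$.
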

\begin{proof}
Apply Theorem~\ref{thm:main}, whose hypotheses we have just checked, and narrow the cone to keep all coordinates positive.
\end{proof}

\subsubsection{Signed ordering and leading reconstruction}
At $(1,1,1)$, the signed eigenvalues, ordered by decreasing absolute value, are $s_1^0,-s_2^0,s_3^0$. This ordering persists in a sufficiently narrow cone. Writing $x_j=s_j^0$, invert \eqref{eq:TQD} explicitly:
\begin{align}
 c&=(x_1x_2x_3/6)^{1/3},\label{eq:m3inv}\\
 a&=x_1-x_2+x_3-2c,\nonumber\\
 b&=\tfrac12\sqrt{x_1^2+x_2^2+x_3^2-a^2-10c^2}.\nonumber
\end{align}
Positivity in the prescribed cone selects the positive root for $b$. This recovers $(a,b,c)$ exactly from the singular values of $L_3$. Substituting $x_j=2\lambda_j$ for the actual domain gives an error $O(a^2+b^2+c^2)$ uniformly on a smaller cone. The inverse in \eqref{eq:m3inv} is smooth and homogeneous of degree one near the leading spectral ray, with uniformly bounded differential. After division by $|(a,b,c)|$, \eqref{eq:Serror} bounds the difference between the two spectral vectors by $O(|(a,b,c)|)$. Shrink the cone and radius so that both vectors and their joining segment stay in this neighbourhood. The radicand remains positive, and integrating the differential gives the error bound.

The corresponding exponential tail is
\[
 F_3(w)=e^{1/w}-1-w^{-1}-\tfrac12w^{-2}-\tfrac16w^{-3}
 =\sum_{n\ge4}\frac{w^{-n}}{n!}.
\]
For positive coefficients, \eqref{eq:univalent} becomes
\[
 a+2b+3c+(e-\tfrac52)(a^2+b^2+c^2)<1.
\]
The inverse theorem may require a smaller radius. The weighted coefficient estimate uses $F_s^2=\sum_{n\ge4}n^{2s}/(n!)^2$. Figure~\ref{fig:m3} shows domains along the admissible ray.

\begin{figure}[htbp]
\revisionfloat
\centering
\includegraphics[width=\textwidth]{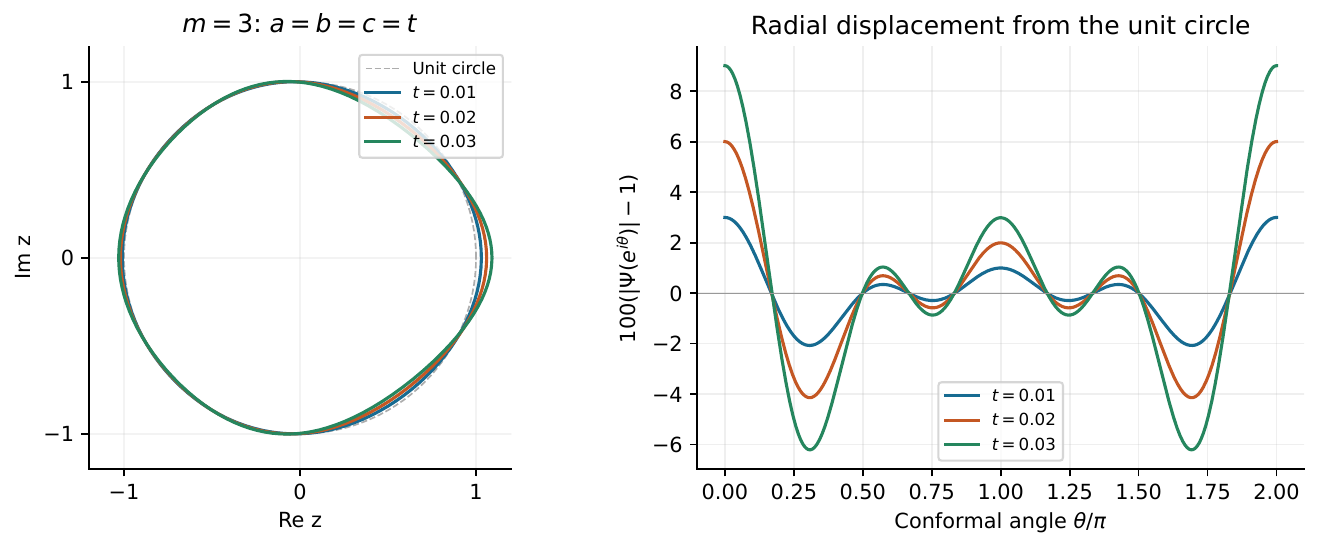}
\caption{Three-coefficient domains with the exponential tail $F_3$ and $a=b=c=t$, for $t=0.01,0.02,0.03$. The panels follow Figure~\ref{fig:m2}. All displayed curves are strictly convex by Subsection~\ref{sec:exampleconvexity}. Corollary~\ref{cor:m3} holds on a sufficiently small interval of the ray.}\label{fig:m3}
\end{figure}

\FloatBarrier
\subsection{Convexity of the displayed examples}\label{sec:exampleconvexity}
Here is a direct convexity check for the curves in Figures~\ref{fig:m2}--\ref{fig:m3}. Let $q_1=\sum n|c_n|$ and $q_2=\sum n^2|c_n|$. The condition $q_1<1$ ensures exterior injectivity, and differentiation of the boundary parametrisation gives
\[
 \operatorname{Im}(\overline{z'}z'')\ge1-q_1-q_2-q_1q_2
\]
for $z(\theta)=\Psi(e^{i\theta})$. At the largest displayed $t$, the exponential sums give lower bounds of $0.43$ for $m=2$ and $0.32$ for $m=3$. The bounds persist for smaller $t$, proving strict convexity of every displayed curve.

\FloatBarrier
\section{Concluding remarks}\label{sec:conclusion}

The NP spectrum can fail to distinguish domains even when their boundaries are analytic, strictly convex and arbitrarily close to the circle, and even after allowing extended Möbius transformations. Within prescribed finite-dimensional conformal families, however, finitely many eigenvalues give Lipschitz stable recovery on suitable truncated cones, with constants uniform down to the disk.

To recover finer boundary features as $m$ grows, we need quantitative bounds on the cone aperture, admissible radius and inverse Lipschitz constant. Spectral cluster observables accommodate repeated eigenvalues near symmetric domains; recovery of several parameters still requires a differential of full rank. The critical curve in Appendix~\ref{app:moments} raises another question: which additional spectral observables separate the two branches of the moment map?

For recovery from resonance measurements, we must also extract the eigenvalues stably and control their excitation and observation weights. This is where the spectral estimates meet the polarization data of Appendix~\ref{app:direct}.

\appendix
\section{The two-moment map}\label{app:moments}
\subsection{Uniqueness on a positive sector}
For $m=2$, the full spectral moments are
\[
 I_1=4\sum_j\lambda_j^2=\Tr B^2,\qquad
 I_2=16\sum_j\lambda_j^4=\Tr B^4.
\]
Lemma~\ref{lem:analytic} shows that these moments are real analytic in the parameters. Writing $\rho=|a|+|b|$, we have
\begin{equation}\label{eq:oldL}
 B=L+Q+O_{\HS}(\rho^3),\qquad
 L=\begin{pmatrix}a&\sqrt2b\\\sqrt2b&0\end{pmatrix}\oplus0,
\end{equation}
where $Q$ is homogeneous of degree two. The logarithmic series, together with the prescribed tail beginning at $w^{-3}$, shows that the nonzero entries of $L$ have $i+j\le3$, whereas those of $Q$ have $i+j\ge4$. Thus, $\Tr(LQ)=0$, and
\begin{align}
 I_1&=a^2+4b^2+O(\rho^4),\label{eq:oldI1}\\
 I_2&=a^4+8a^2b^2+8b^4+O(\rho^5),\label{eq:oldI2}\\
 J_F:=\det\frac{\partial(I_1,I_2)}{\partial(a,b)}
 &=-64ab^3+O(\rho^5).\label{eq:oldJ}
\end{align}
Expanding $\Tr B^2$ and using the disjoint supports gives the formula for $I_1$. For $I_2$, the leading term is $\Tr L^4$; every other term has one factor of at least quadratic order and three factors of at least linear order, so its contribution is $O(\rho^5)$. Differentiating the analytic expansions gives
\[
 I_{1,a}=2a+O(\rho^3),\quad I_{1,b}=8b+O(\rho^3),
\]
\[
 I_{2,a}=4a^3+16ab^2+O(\rho^4),\quad
 I_{2,b}=16a^2b+32b^3+O(\rho^4).
\]
Substituting these derivatives in the determinant gives \eqref{eq:oldJ}.

\begin{lemma}[Connected monotone level sets]\label{lem:global}
Let $u,v$ be $C^1$ on a neighbourhood of the sector $\mathcal Q_{\eta,\delta}$ in Theorem~\ref{thm:m2}. If $u_a,u_b>0$ and $\det\partial(u,v)/\partial(a,b)<0$ throughout the sector, then $(u,v)$ is injective on the sector.
\end{lemma}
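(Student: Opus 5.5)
The plan is to use the monotonicity of $u$ to reduce injectivity to injectivity of $v$ along single level curves of $u$, and then to show $v$ is strictly monotone along each such curve. Suppose $p\ne q$ in $\mathcal Q_{\eta,\delta}$ satisfy $u(p)=u(q)=c$ and $v(p)=v(q)$. Since $\nabla u\neq0$, the level set $\{u=c\}$ is a $C^1$ curve near each of its points. Because $u_b>0$, on each vertical segment $\{a\}\times I$ meeting the sector, $u$ is strictly increasing in $b$. Hence each such segment contains at most one point of the level set.

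The key structural step is to show that $\{u=c\}\cap\mathcal Q_{\eta,\delta}$ is connected and is a graph $b=g(a)$ over an interval in $a$. The map $g$ is decreasing, since $g'=-u_a/u_b<0$ by the implicit function theorem. I will use the sector's shape. Its vertical sections are intervals $\eta a\le b\le\min(\eta^{-1}a,\delta)$ with $a<\delta$. Along the boundary rays $b=\eta a$ and $b=\eta^{-1}a$, $u$ is strictly increasing in $a$, because the directional derivative $u_a+\eta^{\pm1}u_b$ is positive. So the level set crosses each boundary ray at most once. A decreasing graph entering and leaving the sector through its boundary can therefore exit only once. Concretely, the set $A=\{a:\exists b,\ (a,b)\in\mathcal Q_{\eta,\delta},\ u(a,b)=c\}$ is an interval. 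If $a_1<a_2$ lie in $A$ and $a_1<a<a_2$, the vertical section at $a$ lies between the sections at $a_1$ and $a_2$ in the relevant sense. Monotonicity of $u$ along the ray edges and along vertical lines then yields, via the intermediate value theorem on the section, a point with $u=c$. Handling the truncation edges $a=\delta$ and $b=\delta$ needs the same argument with $u_a>0$ or $u_b>0$. I expect this connectedness argument, done carefully at the corners of the sector, to be the main obstacle. The cleanest route is to parametrise by the sector's corners and check the ordering of $u$ on each of the four edges.

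Given the graph description, set $\phi(a)=v(a,g(a))$. Then
\[
 \phi'(a)=v_a+v_bg'=\frac{u_bv_a-u_av_b}{u_b}=-\frac{\det\partial(u,v)/\partial(a,b)}{u_b}>0,
\]
so $\phi$ is strictly increasing on the interval $A$. Thus $v(p)\ne v(q)$ whenever $p\ne q$ lie on the same level set of $u$. This gives the contradiction and proves injectivity. If the sector is taken closed, continuity extends the sign conditions from the neighbourhood, so the same argument covers boundary points.
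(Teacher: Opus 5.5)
Your overall structure matches the paper's. You show that two points of $\mathcal Q_{\eta,\delta}$ with the same value of $u$ lie on one decreasing $C^1$ graph $b=b_c(a)$ inside the sector, and then use $\frac{\dd}{\dd a}v(a,b_c(a))=-\frac{\det\partial(u,v)/\partial(a,b)}{u_b}>0$, which is exactly the paper's final computation. The gap is the step you yourself call the main obstacle: the existence of the connecting graph. It is only gestured at (``lies between the sections at $a_1$ and $a_2$ in the relevant sense'', ``a decreasing graph \dots\ can exit only once''), and the edge-by-edge analysis at the corners and truncation edges is never carried out. To apply the intermediate value theorem on the vertical section at an intermediate $a$, you must name two points of that section, \emph{lying in the sector}, at which $u\le c$ and $u\ge c$. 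Your sketch does not say which points these are or why they belong to $\mathcal Q_{\eta,\delta}$.

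The missing idea is a rectangle containment, which makes the corner analysis unnecessary. Let $p=(a_1,b_1)$ and $q=(a_2,b_2)$ satisfy $u(p)=u(q)=c$ with $a_1<a_2$.
\begin{itemize}
\item The sector is convex, so the segment $pq$ lies in it. If $b_1\le b_2$, then $u$ would increase strictly along this segment, so in fact $b_1>b_2$.
\item The rectangle $[a_1,a_2]\times[b_2,b_1]$ then lies in $\mathcal Q_{\eta,\delta}$. Its extreme ratios are $b_2/a_2\ge\eta$ and $b_1/a_1\le\eta^{-1}$, and $a_2,b_1<\delta$.
\item Monotonicity in $a$ along its bottom and top edges gives $u(a,b_2)\le c\le u(a,b_1)$ for every $a\in[a_1,a_2]$.
\item The intermediate value theorem and $u_b>0$ then give a unique $b_c(a)\in[b_2,b_1]$. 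This is $C^1$ by the implicit function theorem and passes through both $p$ and $q$.
\end{itemize}
You never need connectedness of the whole level set, only of the piece between the two given points. With this step inserted, your argument is complete and coincides with the paper's.

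Your closing remark about closed sectors is not needed, since the ratio edges already belong to $\mathcal Q_{\eta,\delta}$. Note also that strict sign conditions do not pass to a closure by continuity.
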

\begin{proof}
Suppose two points have the same value of $u$. If their $a$ coordinates agree, strict monotonicity in $b$ makes the points equal. Otherwise, order them so that $a_1<a_2$. Strict monotonicity along segments with increasing coordinates forces $b_1>b_2$. The rectangle $[a_1,a_2]\times[b_2,b_1]$ lies in the sector, since its extreme ratios are $b_2/a_2$ and $b_1/a_1$. Writing their common value as $u=c$, we have, for every intermediate $a$,
\[
 u(a,b_2)\le c\le u(a,b_1).
\]
Strict monotonicity in $b$ therefore gives a unique graph $b_c(a)$ through both points with $u(a,b_c(a))=c$. By the implicit function theorem, $b_c'=-u_a/u_b<0$, and
\begin{equation}\label{eq:levelderivative}
 \frac{\dd}{\dd a}v(a,b_c(a))
 =-\frac{\det\partial(u,v)/\partial(a,b)}{u_b}>0.
\end{equation}
Hence, $v$ is strictly increasing along this level set, and distinct endpoints have different values of $v$.
\end{proof}

\begin{theorem}[Global uniqueness from two moments]\label{thm:twomoments}
For each fixed tail and each $0<\eta<1$, the map $(a,b)\mapsto(I_1,I_2)$ is injective on $\mathcal Q_{\eta,\delta}$ for sufficiently small $\delta>0$.
\end{theorem}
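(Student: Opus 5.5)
We apply Lemma~\ref{lem:global} with $u=I_1$ and $v=I_2$ on the sector $\mathcal Q_{\eta,\delta}$. The sector condition $\eta\le b/a\le\eta^{-1}$ gives $a,b\ge c_\eta\rho$ for some $c_\eta>0$, where $\rho=a+b$. All error terms in \eqref{eq:oldI1}--\eqref{eq:oldJ} can then be absorbed into the leading homogeneous terms once $\rho<2\delta$ is small. The constants in the $O(\cdot)$ terms come from Lemma~\ref{lem:analytic}, so they are uniform on a fixed small parameter ball and depend only on the tail. The functions $I_1,I_2$ are real analytic, hence $C^1$, on a neighbourhood of the closed sector. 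We therefore only need the sign conditions there.

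\emph{Step 1: monotonicity of $u$.} From $I_{1,a}=2a+O(\rho^3)$ and $a\ge c_\eta\rho$ we get
\[
 I_{1,a}\ge 2c_\eta\rho-C\rho^3>0
\]
for $\rho$ small. Likewise $I_{1,b}\ge 8c_\eta\rho-C\rho^3>0$.

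\emph{Step 2: sign of the Jacobian.} From \eqref{eq:oldJ},
\[
 J_F\le -64c_\eta^4\rho^4+C\rho^5<0
\]
once $\rho<64c_\eta^4/C$. The leading term is homogeneous of degree four and bounded away from zero relative to $\rho^4$ on the sector, while the error is of order five.

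\emph{Step 3: conclusion.} Choose $\delta$ so that Steps 1 and 2 hold whenever $\rho\le 2\delta$, with a slight enlargement to a neighbourhood of the sector. Lemma~\ref{lem:global} then gives injectivity of $(I_1,I_2)$ on $\mathcal Q_{\eta,\delta}$.

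The main obstacle is the validity of \eqref{eq:oldJ} with a genuinely fifth-order remainder. The derivative expansions of $I_1,I_2$ must carry the stated error orders uniformly, and taking the determinant must not lose an order. I would check this using the analytic dependence from Lemma~\ref{lem:analytic}. The function $\Tr B^4$ is analytic, so its Taylor expansion at the origin differentiates term by term, and the derivative remainders are one order lower. The products in the determinant then contribute $O(\rho)\cdot O(\rho^4)+O(\rho^3)\cdot O(\rho^3)=O(\rho^5)$, as claimed.
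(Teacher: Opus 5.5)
Your proposal is correct and follows the paper's proof. The paper also uses the sector bounds $a,b\ge c_\eta\rho$ to obtain $I_{1,a},I_{1,b}>0$ and $J_F<0$ from the differentiated expansions for small $\rho$, and then applies Lemma~\ref{lem:global} with $u=I_1$, $v=I_2$. Your cruder bound $ab^3\ge c_\eta^4\rho^4$, in place of the paper's $\kappa_\eta=\min_{\eta\le t\le\eta^{-1}}t^3/(1+t)^4$, changes nothing, and your remainder check for \eqref{eq:oldJ} is sound (the cross term $O(\rho^3)\cdot O(\rho^3)$ is in fact $O(\rho^6)$).
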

\begin{proof}
Throughout the sector, $a,b\ge c_\eta\rho$ and $ab^3\ge\kappa_\eta\rho^4$, where
\[
 c_\eta=\frac\eta{1+\eta},\qquad
 \kappa_\eta=\min_{\eta\le t\le\eta^{-1}}\frac{t^3}{(1+t)^4}>0.
\]
The differentiated expansions therefore give $I_{1,a},I_{1,b}>0$ and $J_F<0$, uniformly for all sufficiently small $\rho>0$. Lemma~\ref{lem:global}, applied with $u=I_1,v=I_2$, proves injectivity.
\end{proof}
Theorem~\ref{thm:twomoments} gives uniqueness from two full spectral moments, complementing the two-eigenvalue result of Theorem~\ref{thm:m2}. The moment pair $(I_1,I_2)$ has no uniformly Lipschitz inverse down to the disk, as Remark~\ref{rem:evenmoments} shows. For the family with $F=0$, a further obstruction appears outside the comparable-coefficient sector.

\subsection{The critical curve for the family without a tail}\label{sec:fold}
For $F=0$, the next term in the Jacobian expansion reveals a fold in the two-moment map. The map is injective on each side, giving larger uniqueness regions, and fails to be injective on the entire small positive quadrant.

\subsubsection{Expansion of the moment map}
The upper $2\times2$ block of the quadratic term $Q$ in \eqref{eq:oldL} has only one nonzero entry, $Q_{22}=a^2$. With $u=w^{-1}$ and $v_0=z^{-1}$, the quadratic term of the logarithm is
\[
 \tfrac12\bigl(auv_0+buv_0^2+bu^2v_0\bigr)^2,
\]
whose coefficient of $u^2v_0^2$ is $a^2/2$. Since $(L^3)_{22}=2ab^2$,
\[
 4\Tr(L^3Q)=8a^3b^2.
\]
We obtain
\begin{align}
 I_1&=a^2+4b^2+O(\rho^4),\label{eq:foldI1}\\
 I_2&=a^4+8a^2b^2+8b^4+8a^3b^2+O(\rho^6),\label{eq:foldI2}\\
 J_0&=-64ab^3+32a^4b-192a^2b^3+O(\rho^6).\label{eq:foldJ}
\end{align}
Differentiating $8a^3b^2$ gives the terms of degree five in the determinant. The remainder of degree four in $I_1$ can contribute to $J_0$ only at degree six or higher.

\subsubsection{The critical curve}
For $F=0$ and complex $a,b$ near zero, define
\[
 \mathcal I_k(a,b)=\Tr\bigl((B(a,b)^*B(a,b))^k\bigr),\qquad k=1,2.
\]
The proof of Lemma~\ref{lem:analytic} shows that $B$ is holomorphic as a Hilbert--Schmidt-valued map. Hence, $\mathcal I_k$ is real analytic and coincides with $I_k$ for real $a,b$. Rotation through $\theta$ gives
\[
 B(e^{2i\theta}a,e^{3i\theta}b)=D_\theta B(a,b)D_\theta,
 \qquad D_\theta=\diag(e^{in\theta})_{n\ge1}.
\]
Unitary invariance gives $I_k(a,-b)=I_k(a,b)$ on taking $\theta=\pi$. At $a=0$, rotation through $2\pi/3$ fixes $b$ and rotates the complex $a$-plane through $4\pi/3$. The only invariant real differential on that plane is zero, so $\partial_aI_k(0,b)=0$.

Thus, $J_0$ is odd in $b$ and vanishes at $a=0$. There is an analytic function $q(a,y)$ near $(0,0)$ such that
\begin{equation}\label{eq:qfactor}
 J_0(a,b)=ab\,q(a,b^2).
\end{equation}
Equation~\eqref{eq:foldJ} gives
\begin{equation}\label{eq:qexp}
 q(a,y)=32a^3-64y-192ay+O(a^4+a^2|y|+y^2).
\end{equation}
Since $q_y(0,0)=-64$, the analytic implicit function theorem gives a unique function $h$ near zero satisfying
\begin{equation}\label{eq:h}
 q(a,h(a))=0,\qquad h(a)=\tfrac12a^3+O(a^4).
\end{equation}
For sufficiently small positive $a$, $h(a)>0$ and $h'(a)=\tfrac32a^2+O(a^3)>0$. After shrinking the neighbourhood so that $q_y<0$, the analytic factorisation
\[
 q(a,y)=u(a,y)(h(a)-y),\qquad u(a,y)>0,
\]
gives
\begin{equation}\label{eq:sign}
 \operatorname{sgn}J_0(a,b)=\operatorname{sgn}(h(a)-b^2)
 \quad(a,b>0\text{ small}).
\end{equation}
Hence, $b^2=h(a)$ is the only critical curve in the small positive quadrant.

\begin{theorem}[Global uniqueness on either side of the fold]\label{thm:branches}
There is $\delta_0>0$ such that the two-moment map is injective on each of the regions
\[
 U_- =\{0<a,b<\delta_0:\ b^2>h(a)\},\qquad
 U_+ =\{0<a,b<\delta_0:\ b^2<h(a)\}.
\]
Thus, the NP spectrum determines the normalised domain with $F=0$ within either region. With $\delta_0$ smaller if necessary, injectivity holds separately on the two explicit regions
\[
 0<a,b<\delta_0,\quad b^2\ge a^3,
 \qquad\text{and}\qquad
 0<a,b<\delta_0,\quad b^2\le\tfrac14a^3.
\]
\end{theorem}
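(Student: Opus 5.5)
The approach is to adapt Lemma~\ref{lem:global} to each side of the fold, using $u=I_1$ and $v=I_2$. On $U_-$ the sign identity \eqref{eq:sign} gives $J_0<0$, so the lemma applies directly. On $U_+$ we have $J_0>0$, so we apply the lemma to the pair $(u,-v)$, or equivalently run its proof with $v$ strictly decreasing along each level set. Monotonicity of $I_1$ comes from \eqref{eq:foldI1} differentiated: $I_{1,a}=2a+O(\rho^3)$ and $I_{1,b}=8b+O(\rho^3)$. These are positive as long as $a,b\gg\rho^3$.

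The step needing care is near the axes, since $U_+$ contains points with $b$ tiny. There I would use the symmetry facts established above. $I_1$ is even in $b$, so $I_{1,b}$ is odd in $b$ and equals $b\,(8+O(\rho^2))$, which is positive for $b>0$. Rotation invariance gives $\partial_a I_1(0,b)=0$, so $I_{1,a}=a\,(2+O(\rho^2))$, by the same analytic factorisation argument that produced \eqref{eq:qfactor}. Both derivatives are therefore positive on the whole open small quadrant, with no sector restriction needed.

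The geometric part of Lemma~\ref{lem:global} also has to be redone. That lemma used the sector only to guarantee that the rectangle $[a_1,a_2]\times[b_2,b_1]$ lies in the domain; here I need the rectangle to lie in $U_\pm$. The curve $b=\sqrt{h(a)}$ is increasing by \eqref{eq:h}. For $U_-$, the corners have $b\ge b_2>\sqrt{h(a_2)}\ge\sqrt{h(a)}$ for $a\le a_2$, so the rectangle lies in $U_-$. For $U_+$, the corners have $b\le b_1<\sqrt{h(a_1)}\le\sqrt{h(a)}$ for $a\ge a_1$, so the rectangle lies in $U_+$. In both cases the level curve $b_c$ joining the two points stays in the rectangle, hence in one region where $J_0$ has a fixed sign. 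Then \eqref{eq:levelderivative} makes $v$ strictly monotone along it, which separates the endpoints.

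For the explicit regions, I would use $h(a)=\tfrac12a^3+O(a^4)$. For small $\delta_0$ this gives $\tfrac14a^3<h(a)<a^3$, so $\{b^2\ge a^3\}\subset U_-$ and $\{b^2\le\tfrac14a^3\}\subset U_+$, and injectivity is inherited. The spectral conclusion follows because $I_1$ and $I_2$ are functions of the NP spectrum. The main obstacle I expect is making the $O(\rho^3)$ remainders in $I_{1,a}$ and $I_{1,b}$ uniform near the axes, which the parity and rotation factorisation handles. A secondary obstacle is confirming that \eqref{eq:sign} holds on the full box $0<a,b<\delta_0$ once $\delta_0$ is taken small.
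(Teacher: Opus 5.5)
Your proposal is correct and follows the paper's proof: the level-set argument of Lemma~\ref{lem:global} with $u=I_1$, $v=I_2$; keeping the joining graph on one side of $b^2=h(a)$ because $h$ increases (the paper notes that $b_c(a)^2-h(a)$ is strictly decreasing, while you show the whole rectangle lies in $U_\pm$); then \eqref{eq:sign} with \eqref{eq:levelderivative}; and $h(a)=\tfrac12a^3+O(a^4)$ for the explicit regions. The one difference is how you get positivity of $I_{1,a},I_{1,b}$ up to the axes: the paper uses that, for $F=0$ and $a,b>0$, all coefficients in the logarithmic expansion are nonnegative, which together with $B_{11}=a$, $B_{12}=\sqrt2\,b$ gives $I_{1,a}\ge2a$ and $I_{1,b}\ge8b$ directly, whereas your parity/rotation factorisation is also valid but needs the complexified symmetry argument.
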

\begin{proof}
For real $a,b>0$, every matrix entry in the logarithmic expansion has nonnegative coefficients in $a,b$. The identities $B_{11}=a$ and $B_{12}=B_{21}=\sqrt2b$ therefore give
\[
 I_{1,a}\ge2a>0,\qquad I_{1,b}\ge8b>0.
\]
Distinct points on the same $I_1$ level have oppositely ordered coordinates. The argument of Lemma~\ref{lem:global} joins them by a graph within their coordinate rectangle, with $b_c(a)$ strictly decreasing. Since $h$ increases, $b_c(a)^2-h(a)$ is strictly decreasing too. If both endpoints lie on one side of the critical curve, the entire joining graph stays on that side. Equations~\eqref{eq:levelderivative} and \eqref{eq:sign} then imply strict monotonicity of $I_2$ along the graph, proving injectivity. Finally, $h(a)=a^3/2+O(a^4)$ puts the two explicit regions on the indicated sides for sufficiently small~$a$.
\end{proof}

\begin{proposition}[The two moments do not distinguish the whole quadrant]\label{prop:fold}
For every sufficiently small $a_0>0$, every neighbourhood of $(a_0,b_0)$, where $b_0=\sqrt{h(a_0)}>0$, contains distinct parameter pairs on opposite sides of $b^2=h(a)$ with equal $I_1$ and equal $I_2$.
\end{proposition}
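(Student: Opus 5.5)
The plan is to exhibit the fold directly along a level curve of $I_1$ through the critical point. Fix $a_0>0$ small enough that $(a_0,b_0)$, with $b_0=\sqrt{h(a_0)}$, lies well inside the neighbourhood where \eqref{eq:sign} holds and where $I_{1,a}\ge 2a>0$ and $I_{1,b}\ge 8b>0$ (as in the proof of Theorem~\ref{thm:branches}). Since $h(a_0)=\tfrac12a_0^3+O(a_0^4)>0$, we have $b_0>0$. Also $b_0\to0$ as $a_0\to0$, so this is possible for all sufficiently small $a_0$.

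First, set $c=I_1(a_0,b_0)$. Because $I_{1,b}>0$, the implicit function theorem gives a $C^1$ graph $b_c(a)$ on an interval $|a-a_0|<\sigma$ with $I_1(a,b_c(a))=c$, $b_c(a_0)=b_0$, and $b_c'=-I_{1,a}/I_{1,b}<0$. Shrinking $\sigma$, I will make the whole graph lie inside any prescribed neighbourhood of $(a_0,b_0)$ and inside the positive region where \eqref{eq:sign} applies.

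Second, I locate the curve relative to the critical curve. Put $g(a)=b_c(a)^2-h(a)$. Since $b_c$ is positive and decreasing and $h'>0$, $g$ is strictly decreasing, and $g(a_0)=0$. Hence $b_c(a)^2>h(a)$ for $a<a_0$, and $b_c(a)^2<h(a)$ for $a>a_0$. By \eqref{eq:sign}, $J_0<0$ on the left part of the graph and $J_0>0$ on the right part. The formula \eqref{eq:levelderivative}, applied with $u=I_1$ and $v=I_2$, gives
\[
 \frac{\dd}{\dd a}I_2(a,b_c(a))=-\frac{J_0}{I_{1,b}}.
\]
This derivative is positive for $a<a_0$ and negative for $a>a_0$. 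Thus $\phi(a)=I_2(a,b_c(a))$ has a strict local maximum at $a_0$, strictly increasing on the left and strictly decreasing on the right.

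Third, I conclude by the intermediate value theorem. Pick $0<\epsilon<\sigma$ and suppose, without loss of generality, that $\phi(a_0-\epsilon)\le\phi(a_0+\epsilon)$. Both values are strictly less than $\phi(a_0)$. Since $\phi$ decreases continuously on $[a_0,a_0+\epsilon]$ from $\phi(a_0)$ to $\phi(a_0+\epsilon)$, there is $a'\in(a_0,a_0+\epsilon]$ with $\phi(a')=\phi(a_0-\epsilon)$. The points $(a_0-\epsilon,b_c(a_0-\epsilon))$ and $(a',b_c(a'))$ are then distinct, lie on opposite sides of $b^2=h(a)$, and share both $I_1=c$ and the same $I_2$. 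Both lie in the given neighbourhood because the graph does.

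The only delicate point is the strict sign change of $J_0$ across the curve, which turns the critical point into a genuine fold rather than an inflection. This is exactly what the factorisation $q=u\,(h-y)$ with $u>0$, together with the monotonicity of $g$, supplies. So I expect no real obstacle beyond verifying that all neighbourhoods used are compatible for small $a_0$.
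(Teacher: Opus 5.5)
Your argument is essentially the paper's. Both proofs follow the $I_1$-level graph $b=b_c(a)$ through $(a_0,b_0)$ and use \eqref{eq:levelderivative}. The paper obtains the strict maximum of $\phi(a)=I_2(a,b_c(a))$ at $a_0$ from a second-derivative computation: the derivative of $J_0(a,b_c(a))$ at $a=a_0$ equals $a_0b_0u(a_0,b_0^2)\bigl(h'(a_0)-2b_0b_c'(a_0)\bigr)>0$. Your first-derivative sign change, based on $g=b_c^2-h$ being strictly decreasing, rests on the same inequality $h'-2b_cb_c'>0$ and is equally valid.

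There is one slip in your final step: the ``without loss of generality'' is the wrong way round. Suppose $\phi(a_0-\epsilon)<\phi(a_0+\epsilon)$. On $[a_0,a_0+\epsilon]$ the strictly decreasing function $\phi$ takes only values in $[\phi(a_0+\epsilon),\phi(a_0)]$, so $\phi(a_0-\epsilon)$ is not attained there and no such $a'$ exists.

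To fix this, set $y=\max\{\phi(a_0-\epsilon),\phi(a_0+\epsilon)\}<\phi(a_0)$, and take $y$ itself or any value in $[y,\phi(a_0))$. By strict monotonicity on each side and the intermediate value theorem, this value is attained exactly once in $[a_0-\epsilon,a_0)$ and once in $(a_0,a_0+\epsilon]$. This is the paper's formulation: ``each nearby value below this nondegenerate local maximum is attained once on each side.'' With this correction your proof is complete.
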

\begin{proof}
The level set of $I_1$ through $(a_0,b_0)$ is locally the graph $b=b_c(a)$, where $b_c'=-I_{1,a}/I_{1,b}<0$. The factorisation $J_0(a,b)=ab\,u(a,b^2)(h(a)-b^2)$ gives
\[
 \left.\frac{\dd}{\dd a}J_0(a,b_c(a))\right|_{a=a_0}
 =a_0b_0u(a_0,b_0^2)\bigl(h'(a_0)-2b_0b_c'(a_0)\bigr)>0.
\]
By Equation~\eqref{eq:levelderivative}, the second derivative of $I_2(a,b_c(a))$ at $a_0$ is strictly negative. Thus each nearby value below this nondegenerate local maximum is attained once on each side of the critical curve.
\end{proof}
These pairs have equal values of two moments; higher moments may distinguish them. Hence the two-moment map is not injective on $U_-\cup U_+$.

Uniform spectral stability throughout these regions requires further bounds, since the angular gaps and inverse-Jacobian constants used in the spectral argument can deteriorate near degenerate directions.

\section{Uniqueness from polarization data}\label{app:direct}
\subsection{Spectral weights and polarization data}
Complete weighted NP data recover all area moments and hence the domain. At one contrast, all uncontracted GPTs also give uniqueness for analytic Jordan domains. This second proof uses all polynomial observations; \cite[Theorem~3.1]{APSTident} requires only harmonic combinations. The spectral weights record the excitation and observation couplings of energy-normalised eigenfunctions.

Let $D$ be a bounded smooth planar domain whose reduced NP operator has the self-adjoint energy realisation $A_D$. The smooth Jordan domains of the main text satisfy this assumption. For a harmonic polynomial $H$ and a polynomial $G$, define
\[
 g_H=\partial_\nu H,\qquad
 \ell_G(\varphi)=\int_{\partial D}G\varphi\dd s.
\]
The input has zero mean. Equivalence of the energy norm with the mean-zero $H^{-1/2}$ norm makes $\ell_G$ continuous on the energy space. We extend to complex polynomials by complexification, using bilinear boundary pairings with no conjugation of $G$.

Let $E_\mu$ be the orthogonal spectral projection of $A_D$ associated with an eigenvalue $\mu$, including $E_0$ onto the kernel. The geometric spectral weights are
\begin{equation}\label{eq:weights}
 W_{G,H}(\mu;D)=\ell_G(E_\mu g_H).
\end{equation}
If $(\varphi_{\mu,j})_j$ is an energy-orthonormal eigenbasis, with the energy inner product linear in its first argument, then
\[
 W_{G,H}(\mu;D)=\sum_j
 (g_H,\varphi_{\mu,j})_*\,\ell_G(\varphi_{\mu,j}).
\]
These weights are independent of the chosen basis and are determined by the eigenvalues, energy-normalised eigenfunctions and their geometric couplings.

\subsection{A direct area-moment proof from weighted spectral data}
\begin{theorem}[Uniqueness from weighted spectral data]\label{thm:weighted}
The reduced NP eigenvalues and the weights \eqref{eq:weights}, including the weights of the zero eigenspace, for all harmonic polynomials $H,G$ uniquely determine a bounded smooth planar domain in fixed Euclidean coordinates. 
\end{theorem}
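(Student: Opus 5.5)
The plan is to reconstruct from the data the bilinear pairings
\[
 \ell_G(g_H)=\int_{\partial D}G\,\partial_\nu H\dd s
\]
for all harmonic polynomials $H,G$, and then to convert these pairings into area moments. Summing the weights \eqref{eq:weights} over the whole spectrum, the zero eigenspace included, gives the identity of the energy space. So I would first show $\sum_\mu E_\mu g_H=g_H$ in the energy norm; the sum converges because $A_D$ is compact and self-adjoint, so its eigenprojections resolve the identity. Since $\ell_G$ is continuous on the energy space, it follows that
\[
 \sum_\mu W_{G,H}(\mu;D)=\ell_G(g_H).
\]
The eigenvalues are only needed to index the weights. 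Resolvent data such as \eqref{eq:GPTmotivation} also follow, as $\sum_\mu W_{G,H}(\mu;D)/(\lambda-\mu)$, but the plain sum is enough here.

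Next, Green's identity turns these pairings into area integrals. For harmonic $H$ and $G$,
\[
 \int_{\partial D}G\,\partial_\nu H\dd s=\int_D\nabla G\cdot\nabla H\dd x.
\]
I would complexify and take $H=z^n$, with $n\ge1$, and $G=\bar z^k$, with $k\ge1$. Both are harmonic polynomials, and the bilinear pairing has no conjugation. Writing $\nabla G\cdot\nabla H=2(\partial_zG\,\partial_{\bar z}H+\partial_{\bar z}G\,\partial_zH)$, the first term vanishes and the second equals $2kn\,\bar z^{k-1}z^{n-1}$. This gives the mixed area moments
\[
 \int_D z^{n-1}\bar z^{k-1}\dd x\qquad(n,k\ge1),
\]
that is, all moments $\int_D z^p\bar z^q\dd x$ with $p,q\ge0$. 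Real and imaginary parts then give every polynomial moment $\int_D x^\alpha y^\beta\dd x$.

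Finally, the polynomial moments determine the compactly supported finite measure $\chi_D\dd x$. Polynomials are dense in $C(K)$ for a large disk $K$ containing both candidate domains, by Stone--Weierstrass, so two domains with equal moments give equal measures $\chi_{D_1}=\chi_{D_2}$ almost everywhere. Because the domains are smooth (Jordan, or at least bounded open sets equal to the interior of their closure), equality almost everywhere forces $D_1=D_2$. The coordinates are fixed, so no translation or rotation ambiguity arises.

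The main obstacle is the first step: justifying that summing the weights really recovers $\ell_G(g_H)$. This needs three things:
\begin{enumerate}
\item $g_H=\partial_\nu H$ has zero mean, by the divergence theorem, so it lies in the energy space.
\item The spectral resolution converges in energy norm, with the kernel $E_0$ included and with multiplicities handled by the orthonormal eigenbasis expression.
\item Continuity of $\ell_G$ with respect to the energy norm, for complex $G$ handled bilinearly, which the setup already states.
\end{enumerate}
Because the data are indexed by eigenvalue without a prescribed order, I would also note that the series converges unconditionally. This holds since $\sum_\mu\|E_\mu g_H\|_*^2=\|g_H\|_*^2$ and $\ell_G$ is bounded.
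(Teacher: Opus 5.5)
Your proposal is correct and is essentially the paper's proof. The paper reaches $\ell_G(g_H)=\sum_\mu W_{G,H}(\mu;D)$ through the resolvent expansion and the limit of $\lambda M_{H,G}(\lambda;D)$ as $\lambda\to+\infty$, whereas you read this identity directly off the spectral resolution of the identity, which is exactly the point of the remark following the theorem. The remaining steps coincide with the paper: Green's identity with powers of $z$ and $\bar z$ to obtain all mixed area moments, then Stone--Weierstrass and $D=\operatorname{int}\overline D$ to identify the domain.
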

\begin{proof}
The spectral theorem and the continuity of $\ell_G$ give
\begin{equation}\label{eq:weightedresolvent}
 M_{H,G}(\lambda;D)=\ell_G((\lambda I-A_D)^{-1}g_H)
 =\sum_\mu\frac{W_{G,H}(\mu;D)}{\lambda-\mu}.
\end{equation}
Absolute summability of the weights follows by representing $\ell_G$ by its Riesz vector and applying Cauchy--Schwarz to the orthogonal spectral components of that vector and $g_H$. Moreover, in operator norm,
\[
 \lambda(\lambda I-A_D)^{-1}=(I-A_D/\lambda)^{-1}\longrightarrow I
 \quad(\lambda\to+\infty).
\]
Green's formula then gives
\begin{equation}\label{eq:momentlimit}
 \lim_{\lambda\to+\infty}\lambda M_{H,G}(\lambda;D)
 =\int_{\partial D}G\partial_\nu H\dd s
 =\int_D\nabla G\cdot\nabla H\dd A,
\end{equation}
since $\Delta H=0$.

For integers $p,q\ge0$ choose
\[
 H_p(z)=\frac{z^{p+1}}{p+1},\qquad
 G_q(z)=\frac{\bar z^{q+1}}{q+1}.
\]
These are complex harmonic polynomials satisfying
$\nabla H_p\cdot\nabla G_q=2z^p\bar z^q$, so
\begin{equation}\label{eq:allmixedmoments}
 \int_D z^p\bar z^q\dd A
 =\frac12\lim_{\lambda\to+\infty}\lambda M_{H_p,G_q}(\lambda;D)
 =\frac12\sum_\mu W_{G_q,H_p}(\mu;D).
\end{equation}
By bilinear complexification, real harmonic data suffice to recover these mixed moments.

If two domains have the same weighted spectral data, their area measures agree on all polynomials in $z,\bar z$, hence on all real polynomials in $x,y$. On a compact rectangle containing both closures, these polynomials are uniformly dense in the continuous functions by Stone--Weierstrass. The area measures therefore coincide. Each support is the closure of its domain, and smoothness gives $D=\operatorname{int}\overline D$, so the domains are equal.
\end{proof}

\begin{remark}
By Formula~\eqref{eq:allmixedmoments}, the sums of the complete weights suffice to determine the domain. The eigenvalues affect the full resolvent response; its leading coefficient at infinity depends only on these sums. The kernel contribution is essential: the reduced NP operator of the disk is zero, but its weights recover nonzero area moments. Here the complete spectral decomposition allows $\lambda$ to vary. The fixed-contrast argument below uses different data.
\end{remark}

\subsection{A direct proof for GPTs at one fixed contrast}
At a fixed contrast, the moments of all polynomial observations determine the boundary charge measures. This yields uniqueness in the analytic Jordan class once we establish density of the harmonic inputs.

\begin{lemma}[Density of harmonic-polynomial Neumann traces]\label{lem:neumanndense}
For a bounded simply connected planar domain with analytic Jordan boundary,
\[
 \operatorname{span}\{\partial_\nu H:H\text{ a harmonic polynomial}\}
\]
is dense in $L^2_0(\partial D)$.
\end{lemma}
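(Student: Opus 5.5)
We argue by duality. Suppose $f\in L^2_0(\partial D)$ is orthogonal in $L^2(\partial D)$ to $\partial_\nu H$ for every harmonic polynomial $H$. The goal is to show $f=0$. Since $f$ has mean zero, we introduce its single-layer potential $u=\mathcal S_D[f]$. This function is harmonic in $D$ and in $\R^2\setminus\overline D$, continuous across $\partial D$, and $O(|x|^{-1})$ at infinity; the mean-zero condition kills the logarithmic term. The jump relation gives $f=\partial_\nu u^+-\partial_\nu u^-$.

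Green's second identity in $D$, for harmonic $H$, gives
\[
 \int_{\partial D}(u\,\partial_\nu H-H\,\partial_\nu u^-)\dd s=0 .
\]
The analogous identity in the exterior, using the decay of $u$ and polynomial growth of $H$, involves the boundary terms at a large circle. These terms are controlled by the multipole expansion of $u$, whose coefficients are exactly the moments $\int_{\partial D}Hf\dd s$ against harmonic polynomials.

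I will use a cleaner route: pair $f$ against the \emph{Dirichlet} side instead. Let $\varphi=(\frac12I+K_D)^{-1}$-type quantities be avoided, and directly use the Dirichlet-to-Neumann map $\Lambda$ of $D$. For harmonic polynomials $H$ we have $\partial_\nu H=\Lambda(H|_{\partial D})$, and $\Lambda$ is self-adjoint on $L^2$ in the appropriate sense. The orthogonality then reads $\langle \Lambda f\,\text{-dual},H\rangle=0$. More precisely, let $w$ solve the Neumann problem in $D$ with data $f$; it lies in $H^1$, and for analytic boundary it is smooth up to the boundary. Green's identity gives
\[
 \int_{\partial D}f\,H\dd s=\int_{\partial D}w\,\partial_\nu H\dd s,
\]
so the hypothesis says that the trace $w|_{\partial D}$ is orthogonal to all $\partial_\nu H$. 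That is circular, so instead I rephrase the hypothesis as orthogonality of $g:=f$ to $\Lambda(\text{traces of harmonic polynomials})$. Since $\Lambda$ is self-adjoint and nonnegative with kernel the constants, this is equivalent to $\Lambda g\perp$ traces of harmonic polynomials, interpreted in $H^{-1}$ duality.

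The key step is density: traces of harmonic polynomials, namely real parts of complex polynomials restricted to $\partial D$, are dense in $C(\partial D)$ and in the Sobolev spaces $H^{s}(\partial D)$. For a Jordan domain, Mergelyan/Walsh's theorem gives uniform density of harmonic polynomials in the harmonic functions continuous on $\overline D$. Hence their traces are dense in $C(\partial D)$. For analytic boundaries, harmonic functions with data in $H^{s}$ extend across $\partial D$ after smoothing. Runge approximation on a slightly larger domain, obtained by pushing the analytic boundary outward via the conformal map, then gives density in every $H^s(\partial D)$, in particular $H^{1}$. Consequently $\langle \Lambda g,\psi\rangle=0$ for all $\psi\in H^{1}(\partial D)$, so $\Lambda g=0$. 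Thus $g$, which is the trace datum in this duality, is constant; mean zero forces $g=0$.

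I expect the main obstacle to be arranging the duality correctly, namely writing $\partial_\nu H=\Lambda(H|_{\partial D})$ and moving $\Lambda$ onto $f$ with the right Sobolev indices. The approach is to interpret $\langle f,\Lambda\psi\rangle=\langle \Lambda f,\psi\rangle$ with $f\in L^2\subset H^{-1/2}$, $\Lambda f\in H^{-3/2}$, and test functions $\psi\in H^{3/2}$. This requires density of harmonic polynomial traces in $H^{3/2}(\partial D)$. I would obtain that density from Runge approximation on the larger analytic domain $\Psi(\{|w|>\rho\})^c$ with $\rho<1$. Interior elliptic estimates convert uniform approximation on that larger closure into $H^{3/2}$ approximation on $\partial D$, and analyticity of the boundary is used exactly here.
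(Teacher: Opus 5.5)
Your final argument, the Dirichlet-to-Neumann duality in the last two paragraphs, is correct. It takes a genuinely different route from the paper's proof, although the analytic core is the same.

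\textbf{The paper's route.} The paper works directly on the Neumann side. For smooth mean-zero $f$ it solves $\partial_\nu u=f$ and writes $u=\Re h$ with $h$ holomorphic and smooth up to $\partial D$. It replaces $h$ by $h_r(z)=h(\psi(r\phi(z)))$, which is holomorphic near $\overline D$ and converges to $h$ in $C^1(\overline D)$. Runge approximation on a slightly larger Jordan domain, followed by Cauchy estimates, then gives harmonic polynomials converging to $u$ in $C^1(\overline D)$. Their normal derivatives converge uniformly to $f$.

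\textbf{Your route.} You put the same machinery on the Dirichlet side: conformal dilation, Runge approximation on an enlarged analytic domain, and interior estimates. This proves density of harmonic-polynomial traces in a Sobolev space on $\partial D$. You then transfer it to Neumann data via $\partial_\nu H=\Lambda(H|_{\partial D})$ and the symmetry of $\Lambda$. The cleanest phrasing uses the self-adjoint realisation of $\Lambda$ on $L^2(\partial D,\dd s)$ with domain $H^1(\partial D)$. If $f$ is orthogonal to $\Lambda$ applied to a core, then $f\in(\operatorname{Ran}\Lambda)^\perp=\ker\Lambda$, so $f$ is constant, and mean zero forces $f=0$. Since $f\in L^2$ gives $\Lambda f\in H^{-1}(\partial D)$, density in $H^1(\partial D)$ already suffices; $H^{3/2}$ is more than you need.

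\textbf{Comparison.} Your route separates the problem into a standard density statement for Dirichlet traces and a soft functional-analytic transfer. The cost is extra machinery the paper never needs:
\begin{itemize}
\item the Sobolev mapping properties of $\Lambda$;
\item its symmetry extended by duality;
\item the identification of its distributional kernel.
\end{itemize}
In fact the duality is not needed at all. The map $\Lambda:H^1(\partial D)\to L^2_0(\partial D)$ is bounded and onto, which is Neumann solvability, so it sends your dense set of traces to a dense set. That is exactly the paper's argument: solve the Neumann problem, then approximate its solution.

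\textbf{Clean-up.} In a write-up, delete the abandoned single-layer paragraph and the Neumann-problem detour. In that detour, the claim that the hypothesis makes $w|_{\partial D}$ orthogonal to every $\partial_\nu H$ is false, not merely circular. Your Green identity $\int_{\partial D}fH\dd s=\int_{\partial D}w\,\partial_\nu H\dd s$ would make that orthogonality equivalent to $f\perp H$. The hypothesis is instead $f\perp\partial_\nu H$.
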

\begin{proof}
It suffices to approximate a smooth real mean-zero function $f$, since these functions are dense in $L^2_0$. Solve the Neumann problem for a harmonic $u\in C^\infty(\overline D)$ with $\partial_\nu u=f$. By simple connectedness, there is a holomorphic function $h$, smooth up to the boundary, with real part $u$. Let $\psi:\mathbb D\to D$ be an interior conformal map and $\phi=\psi^{-1}$. Analyticity of the Jordan boundary extends these conformal maps to collars of the closures. For $0<r<1$ sufficiently close to one,
\[
 h_r(z)=h(\psi(r\phi(z)))
\]
is holomorphic in a neighbourhood of $\overline D$ and converges to $h$ in $C^1(\overline D)$ as $r\uparrow1$. Since the complement of $\overline D$ is connected, Runge approximation on the closure of a slightly larger Jordan domain in this neighbourhood, followed by Cauchy estimates, gives polynomial approximations to $h_r$ and its first derivative, uniform on $\overline D$. Taking real parts approximates $u$ in $C^1(\overline D)$, so the normal derivatives converge uniformly, hence in $L^2$, to $f$. Linearity handles complex data.
\end{proof}

\begin{theorem}[Fixed-contrast uniqueness from uncontracted GPTs]\label{thm:fixedGPT}
Let $D_1,D_2$ be bounded analytic Jordan domains, and let $\lambda^{(1)},\lambda^{(2)}\in\R\setminus[-1/2,1/2]$. If
\[
 \int_{\partial D_1}G(\lambda^{(1)}I-K_{D_1}^*)^{-1}[\partial_\nu H]\dd s
 =\int_{\partial D_2}G(\lambda^{(2)}I-K_{D_2}^*)^{-1}[\partial_\nu H]\dd s
\]
for every harmonic polynomial $H$ and every polynomial $G$, then $D_1=D_2$ and $\lambda^{(1)}=\lambda^{(2)}$.
\end{theorem}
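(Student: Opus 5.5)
The plan is to read the data, for each fixed harmonic polynomial $H$, as the complete set of polynomial moments of a compactly supported complex measure. Set $\varphi^{(k)}_H=(\lambda^{(k)}I-K_{D_k}^*)^{-1}[\partial_\nu H]$ on $\partial D_k$, and let $\sigma^{(k)}_H=\varphi^{(k)}_H\dd s$ be the corresponding boundary charge measure.

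\emph{Step 1: well-posedness on $L^2$.} Since $|\lambda^{(k)}|>1/2$ and the boundaries are analytic, $K_{D_k}^*$ is compact on $L^2(\partial D_k)$. Its real spectrum lies in $[-1/2,1/2]$, so $\lambda^{(k)}I-K_{D_k}^*$ is invertible on $L^2$. The operator $K_{D_k}^*$ preserves $L^2_0$, because $\int K_D^*\varphi\dd s=\frac12\int\varphi\dd s$ and $\lambda^{(k)}\ne1/2$. Hence $\lambda^{(k)}I-K_{D_k}^*$ restricts to an isomorphism of $L^2_0(\partial D_k)$. Moreover $\varphi^{(k)}_H$ is smooth, since the kernel is analytic.

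\emph{Step 2: equality of charge measures.} The hypothesis says that $\int G\dd\sigma^{(1)}_H=\int G\dd\sigma^{(2)}_H$ for every polynomial $G$ in $x,y$; by complexification this covers complex $G$, with bilinear pairing. Both measures are supported in a common compact rectangle. Polynomials are uniformly dense in the continuous functions there by Stone--Weierstrass, so the two finite measures agree: $\sigma^{(1)}_H=\sigma^{(2)}_H$ for every $H$. This is exactly where all polynomial observations, not only harmonic ones, are used.

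\emph{Step 3: recovering the boundary.} This is the step I expect to carry the weight. Let $E_k$ be the closure of $\bigcup_H\operatorname{supp}\sigma^{(k)}_H$. Step 2 gives $E_1=E_2$, and clearly $E_k\subset\partial D_k$. For the reverse inclusion, suppose an open arc $\Gamma\subset\partial D_k$ misses $E_k$. Then every $\varphi^{(k)}_H$ vanishes on $\Gamma$.
\begin{itemize}
\item By Lemma~\ref{lem:neumanndense}, the traces $\partial_\nu H$ are dense in $L^2_0(\partial D_k)$.
\item Applying the bounded inverse from Step 1, the functions $\varphi^{(k)}_H$ are then dense in $L^2_0(\partial D_k)$.
\item But a mean-zero $L^2$ function supported in $\Gamma$, with nonzero values there, cannot be approximated by functions vanishing on $\Gamma$.
\end{itemize}
This contradiction gives $E_k=\partial D_k$, so $\partial D_1=\partial D_2$. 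Each $D_k$ is the bounded component of the complement of its Jordan boundary, so $D_1=D_2=:D$.

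\emph{Step 4: recovering the contrast.} Now both densities live on the same curve and define the same measure, so $\varphi^{(1)}_H=\varphi^{(2)}_H=:\varphi_H$ as functions. Applying both operators gives $(\lambda^{(1)}I-K_D^*)\varphi_H=\partial_\nu H=(\lambda^{(2)}I-K_D^*)\varphi_H$. Hence $(\lambda^{(1)}-\lambda^{(2)})\varphi_H=0$. Choosing any $H$ with $\partial_\nu H\ne0$, for instance $H=x$, makes $\varphi_H\ne0$, and therefore $\lambda^{(1)}=\lambda^{(2)}$.

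The main obstacle is Step 3: showing that the supports of the charge measures fill the whole boundary. This relies on the density lemma together with invertibility of $\lambda I-K_D^*$ on the mean-zero $L^2$ space, rather than the energy space, so Step 1 must be checked carefully for analytic boundaries.
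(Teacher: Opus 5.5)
Your proposal is correct and follows essentially the same route as the paper's proof. Stone--Weierstrass identifies the boundary charge measures $\varphi_H\,\mathrm{d}s$ for each $H$. The density lemma, together with invertibility of $\lambda I-K_D^*$ on $L^2_0(\partial D)$, forces the two boundaries to coincide, and subtracting the resolvent equations with $H=x_1$ gives $\lambda^{(1)}=\lambda^{(2)}$. The only difference is presentational: you phrase the boundary step via the closed union of supports, whereas the paper argues pointwise at some $x\in\partial D_1\setminus\partial D_2$.
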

\begin{proof}
Linear combinations of input and observation monomials show that equality of all uncontracted GPTs at the respective contrasts gives the stated hypothesis. For each $H$ define
\[
 \varphi_{j,H}=(\lambda^{(j)}I-K_{D_j}^*)^{-1}[\partial_\nu H],
 \qquad \mu_{j,H}=\varphi_{j,H}\,\dd s_{\partial D_j}.
\]
The resolvents are bounded isomorphisms on $L^2_0(\partial D_j)$, so the response densities lie in $L^2$ and have zero integral. The data give identical polynomial moments for the compactly supported finite signed or complex measures $\mu_{1,H},\mu_{2,H}$. Their constant moments vanish as well. By Stone--Weierstrass,
\[
 \mu_{1,H}=\mu_{2,H}\quad\text{as measures on the ambient plane, for every }H.
\]

By Lemma~\ref{lem:neumanndense} and invertibility of the resolvents, the responses $\varphi_{j,H}$ span a dense subspace of $L^2_0(\partial D_j)$. Suppose $x\in\partial D_1\setminus\partial D_2$. A sufficiently small ball about $x$ avoids $\partial D_2$ and contains an open arc of $\partial D_1$. Equality of the charge measures makes every $\varphi_{1,H}$ vanish almost everywhere on that arc. Every function in their closed span would then vanish there, contradicting the existence of a nonzero smooth mean-zero function supported on the arc. Hence, $\partial D_1\subset\partial D_2$. Interchanging the domains gives equality of their boundaries and therefore of their bounded Jordan interiors.

On the common boundary, equality of the charge measures gives $\varphi_{1,H}=\varphi_{2,H}=\varphi_H$. Subtracting the two resolvent equations gives $(\lambda^{(1)}-\lambda^{(2)})\varphi_H=0$. For $H=x_1$, the normal derivative cannot vanish identically on a bounded smooth domain, so invertibility gives $\varphi_H\ne0$. Thus, $\lambda^{(1)}=\lambda^{(2)}$. The common value $\lambda=\lambda^{(1)}=\lambda^{(2)}$ gives the conductivity $k=(2\lambda+1)/(2\lambda-1)$.
\end{proof}

\section{Additional uniqueness classes and extensions}\label{sec:enlarge}
The class in Theorem~\ref{thm:main} admits several extensions: other finite-dimensional slices with injective spectral differential, symmetry-reduced families, and saturation under spectrum-preserving transformations. Saturation gives uniqueness on a quotient. Throughout this appendix, eigenvalues are ordered with multiplicity, as in Subsection~\ref{sec:spectralmetrics}.

\subsection{Ellipses: an exact global one-parameter class}
\begin{proposition}[Normalised ellipses]
Let $E_t$ have exterior map $\Psi_t(w)=w+t/w$, $0\le t<1$. Its semiaxes are $1+t$ and $1-t$, and its positive reduced NP eigenvalues are
\[
 \lambda_n(E_t)=\tfrac12t^n,\qquad n\ge1.
\]
At $t=0$ we use the zero-padded eigenvalue sequence. Hence
\[
 |t-u|=2|\lambda_1(E_t)-\lambda_1(E_u)|,
 \qquad d_H(\overline E_t,\overline E_u)=|t-u|.
\]
The NP spectrum therefore determines an ellipse up to similarities, with global Lipschitz stability of its capacity-normalised semiaxes. The axis ratio $(1+t)/(1-t)$ is uniformly Lipschitz in $\lambda_1$ only on intervals $t\le t_*<1$.
\end{proposition}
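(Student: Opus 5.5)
The plan is to compute the Grunsky operator of $\Psi_t(w)=w+t/w$ exactly and then apply the NP--Grunsky similarity \eqref{eq:block} together with \eqref{eq:singular}. The divided difference is
\[
 \frac{\Psi_t(w)-\Psi_t(z)}{w-z}=1-\frac{t}{wz}.
\]
Hence
\[
 -\log\Bigl(1-\frac{t}{wz}\Bigr)=\sum_{n\ge1}\frac{t^n}{n}w^{-n}z^{-n}.
\]
This gives $b_{nn}=t^n/n$ and $b_{ij}=0$ for $i\ne j$. Thus $B=\diag(t^n)_{n\ge1}$, which is self-adjoint and Hilbert--Schmidt for $0\le t<1$. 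The injectivity condition \eqref{eq:univalent} reduces to $t<1$, so $\Psi_t$ is the normalised exterior map for every $t\in[0,1)$, not only for small $t$.

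The NP--Grunsky representation \eqref{eq:block} holds for real coefficients whenever the boundary is analytic, which it is here. By \eqref{eq:singular}, the positive reduced eigenvalues are $\tfrac12 t^n$, listed in nonincreasing order. For $t=0$ the sequence is zero, matching the padding convention. The semiaxes come from writing $\Psi_t(e^{i\theta})=(1+t)\cos\theta+i(1-t)\sin\theta$.

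The Lipschitz identity is then immediate, since $\lambda_1=t/2$. For the Hausdorff distance I would use the support-function identity $d_H=\|h_K-h_L\|_\infty$ from the proof of Proposition~\ref{prop:geometry}. For an ellipse with semiaxes $p=1+t$ and $q=1-t$, the support function is
\[
 h_t(\theta)=\sqrt{(1+t)^2\cos^2\theta+(1-t)^2\sin^2\theta}=\sqrt{1+t^2+2t\cos2\theta}.
\]
At $\theta=0$ the difference $h_t-h_u$ equals $t-u$ exactly. It remains to show $|h_t(\theta)-h_u(\theta)|\le|t-u|$ for every $\theta$. The derivative satisfies $\partial_t h_t=(t+\cos2\theta)/h_t$, and $h_t^2=(t+\cos2\theta)^2+\sin^22\theta\ge(t+\cos2\theta)^2$, so $|\partial_t h_t|\le1$. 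The mean value theorem then gives the bound. This monotonicity check is the only step that needs care, and it is elementary.

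For the determination claim, note that any ellipse can be normalised by a similarity (translation, rotation and capacity scaling) to one of the $E_t$. The NP spectrum is invariant under similarities, and $\lambda_1=t/2$ recovers $t$. Therefore the spectrum determines the ellipse up to similarity, with the stated Lipschitz bound on the normalised semiaxes $1\pm t$.

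Finally, the axis ratio is $r(t)=(1+t)/(1-t)$, with derivative $r'(t)=2/(1-t)^2$ and $t=2\lambda_1$. The derivative is bounded by $2/(1-t_*)^2$ on $[0,t_*]$. It blows up as $t\to1$, so $r$ is not uniformly Lipschitz in $\lambda_1$ on all of $[0,1)$. This completes the plan.
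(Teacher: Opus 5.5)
Your proposal is correct and follows the paper's proof: the same exact Grunsky computation giving $B=\diag(t,t^2,\ldots)$ and hence $\lambda_n=\tfrac12t^n$, the support-function characterisation of $d_H$ with equality detected in the horizontal direction, and the derivative $2/(1-t)^2$ of the axis ratio. The only difference is the upper bound $d_H\le|t-u|$. The paper obtains it in one line by pairing boundary points of equal conformal angle, $\Psi_t(e^{i\theta})-\Psi_u(e^{i\theta})=(t-u)e^{-i\theta}$, whereas you differentiate the explicit support function $\sqrt{1+t^2+2t\cos2\theta}$ in $t$; both arguments are valid.
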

\begin{proof}
The logarithmic kernel is
\[
 -\log\frac{\Psi_t(w)-\Psi_t(\zeta)}{w-\zeta}
 =-\log(1-t/(w\zeta))
 =\sum_{n\ge1}\frac{t^n}{n}w^{-n}\zeta^{-n}.
\]
The normalised Grunsky matrix is therefore $\diag(t,t^2,\ldots)$, and \eqref{eq:singular} gives the eigenvalues. Corresponding boundary points differ by $(t-u)e^{-i\theta}$, so the Hausdorff distance between the convex bodies is at most $|t-u|$. Their horizontal support values differ by $|t-u|$, proving equality. The derivative of the axis ratio is $2/(1-t)^2$.
\end{proof}

\subsection{Single-mode families, including repeated leading eigenvalues}\label{sec:singlemode}
For fixed $k$ and $t\ge0$, the family in \cite{APRYZ}, equation~(2.9), has capacity-normalised map $\Psi_t(w)=w+t w^{-k}$. The integer $k$ specifies a mode; $m$ in the main theorem counts independent coefficients. To allow repeated singular values, we use an observable for the whole leading spectral cluster.

\begin{theorem}[Exact single-mode recovery]\label{thm:singlemode}
Fix $k\ge1$ and consider the family with a prescribed tail
\[
 \Psi_t(w)=w+t w^{-k}+\mathcal R(t;w),\qquad
 \mathcal R(t;w)=\sum_{n>k}r_n(t)w^{-n},
\]
where $\mathcal R$ is jointly holomorphic near $t=0$ and for $|w|>r_*$, with $0<r_*<1$. Its Laurent coefficients are real for real $t$, and it is uniformly $O(t^2)$ on a circle $r_*<|w|=r<1$. There are $\delta,c,C>0$ such that the domains are analytic and strictly convex for $0\le t\le\delta$ and, for all $t,u\in[0,\delta]$,
\begin{equation}\label{eq:singlelip}
 c|t-u|\le
 \left(\sum_{j=1}^k|\lambda_j(D_t)-\lambda_j(D_u)|^2\right)^{1/2}
 \le C|t-u|.
\end{equation}
The single scalar observable
\[
 \mathcal J_k(t)=\left(\sum_{j=1}^k\lambda_j(D_t)^2\right)^{1/2}
\]
is strictly increasing and has a Lipschitz inverse, uniformly down to the disk. For $t,u\in[0,\delta]$, the boundary Hausdorff distance and the coefficient distance in each fixed weighted space $\ell_s^2$, $s\ge0$, are equivalent to $|t-u|$. The coefficient constants may depend on $s$.
\end{theorem}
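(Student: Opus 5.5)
The plan is to avoid individual eigenvalue branches entirely, since they are genuinely repeated here, and to work with a rescaled Grunsky operator that stays analytic down to $t=0$.

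\emph{Geometry and the linear part.} As in the proof of Corollary~\ref{cor:generaltail} with $m=k$, Cauchy estimates give $|r_n(t)|\le Ct^2r^n$ and $|r_n'(t)|\le Ctr^n$. Hence Lemma~\ref{lem:analytic} holds for the one-parameter map. Specifically, $B(t)$ is a real symmetric Hilbert--Schmidt operator, analytic in $t$ near zero, with
\[
 B(t)=tH_k\oplus0+R(t),\qquad \norm{R(t)}_{\HS}\le Ct^2 .
\]
Here $H_k$ is the anti-diagonal $k\times k$ matrix with entries $(H_k)_{i,k+1-i}=\sqrt{i(k+1-i)}$, because the degree-one term of the logarithm is $tT_k$. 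The same argument gives analytic, strictly convex boundaries for $0\le t\le\delta$.

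For the geometric part, I would run the support-function argument of Proposition~\ref{prop:geometry} with the single parameter $t$. It gives $h_t=1+t\cos((k+1)\theta)+\mathcal R_t$ with $\norm{\mathcal R_t-\mathcal R_u}_\infty\le C\delta|t-u|$. This yields body Hausdorff equivalence with $|t-u|$, and hence boundary Hausdorff equivalence by the remark following Proposition~\ref{prop:geometry}. Coefficient equivalence in $\ell^2_s$ follows from $c_k=t$ together with the tail Lipschitz bound $\big(\sum_{n>k}n^{2s}|r_n(t)-r_n(u)|^2\big)^{1/2}\le C_s\delta|t-u|$.

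\emph{Rescaling and the cluster observable.} The singular values of $H_k$ are $\sqrt{i(k+1-i)}$. These repeat in pairs, $i\leftrightarrow k+1-i$, so Lemma~\ref{lem:perturb} does not apply. Instead, since $B(0)=0$ and $B$ is analytic, the operator
\[
 \widehat B(t)=B(t)/t=H_k\oplus0+O_{\HS}(t)
\]
extends analytically to $t=0$. The nonzero spectrum of $\widehat B(0)^2$ is the set $\{i(k+1-i)\}$, which lies in $[k,\infty)$, with total multiplicity $k$; the rest of its spectrum is $0$. Take a fixed contour $\Gamma$ enclosing $[k/2,\,k^2]$ and avoiding $0$. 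Then the Riesz projection $P(t)$ of $\widehat B(t)^2$ for $\Gamma$ is analytic and has rank $k$ for small $|t|$. It follows that
\[
 N(t)=\Tr\big(P(t)\widehat B(t)^2\big)
\]
is real analytic, with $N(0)=\Tr H_k^2=\sum_{i=1}^k i(k+1-i)>0$.

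For $0<t\le\delta$, the $k$ squared singular values of $\widehat B(t)$ inside $\Gamma$ are exactly the top $k$. Indeed, the rest are $O(t)$ by the min--max argument of Lemma~\ref{lem:perturb}. Combining this with \eqref{eq:singular} gives
\[
 4\mathcal J_k(t)^2=\sum_{j\le k}s_j(B(t))^2=t^2N(t),
 \qquad\text{so}\qquad
 \mathcal J_k(t)=\tfrac12t\sqrt{N(t)} ,
\]
and this identity also holds at $t=0$. Consequently $\mathcal J_k'(t)=\tfrac12\sqrt{N(0)}+O(t)\ge c>0$ on $[0,\delta]$. Therefore $\mathcal J_k$ is strictly increasing, with $|\mathcal J_k(t)-\mathcal J_k(u)|\ge c|t-u|$, which gives a Lipschitz inverse uniformly down to the disk.

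\emph{The two-sided estimate \eqref{eq:singlelip}.} For the lower bound, observe that $\mathcal J_k$ is the Euclidean norm of $(\lambda_1,\dots,\lambda_k)$. The reverse triangle inequality then gives
\[
 \Big(\sum_{j\le k}|\lambda_j(D_t)-\lambda_j(D_u)|^2\Big)^{1/2}\ge|\mathcal J_k(t)-\mathcal J_k(u)|\ge c|t-u| .
\]
For the upper bound, apply Lemma~\ref{lem:mirsky} and \eqref{eq:singular} to get the bound $\tfrac12\norm{B(t)-B(u)}_{\HS}$. This is at most $C|t-u|$ by the uniform Hilbert--Schmidt bound on $B'$, exactly as in Theorem~\ref{thm:forward}.

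The step I expect to need the most care is the identification of the top $k$ singular values of $B(t)$ with the spectrum of $\widehat B(t)^2$ inside the fixed contour $\Gamma$, uniformly as $t\downarrow0$. The analyticity of $\widehat B$ at $t=0$ is what makes the gaps fixed after rescaling. The repeated values inside the cluster are harmless, because only the trace over the cluster is used.
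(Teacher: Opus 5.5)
Your proof is correct and follows the paper's argument essentially step for step. Both use the analytically rescaled operator $B(t)/t$, a rank-$k$ analytic Riesz projection, and the identity $\mathcal J_k(t)=\tfrac t2\sqrt{N(t)}$ with derivative $\tfrac12\sqrt{N(0)}+O(t)$. Both then use the reverse triangle inequality and Lemma~\ref{lem:mirsky} for \eqref{eq:singlelip}, and the one-parameter support-function expansion for the geometry. The only cosmetic difference is that you project for $\widehat B(t)^2$ with a single contour around $[k/2,k^2]$, whereas the paper uses disjoint contours around the signed nonzero eigenvalues of $C(t)=B(t)/t$; for small $t$ these give the same projection.
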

\begin{proof}
The estimates in Subsections~\ref{sec:geometry} and~\ref{sec:analytic} apply to this family. In the real self-adjoint Grunsky realisation,
\[
 B(t)=tA_k+O_{\HS}(t^2),\qquad
 (A_k)_{ij}=\begin{cases}\sqrt{ij},&i+j=k+1,\\0,&\text{otherwise}.
 \end{cases}
\]
The remainder has derivative $O_{\HS}(t)$. The matrix $A_k$ has rank $k$, with $k$ nonzero singular values that may repeat, and
\[
 \chi_k=\norm{A_k}_{\HS}^2
 =\sum_{j=1}^k j(k+1-j)=\frac{k(k+1)(k+2)}6.
\]
Define $C(t)=B(t)/t$ by analytic extension, with $C(0)=A_k$. Choose fixed disjoint contours around the nonzero eigenvalues of $A_k$, excluding zero. Their combined Riesz projection $P(t)$ is analytic of rank $k$. For small real $t$, it selects the $k$ eigenvalues of $C(t)$ with largest absolute value, counted with multiplicity. Thus
\[
 f(t)=\Tr(C(t)^2P(t))=\chi_k+O(t)
\]
is analytic, real and positive near zero, with $f'(t)=O(1)$. For $t\ge0$,
\[
 \mathcal J_k(t)=\tfrac t2\sqrt{f(t)},\qquad
 \mathcal J_k'(t)=\tfrac12\sqrt{\chi_k}+O(t).
\]
Reducing $\delta$ bounds the derivative above and below by positive constants. The mean-value theorem and reverse triangle inequality give the lower bound in \eqref{eq:singlelip}. The upper bound follows from the Hilbert--Schmidt singular-value inequality and $\norm{B(t)-B(u)}_{\HS}\le C|t-u|$.

Corresponding boundary points give the geometric upper bound. For the lower bound, use the support-function expansion
\[
 h_t(\theta)=1+t\cos((k+1)\theta)+q_t(\theta),\qquad
 \norm{q_t-q_u}_{C^0}\le C\delta|t-u|.
\]
This follows from the calculation in Proposition~\ref{prop:geometry}: the linear normal displacement is the displayed cosine, and the remainder has derivative $O(t)$. Taking the supremum and reducing $\delta$ gives the lower bound for the convex bodies, hence also for their boundaries. The prescribed tail has coefficient derivative $O(t)$ in every fixed polynomially weighted sequence norm, whereas the coefficient of $w^{-k}$ is exactly $t$. These give the coefficient estimate.
\end{proof}

For $k\ge2$ the first-order terms of the positive NP eigenvalues, counted with multiplicity, are
\[
 \frac t2\sqrt{j(k+1-j)},\qquad j=1,\ldots,k,
\]
after rearrangement in nonincreasing order. This is the first-order spectral pattern of \cite[Theorems~2.2--2.3]{APRYZ}. The cluster argument handles the repetitions and gives a uniform inverse estimate for exact eigenvalues. The mode $k$ and the remainder are fixed. At $t=0$, all mode families meet at the disk, so an unknown integer $k$ cannot be recovered uniformly.

\subsection{Comparison with plasmonic particle shape recovery}\label{sec:APRYZ}
Ammari, Putinar, Ruiz, Yu, and Zhang \cite{APRYZ} study a single-particle family with normalised boundary $e^{i\theta}+t e^{-ik\theta}$ and one continuous coefficient for each fixed mode index. Our main theorem varies $m$ real coefficients independently, with a prescribed higher tail. Their finite Laurent maps generally lie outside the positive-orthant cones of the $m$-parameter theorem. Theorem~\ref{thm:singlemode} treats their repeated leading singular values separately.

Their Theorems~2.2--2.3 give first-order NP eigenvalue expansions and leading-order eigenfunctions. For $k\ge3$, Section~2.6 identifies two distinct positive leading resonance locations from the first-order polarization tensor and second-order contracted polarization tensors:
\[
 \ell=\tfrac t2\sqrt{k},\qquad
 \ell'=\tfrac t2\sqrt{2(k-1)}.
\]
These give the algebraic inversion
\[
 k=\frac{\ell^2}{\ell^2-(\ell')^2/2},\qquad
 t=2\sqrt{\ell^2-(\ell')^2/2}.
\]
For $k=1$, $\ell'=0$; for $k=2$, the two locations coincide. For $k\ge3$, the locations approximate exact NP eigenvalues to first order, so the formulas invert the leading model. The response channels label the resonances through geometric couplings; they need not be the first two eigenvalues ordered by size. Nor need a finite collection of leading visible resonances exhaust the exact spectrum.

Our injectivity and uniform Lipschitz estimates on specified small families rely on controlling the nonlinear remainder and counting eigenvalues with multiplicity. The response analysis in \cite{APRYZ} identifies the modes that couple to low-order polarization measurements and recovers size from amplitudes in the leading model, adding information beyond eigenvalues. Its Theorem~3.2 gives an exact spectrum for two nearly touching disks. For equal radius $r>0$ and gap $g>0$, the ordered positive eigenvalues on densities with zero charge on each component satisfy
\[
 \lambda_{+,2n-1}=\lambda_{+,2n}=\tfrac12e^{-2n\xi_0},\quad n\ge1,\qquad
 \cosh\xi_0=1+\frac{g}{2r}.
\]
Each positive and negative level has multiplicity two, corresponding to the cosine and sine bipolar modes. Hence $g/r=2\{\cosh[-\tfrac12\log(2\lambda_{+,1})]-1\}$. The relative gap is thus recovered exactly in this disconnected geometry; absolute size remains undetermined.

\subsection{A local criterion and prescribed perturbations}
\begin{theorem}[Finite-dimensional local spectral criterion]\label{thm:localslice}
Let $p\mapsto D_p$, $p\in U$, be a normalised family on an open set $U\subset\R^d$, and let $\mathcal F:U\to\R^N$ be a $C^1$ vector of quantities determined by the NP spectrum, with $N\ge d$. Suppose $p_0\in U$ and $D\mathcal F(p_0)$ has rank $d$. There is then a convex ball $V$ about $p_0$ on which
\[
 c|p-q|\le|\mathcal F(p)-\mathcal F(q)|\le C|p-q|.
\]
If the boundary parametrisations are locally Lipschitz in $p$ in $C^0$, then
$d_H(\partial D_p,\partial D_q)\le C'|\mathcal F(p)-\mathcal F(q)|$.
\end{theorem}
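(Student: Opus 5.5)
The argument is the standard local-injectivity one, organised exactly like the proof of Theorem~\ref{thm:main}: compare the differential on a small convex set with one fixed injective matrix, then integrate along segments. Set $A=D\mathcal F(p_0)$, an $N\times d$ matrix of rank $d$, and let $\alpha=\sigma_{\min}(A)>0$ be its smallest singular value. Then $|Ax|\ge\alpha|x|$ for all $x\in\R^d$. Since $\mathcal F$ is $C^1$, we can choose a closed ball $\overline V\subset U$ about $p_0$ on which
\[
 \norm{D\mathcal F(p)-A}\le\alpha/2 \qquad\text{and}\qquad \norm{D\mathcal F(p)}\le \norm A+\alpha/2 .
\]
Balls are convex, so for $p,q\in V$ the segment $q+\tau(p-q)$, $0\le\tau\le1$, stays in $V$.

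Integrating the differential along this segment gives
\[
 \mathcal F(p)-\mathcal F(q)=A(p-q)+e,\qquad e=\int_0^1\bigl(D\mathcal F(q+\tau(p-q))-A\bigr)(p-q)\dd\tau .
\]
The error satisfies $|e|\le\frac\alpha2|p-q|$. The reverse triangle inequality then yields the two-sided estimate
\[
 \frac\alpha2|p-q|\le|\mathcal F(p)-\mathcal F(q)|\le\Bigl(\norm A+\frac\alpha2\Bigr)|p-q|,
\]
so we may take $c=\alpha/2$ and $C=\norm A+\alpha/2$. This is the full-rank analogue of \eqref{eq:globalS}; the only change is that $A$ need not be square, and we use $\sigma_{\min}$ in place of invertibility.

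For the Hausdorff bound, let $z_p$ denote the boundary parametrisation, assumed locally Lipschitz in $p$ in $C^0$. After shrinking $V$, compactness gives a constant $L$ with $\norm{z_p-z_q}_{C^0}\le L|p-q|$ for $p,q\in V$. Each point $z_p(\theta)$ of $\partial D_p$ is then within $L|p-q|$ of the point $z_q(\theta)$ of $\partial D_q$, and the same holds with $p$ and $q$ interchanged. Hence $d_H(\partial D_p,\partial D_q)\le L|p-q|\le (L/c)\,|\mathcal F(p)-\mathcal F(q)|$, which gives $C'=L/c$.

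There is no real obstacle here. The only points needing care are that $V$ must be convex, so that segments stay in the region where the differential comparison holds, and that rank $d$ must be turned into the quantitative lower bound $|Ax|\ge\alpha|x|$ through the smallest singular value.
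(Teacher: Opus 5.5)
Your proposal is correct and matches the paper's proof essentially step for step. The paper also takes $A=D\mathcal F(p_0)$ with smallest singular value $\sigma>0$ and shrinks to a ball where $\norm{D\mathcal F(p)-A}_{\op}\le\sigma/2$; it then integrates along segments to get $\mathcal F(p)-\mathcal F(q)=A(p-q)+e$ with $|e|\le\tfrac\sigma2|p-q|$, and obtains the Hausdorff bound by pairing equal boundary parameters, so your explicit constants $c=\alpha/2$, $C=\norm{A}+\alpha/2$ and $C'=L/c$ simply make that terse argument quantitative.
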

\begin{proof}
Let $A=D\mathcal F(p_0)$ and let $\sigma>0$ be its smallest singular value. On a sufficiently small ball, $\norm{D\mathcal F(p)-A}_{\op}\le\sigma/2$. Integrating along the segment from $q$ to $p$ gives
\[
 \mathcal F(p)-\mathcal F(q)=A(p-q)+e,
 \qquad |e|\le\tfrac\sigma2|p-q|.
\]
The smallest singular value of $A$ gives the lower bound; boundedness of $D\mathcal F$ gives the upper bound. Pairing the same boundary parameter on the two curves gives the Hausdorff estimate.
\end{proof}
One may use isolated simple eigenvalue branches or sums of powers over isolated clusters containing multiple eigenvalues. The rank condition must be checked in each case. With normalisation fixed, the criterion also covers nonsymmetric conformal families and families with complex coefficients, provided the spectral differential has full real rank.

The criterion persists under prescribed perturbations. If $\norm{D\mathcal F-A}_{\op}\le\sigma/4$ on a convex parameter set $V$, then every $C^1$ spectral map $\widetilde{\mathcal F}$ with $\norm{D\widetilde{\mathcal F}-D\mathcal F}_{\op}\le\sigma/4$ has lower Lipschitz constant $\sigma/2$. A prescribed perturbation satisfying these bounds therefore preserves uniqueness; the prescribed quadratic remainder in Corollary~\ref{cor:generaltail} gives such an extension near the disk. Allowing tail coefficients to vary independently adds unknowns. Differentiating the lower Lipschitz bound along each direction shows that a differentiable spectral map locally bi-Lipschitz in $d$ parameters must have differential of rank $d$. The argument therefore needs at least $d$ scalar observables. Uniqueness on a union of families also requires comparisons between the families.

\subsection{M\"obius saturation and quotient stability}\label{sec:mobius}
M\"obius transformations generally change the normalisation, the prescribed Laurent coefficient relations, and convexity. They still give an invariant enlargement of a normalised uniqueness class.

For a smooth Jordan domain $D$ and an extended M\"obius map $T$ with no pole on $\partial D$, let $T_\#D$ denote the bounded Jordan domain enclosed by $T(\partial D)$. Given a normalised uniqueness class $\mathcal A=\{D_{\avec}:\avec\in V\}$ with $V\subset\R^d$, define its saturation by
\[
 \operatorname{Sat}(\mathcal A)=\{T_\#D_{\avec}:\avec\in V,\ T\text{ extended M\"obius, pole off }\partial D_{\avec}\}.
\]
\begin{lemma}[Reduced planar spectral invariance]\label{lem:mobius}
For every smooth Jordan domain $D$ and every such $T$, the reduced NP spectra of $D$ and $T_\#D$ agree, including multiplicities.
\end{lemma}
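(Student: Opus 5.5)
I will reduce to generators. Every extended Möbius map is a composition of similarities $z\mapsto az+b$, complex conjugation, and the inversion $z\mapsto1/z$ (or $1/(z-z_0)$ after a translation). Similarities act on the NP kernel trivially: under $y\mapsto ay+b$ the kernel $\frac{(x-y)\cdot\nu_x}{|x-y|^2}\dd s_y$ is invariant, because the factor $|a|^{-1}$ from the kernel cancels the factor $|a|$ from arclength. Hence $K^*_{aD+b}$ is unitarily conjugate to $K^*_D$ by composition with the map, and the conjugation is an isometry on $L^2$ up to a constant. Conjugation is a reflection, so it preserves the kernel as well, up to the orientation of the normal, which it keeps as the outward normal. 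The real content is therefore the inversion $T(z)=1/(z-z_0)$ with $z_0\notin\partial D$. I split into the cases $z_0\in D$ and $z_0$ exterior. In the first case $T$ sends $D$ to the exterior of $T(\partial D)$; in the second it sends $D$ to $T_\#D$ itself.

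For the inversion I will work with the double-layer operator $K_D$, the $L^2$ adjoint of $K_D^*$, which has the same nonzero spectrum with the same multiplicities because both are compact on the relevant smooth spaces. Writing the kernel in complex form,
\[
\frac{(x-y)\cdot\nu_y}{|x-y|^2}\dd s_y=\Im\frac{\dd y}{y-x},
\]
I get $K_D[f](x)=\frac1{2\pi}\Im\int_{\partial D}\frac{f(y)\dd y}{y-x}$ (principal value, up to sign convention). Under $y=T^{-1}(\eta)$ I use the identity
\[
\frac{\dd y}{y-x}=\frac{\dd\eta}{\eta-\xi}-\frac{\dd\eta}{\eta-\eta_\infty},
\]
where $\xi=T(x)$ and $\eta_\infty=T(\infty)$ (for $T=1/(z-z_0)$, $\eta_\infty=0$). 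Consequently $K_D[f](x)=\pm K_{T(\partial D)}[f\circ T^{-1}](\xi)+c(f)$, where $c(f)$ is a functional independent of $x$. The sign depends on whether the orientation is reversed, which happens exactly when $z_0\in D$. So composition with $T$ intertwines $K_D$ with $\pm K_{T_\#D}$ modulo a rank-one operator whose range is the constants. Since the reduced spectrum is symmetric ($\mu\leftrightarrow-\mu$), the sign is harmless.

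The main obstacle is handling this rank-one constant term and passing to the reduced spaces. I will work on the quotient of smooth (or $H^{1/2}$) functions modulo constants, which is the dual of $H^{-1/2}_0$. On this quotient $K_D$ acts because $K_D[1]=\tfrac12$ is constant, and its spectrum there is exactly the reduced spectrum, i.e. the full spectrum with one copy of $1/2$ removed. The identity above says that composition with $T$ induces a bounded isomorphism of these quotients conjugating $K_D$ to $\pm K_{T_\#D}$ exactly. Equality of eigenvalues with algebraic multiplicities then follows from similarity. Because the reduced operators are self-adjoint in the energy inner product, algebraic and geometric multiplicities coincide. For the zero eigenvalue, compactness and similarity give equal kernel dimension as well. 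Finally I compose the generators and use the symmetry of the spectrum to absorb the signs.
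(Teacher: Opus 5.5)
Your proposal is correct. It is in effect the transpose of the paper's argument, organised differently enough to be worth comparing.

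\emph{The paper's route.} The paper stays on $H^{-1/2}_0(\partial D)$, where the reduced spectrum is defined. It applies the integrated identity $\log|T(x)-T(y)|=\log|x-y|+\log|ad-bc|-\log|cx+d|-\log|cy+d|$ to the single-layer kernel and uses the charge-preserving transport $(U\varphi)(T(x))=\varphi(x)/|T'(x)|$. The jump relations and the conformal normal-derivative rule then give $K^*_{T_\#D}U=\epsilon UK^*_D$. The mean-zero condition kills the $x$-dependent logarithm, and the $y$-dependent one contributes only a constant.

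\emph{Your route.} You differentiate the same identity in $y$, which gives a pointwise identity for the double-layer kernel $\operatorname{Im}\,\dd y/(y-x)$. The intertwining is then a bare change of variables with no jump relations. The orientation sign comes out transparently from the direction of the image contour. The correction term is a rank-one operator with constant range, which you remove by passing to $H^{1/2}(\partial D)/\mathbb C$. Two small remarks. Your identity holds for any M\"obius map with $\eta_\infty=T(\infty)$, so the reduction to inversions is only needed to isolate conjugation. With the paper's normalisation the double-layer numerator is $(y-x)\cdot\nu_y$; your sign slip is harmless.

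\emph{What your route costs.} You must transfer the similarity from the quotient of $H^{1/2}$ back to $H^{-1/2}_0$. Riesz--Schauder duality handles the nonzero eigenvalues. At the eigenvalue $0$, however, compactness does not equate $\dim\ker A$ with $\dim\ker A'$; a compact weighted shift is a counterexample. So your sentence on the zero eigenvalue really needs the self-adjoint structure. There are two clean fixes.
\begin{itemize}
\item Use Plemelj's symmetrisation $K_DS_D=S_DK_D^*$. It makes $\varphi\mapsto[S_D\varphi]$ an explicit similarity from $H^{-1/2}_0$ onto $H^{1/2}/\mathbb C$.
\item Transpose your identity. The transpose of $f\mapsto f\circ T$ is exactly the paper's $U$. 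The transpose of the constant-valued rank-one term vanishes on mean-zero densities. This recovers $K^*_{T_\#D}U=\pm UK^*_D$ directly, with no duality step.
\end{itemize}
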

\begin{proof}
For $T(z)=(az+b)/(cz+d)$,
\[
 \log|T(x)-T(y)|=\log|x-y|+\log|ad-bc|
 -\log|cx+d|-\log|cy+d|.
\]
Define the charge-preserving boundary transport by
$(U\varphi)(T(x))=\varphi(x)/|T'(x)|$. For mean-zero densities, the normal derivative of the extra $x$-term vanishes after integration, since its coefficient is total charge; the extra $y$-term is independent of $x$. The conformal normal-derivative rule gives
\[
 K^*_{T_\#D}U=\epsilon U K_D^*,
\]
where $\epsilon=1$ when $T(D)$ is bounded, and $\epsilon=-1$ when the bounded complementary component is selected. The transport $U$ is boundedly invertible between the corresponding Sobolev spaces and preserves zero total charge. The reduced spectra therefore agree up to sign. Their planar Jordan symmetry $\mu\leftrightarrow-\mu$ removes the sign. Reflection preserves the kernel, giving the extended case by composition.
\end{proof}

\begin{theorem}[Uniqueness and stability modulo M\"obius transformations]\label{thm:saturation}
Suppose $\mathcal A$ has injective spectral data $\Lambda(D_{\avec})$ and
\[
 |\avec-\bvec|\le C|\Lambda(D_{\avec})-\Lambda(D_{\bvec})|.
\]
Then the NP spectrum determines each domain in $\operatorname{Sat}(\mathcal A)$ up to extended M\"obius equivalence. Every orbit in this saturation has exactly one representative in $\mathcal A$. On the quotient define
\[
 d_{\mathcal A}([T_\#D_{\avec}],[S_\#D_{\bvec}])=|\avec-\bvec|.
\]
This metric is well defined and satisfies the same inverse Lipschitz bound in the spectral data. If the original class also has the forward spectral estimate, the quotient parameter metric and spectral distance are bi-Lipschitz equivalent.
\end{theorem}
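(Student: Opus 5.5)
The argument combines Lemma~\ref{lem:mobius} with the injectivity of $\Lambda$ on $\mathcal A$. The only delicate point is that the parameter is recovered from the domain, not from the chosen Möbius map.

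\emph{Step 1: spectral determination modulo Möbius maps.} Suppose $T_\#D_{\avec}$ and $S_\#D_{\bvec}$ in $\operatorname{Sat}(\mathcal A)$ have equal reduced NP spectra, counted with multiplicity. Lemma~\ref{lem:mobius}, applied once to $T$ and once to $S$, shows that $D_{\avec}$ and $D_{\bvec}$ then have equal reduced spectra. Since $\Lambda$ is computed from the ordered spectrum, $\Lambda(D_{\avec})=\Lambda(D_{\bvec})$. The inverse bound gives $\avec=\bvec$, so $D_{\avec}=D_{\bvec}$. Both domains are images of this common domain under extended Möbius maps, and hence they are extended Möbius equivalent.

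For this to make sense, the relation ``$\Omega_1\sim\Omega_2$ if $\Omega_2=T_\#\Omega_1$'' must be an equivalence relation on smooth Jordan domains. I would check this directly. Extended Möbius maps form a group. Composition respects the selected bounded component, since $T_\#$ depends only on the image boundary curve. The inverse map sends the image boundary back to $\partial\Omega_1$, and its pole lies off $T(\partial\Omega_1)$. So every element of the saturation lies in the orbit of some $D_{\avec}$.

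\emph{Step 2: uniqueness of the representative.} This is the step I expect to need the most care. Suppose $D_{\avec}$ and $D_{\bvec}$ lie in one orbit. By Lemma~\ref{lem:mobius} their reduced spectra agree, so $\Lambda(D_{\avec})=\Lambda(D_{\bvec})$, and the inverse estimate forces $\avec=\bvec$. Thus the injectivity of the spectral data, rather than any geometric normalisation argument, shows that each orbit meets $\mathcal A$ in exactly one domain.

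The parameter must also be determined by the domain. I would record that $\avec\mapsto D_{\avec}$ is injective, which again follows from injectivity of $\Lambda$. Then each orbit $[\Omega]$ determines a unique parameter $\avec([\Omega])$.

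\emph{Step 3: the quotient metric.} Set $d_{\mathcal A}([\Omega_1],[\Omega_2])=|\avec([\Omega_1])-\avec([\Omega_2])|$. This is independent of the chosen $T,S$ by Step~2. It is a metric because it is the pullback of the Euclidean metric under an injective map on orbits.

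The reduced spectrum is constant on orbits by Lemma~\ref{lem:mobius}. Hence $\Lambda$ descends to a function on the quotient, and the hypothesis gives
$d_{\mathcal A}([T_\#D_{\avec}],[S_\#D_{\bvec}])\le C|\Lambda(T_\#D_{\avec})-\Lambda(S_\#D_{\bvec})|$.
If in addition $|\Lambda(D_{\avec})-\Lambda(D_{\bvec})|\le C'|\avec-\bvec|$, for instance from Theorem~\ref{thm:main} or Theorem~\ref{thm:forward}, the same transfer gives the reverse inequality. This yields the bi-Lipschitz equivalence of the quotient parameter metric and the spectral distance.
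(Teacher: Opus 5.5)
Your proof is correct and follows the paper's argument. Lemma~\ref{lem:mobius} transfers equality of spectra from $T_\#D_{\avec}$ and $S_\#D_{\bvec}$ to $D_{\avec}$ and $D_{\bvec}$, injectivity of $\Lambda$ then gives $\avec=\bvec$ and a unique representative in each orbit, and the quotient metric inherits the estimates. Your explicit checks, that extended M\"obius equivalence with the pole condition is an equivalence relation and that $\avec\mapsto D_{\avec}$ is injective, make rigorous the well-definedness that the paper's proof leaves implicit.
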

\begin{proof}
Equal spectra for $T_\#D_{\avec}$ and $S_\#D_{\bvec}$ imply equal spectra for $D_{\avec},D_{\bvec}$ by Lemma~\ref{lem:mobius}. Injectivity gives $\avec=\bvec$, proving uniqueness modulo the group. Each orbit has at most one representative in $\mathcal A$ by the same argument, and has one by the definition of saturation. The metric inherits the estimates on these representatives.
\end{proof}
The metric measures the parameters of the unique normalised representative in the chosen class. Eigenvalues cannot recover Euclidean positions or transformation parameters: translations preserve the spectrum, and taking an infimum of Hausdorff distances over arbitrary transformations of both domains gives zero under small scalings. For a \emph{known common} transformation $T$, Euclidean estimates follow from bounds on $|T'|$ and its inverse near the boundaries. The constants deteriorate as a pole approaches a boundary.

\subsection{Overlap with algebraic domains}\label{sec:algebraic}
A domain has a real algebraic boundary if its boundary is contained in the zero set of a nonzero polynomial in $\R[x,y]$.

\begin{proposition}[Overlap with algebraic boundaries]
Let $\Psi(w)=w+\sum_{n=1}^m a_nw^{-n}$ have real coefficients and be univalent on $\{|w|\ge1\}$, with $\Psi'(w)\ne0$ for $|w|=1$. The bounded domain enclosed by $\Psi(\{|w|=1\})$ then has a real algebraic boundary. Thus, with $F=0$, our uniqueness classes consist of smooth algebraic Jordan domains. Algebraicity alone does not imply membership in a fixed cone with a prescribed tail.
\end{proposition}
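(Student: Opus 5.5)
The plan is to write the boundary as the image of the unit circle under a rational map and then eliminate the parameter. On $|w|=1$ we have $\bar w=1/w$. Because the coefficients are real, the boundary point $z=\Psi(w)$ satisfies $\bar z=\overline{\Psi(w)}=\Psi(\bar w)=\Psi(1/w)$. Hence every boundary point lies on the image of the rational curve
\[
 w\mapsto\bigl(\Psi(w),\Psi(1/w)\bigr)\in\C^2,
\]
intersected with the antidiagonal real slice $\{(z,\bar z)\}$. Both $\Psi(w)$ and $\Psi(1/w)$ are rational in $w$ with poles only at $0$ and $\infty$. So I would take the resultant in $w$ of the two polynomials
\[
 w^m\Psi(w)-w^mz \qquad\text{and}\qquad w^m\Psi(1/w)-w^m\zeta ,
\]
where the second polynomial is $w^{m+1}+\sum_n a_nw^{m+n}-w^m\zeta$ after clearing powers. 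This produces a polynomial $\Phi(z,\zeta)\in\C[z,\zeta]$ that vanishes on the image curve. Equivalently, one can invoke the standard fact that the image of a rational curve is contained in an algebraic curve; the implicitization can be stated abstractly, since the transcendence degree of $\C(\Psi(w),\Psi(1/w))$ over $\C$ is one.

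Next I need $\Phi$ to be nonzero and to have real coefficients after substituting $\zeta=\bar z$ and splitting into real and imaginary parts. I would show nontriviality by noting that $\Psi(w)$ is nonconstant. Then $\Psi(w)$ and $\Psi(1/w)$ are algebraically dependent but $\Phi$ is not identically zero in the pair; concretely, the resultant has leading behaviour in $z$ coming from the leading coefficient of $w\mapsto w^m\Psi(w)$, so it is a nonzero polynomial of degree $m+1$ in $z$. Realness comes from the symmetry $\Phi(z,\zeta)=\overline{\Phi(\bar\zeta,\bar z)}$ up to a unimodular constant, which follows from the real coefficients and the swap $w\leftrightarrow1/w$. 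Then $\Phi(x+iy,x-iy)$, multiplied if necessary by a constant, is a real polynomial in $x,y$ or has a nonzero real or imaginary part; either part is a nonzero element of $\R[x,y]$ vanishing on the boundary. I expect this to be the main obstacle: guaranteeing that some real part is nonzero. To handle it I would argue that if both $\Re\Phi(x+iy,x-iy)$ and $\Im\Phi(x+iy,x-iy)$ vanished identically, then $\Phi(z,\bar z)\equiv0$ on $\C$. Since $z$ and $\bar z$ are independent variables for polynomials, this forces $\Phi\equiv0$.

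The smooth Jordan claim uses the hypotheses directly. Univalence on the closed exterior disk together with $\Psi'\ne0$ on the circle makes $\theta\mapsto\Psi(e^{i\theta})$ an analytic immersed simple closed curve, hence a smooth (indeed analytic) Jordan curve. For $F=0$ the class in Theorem~\ref{thm:main} consists of exactly such maps with real coefficients, so each domain there is algebraic.

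For the final sentence, I would give a single example. An algebraic domain whose exterior map has infinitely many nonzero Laurent coefficients is the disk image under a Möbius map, for instance a disk not centred appropriately after normalisation. Another option is an ellipse with its axis rotated, whose coefficients are not real and so lie outside every real cone. Either one shows that algebraicity does not force membership in a fixed cone with a prescribed tail. Alternatively, the lemniscates $D_P$ of Theorem~\ref{thm:counter} are algebraic, and their exterior maps are not finite Laurent polynomials with a prescribed tail; I would cite the ellipse with complex $t$ as the cleanest instance.
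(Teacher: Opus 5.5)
Your route to algebraicity is the paper's. You use $\overline{\Psi(w)}=\Psi(1/w)$ on $|w|=1$, algebraic dependence of $\Psi(w)$ and $\Psi(1/w)$ because $\C(w)$ has transcendence degree one, the invertible substitution $z=x+iy$, $\zeta=x-iy$, and a nonzero real or imaginary part. The analytic Jordan property likewise comes from univalence and $\Psi'\neq0$. Two details of your version need repair.

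The explicit resultant is mis-cleared. In fact $w^m\Psi(1/w)=w^{m-1}+\sum_na_nw^{m+n}$, not $w^{m+1}+\cdots$. With this clearing, both polynomials are divisible by $w$ whenever $a_m=0$ and $m\ge2$ (an ellipse viewed in the $m=2$ family, for instance), so the resultant vanishes identically. Your appeal to the leading coefficient of $w^m\Psi(w)$ does not detect this. The fix is to clear $\Psi(1/w)$ only by $w$, giving $1+\sum_na_nw^{n+1}-\zeta w$, and to clear $\Psi(w)$ by $w^{m'}$, where $m'$ is the largest index with $a_{m'}\neq0$ (or $m'=0$ for the disk). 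Then neither polynomial vanishes at $w=0$, and a common root would make $z$ algebraic over $\C(\zeta)$. Alternatively, keep only the transcendence-degree argument, as the paper does. The claimed conjugation symmetry of $\Phi$ is unproved, but it is also unnecessary, since your real/imaginary-part argument suffices.

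For the final sentence, discard the M\"obius example. A M\"obius image of a disk, with pole off the boundary, is again a disk. Its exterior map $rw+c$ has only finitely many coefficients, and it normalises to $\Psi(w)=w$, the vertex $\avec=0$. Your lemniscate remark is also unsupported as stated.

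The rotated ellipse does work. For $t>0$ and $e^{2i\phi}\notin\R$, the domain with normalised map $w+te^{2i\phi}w^{-1}$ is not symmetric about the real axis, whereas every $D_{\avec}$ is. This exclusion is only an effect of fixed coordinates, however: rotation preserves the spectrum, and for small $t$ this ellipse lies in the M\"obius saturation of the $m=1$ class with $F=0$.

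The paper argues differently. It first notes that ellipses, which have only $a_1\neq0$, are excluded from the comparable-coefficient $m=2$ sector. It then uses the lemniscates $D_P,D_Q$ of Theorem~\ref{thm:counter}. These are algebraic of degree fourteen, isospectral, and not extended-M\"obius equivalent. By Theorem~\ref{thm:saturation}, no normalised uniqueness class can contain both, even after saturation. That obstruction is spectral and survives the M\"obius quotient, which your example does not.
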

\begin{proof}
For $|w|=1$, define
\[
 z=\Psi(w),\qquad \zeta=\Psi(1/w)=\bar z.
\]
The rational functions $z(w),\zeta(w)$ are algebraically dependent because $\C(w)$ has transcendence degree one. Hence a nonzero polynomial $A(z,\zeta)$ vanishes on the curve. The invertible substitution $z=x+iy,\zeta=x-iy$ gives a nonzero complex polynomial in $x,y$; at least one of its real and imaginary parts is nonzero and gives a real polynomial vanishing on the boundary. This proves algebraicity. Univalence and the nonvanishing derivative give an analytic Jordan boundary.

Our families with prescribed tails impose additional coefficient relations and a cone restriction. For example, the comparable-coefficient $m=2$ sector excludes ellipses, whose normalised maps have only $a_1$ nonzero. The boundaries in Section~\ref{sec:counter} give a stronger obstruction. Both are algebraic of degree fourteen, with equations $|P(x+iy)|^2-R^2=0$ and $|Q(x+iy)|^2-R^2=0$. Their equal spectra and failure of extended-M\"obius equivalence prevent both domains from belonging to the same normalised uniqueness class, even modulo these transformations.
\end{proof}

The results in \cite{APSTident,APSTrecon} based on finitely many GPTs impose polynomial normalisation, contrast, and nondegeneracy conditions in addition to algebraicity. A prescribed analytic tail need not give an algebraic boundary. Conversely, an algebraic boundary may have an exterior map with infinitely many nonzero Laurent coefficients.

\subsection*{Declaration of AI use}
The authors interacted with and guided ChatGPT~6 whilst developing the proofs and then wrote the paper. The authors take full responsibility for the accuracy and integrity of all content in this paper.

\phantomsection

\end{document}